\documentclass[a4paper,12pt]{article}

\topmargin=-1cm
\oddsidemargin=0.4cm
\evensidemargin=0.4cm
\textwidth=16cm
\textheight=22cm

\title{Linear complementary dual \\quasi-cyclic codes of index 2}
\author{Kanat Abdukhalikov \\	
Department of Mathematical Sciences, \\
UAE University, PO Box 15551, Al Ain, UAE\\
Email: abdukhalik@uaeu.ac.ae \bigskip  \\  
Duy Ho \\
Department of Mathematics and Statistics, \\
UiT The Arctic University of Norway, Tromsø
9037, Norway\\
Email: duyho92@gmail.com \bigskip \\
San Ling \\	
School of Physical and Mathematical Sciences, \\
Nanyang Technological University, Singapore 637371\\ 
 Email: lingsan@ntu.edu.sg\bigskip  \\ 
Gyanendra K. Verma \\	
Department of Mathematical Sciences, \\
UAE University, PO Box 15551, Al Ain, UAE\\
Email:  gkvermaiitdmaths@gmail.com}
\date{ }

\usepackage{amsthm,amsmath,amssymb} 
\DeclareMathOperator{\rank}{rank}
\usepackage{url} 
\usepackage{cases} 
\usepackage{makecell}
\usepackage{enumerate}
\usepackage{tikz-cd}
\usepackage{float}

\begin{document} 

\maketitle

\theoremstyle{plain} 
\newtheorem{lemma}{Lemma}[section] 
\newtheorem{theorem}[lemma]{Theorem}
\newtheorem{corollary}[lemma]{Corollary}
\newtheorem{proposition}[lemma]{Proposition}

\theoremstyle{definition}
\newtheorem{definition}{Definition}[section] 
\newtheorem{remark}{Remark}
\newtheorem{example}{Example}

\newcommand{\eps}{\varepsilon}
\newcommand{\inprod}[1]{\left\langle #1 \right\rangle}
\newcommand{\la}{\lambda} 
\newcommand{\al}{\alpha}
\newcommand{\om}{\omega} 
\newcommand{\gam}{\gamma}
\newcommand{\be}{\beta}
\newcommand{\sig}{\sigma}

\begin{abstract} 
We provide a polynomial approach to investigate linear complementary dual (LCD) quasi-cyclic codes over finite fields. We establish necessary and sufficient conditions for  LCD quasi-cyclic codes of index 2 with respect to the Euclidean, Hermitian, and symplectic inner products.  As a consequence of these characterizations, we derive necessary and sufficient conditions for LCD one-generator quasi-cyclic codes. Furthermore, using these characterizations, we construct some new quasi-cyclic LCD codes over small fields.
\end{abstract}

\textbf{Keywords}:  Linear codes,  Euclidean LCD codes,  Hermitian LCD codes,  symplectic LCD codes, quasi-cyclic codes.

\textbf{Mathematics subject classification}: 94B05, 94B15, 94B60.

\section{Introduction}

The family of quasi-cyclic codes over finite fields is an important class of linear codes that generalizes cyclic codes. The study of quasi-cyclic codes can be traced back to the late 1960s, beginning with the paper by  Townsend and Weldon \cite{townsend1967}, and the works by Karlin \cite{karlin1969, karlin1970}.
In those early days, quasi-cyclic codes were already known to be asymptotically good, see for example \cite{chen1969}.
Many constructions of quasi-cyclic codes contain codes with optimal parameters, as shown in \cite{grassltable} and \cite{gulliver1991}. 

 In the  2000s, Ling and Sol\'e studied  the algebraic structure of quasi-cyclic codes in a series of articles \cite{ling2001,ling2003,ling2005,ling2006}.    In \cite{lally2001}, Lally and Fitzpatrick proved  that every quasi-cyclic code has a generating set of polynomials in the form of a reduced Gr\"obner basis. Based on these structural properties,  more  asymptotic results, minimum distance bounds, and further applications of quasi-cyclic codes were obtained in the literature.  To name a few, we refer to the paper Semenov and Trifonov \cite{semenov2012} on the spectral method for  quasi-cyclic codes, see also by other authors in \cite{luo2024} and \cite{zeh2016}.  
Applications of quasi-cyclic codes in constructing quantum codes have become a very active research topic in recent years, see for example  \cite{ezerman2025}, \cite{galindo2018}, and \cite{guan2024}.  

 Linear codes with complementary duals (LCD codes) were introduced by Massey in \cite{massey}.
In \cite{sendrier1997},  Sendrier proved that LCD codes are asymptotically good and used
them in relation to equivalence testing of linear codes in \cite{sendrier2000}.
 Recently, LCD codes became an attractive research interest as they offer solutions
 to many cryptographic problems, for example against side-channel attacks and fault non-invasive attacks, see \cite{carlet2016}. 
 Several constructions of LCD codes with optimal parameters are known, see for example \cite{abdukhalikov2025, bouyuklieva2021, carlet2018mds, liu2021, sok2018, wu2021}.
 In \cite{carlet2018}, it was shown that  any linear code over $\mathbb{F}_q$ ($q  > 3$) is equivalent to a Euclidean LCD code and any linear code over $\mathbb{F}_{q^2}$ ($q  > 2$) is equivalent to a Hermitian LCD code. 
Analogous results were considered for symplectic self-dual  codes in \cite{li2024b} and vector rank metric codes in \cite{ho2025}.

  In 1994, a characterization  for LCD cyclic codes  in terms of their generator polynomials was provided by Yang and Massey in \cite{yang1994}.  
 For the case of quasi-cyclic codes,  Esmaeili and Yari \cite{esmaeili2009} provided   a sufficient condition
 for quasi-cyclic codes to be Euclidean LCD codes and gave a method for constructing quasi-cyclic Euclidean LCD codes.
 In 2016, G\"uneri, \"Ozkaya  and Sol\'e in  \cite{guneri2016}  characterized Euclidean LCD quasi-cyclic codes using the Chinese Remainder Theorem (CRT) decomposition of codes introduced by Ling and Sol\'e in \cite{ling2001}. 

One-generator quasi-cyclic codes were studied in \cite{quasi1},  \cite{seguin2004} and \cite{seguin1990}.  The general structure of quasi-cyclic codes of index 2 was examined in \cite{quasi2}.
  Recently in \cite{guneri2023} and \cite{guan2023}, characterizations of  LCD one-generator quasi-cyclic codes of index $\ell$  were obtained.

 Based on the previous results of  \cite{quasi2}, \cite{guneri2016}, \cite{lally2001}, and \cite{ling2001}, in this paper we provide a new characterization  for Euclidean, Hermitian and symplectic LCD quasi-cyclic codes of index 2 in terms of  generating sets of polynomials. 
 Our results extend
the existing characterizations obtained in \cite{guneri2023} and \cite{guan2023} for one-generator quasi-cyclic codes of index 2.

 The content of the paper is organized as follows.  In Section 2, we recall preliminary results from linear codes and quasi-cyclic codes. In Section 3, we present a new characterization for Euclidean LCD quasi-cyclic codes of index 2. In Section 4, we consider the special case of Euclidean LCD one-generator quasi-cyclic codes of index 2.   
In Sections 5 and 6 these results were generalized for symplectic and Hermitian   quasi-cyclic codes of index 2. 

 All computations in this paper have been done with the computer algebra system MAGMA \cite{magma}.

\section{Preliminaries}


\subsection{Background on linear and quasi-cyclic codes}
 Let $F=\mathbb{F}_q$ denote the finite field with $q$ elements, where $q$ is a prime power.
 A \textit{linear code} $C$ of length $n$ is a subspace of the vector  space $F^n$.  The elements of $C$ are codewords. The \textit{Euclidean hull} of $C$ is defined as
 \[
\text{Hull}(C):= C \cap C^{\perp_e},
 \]
where $C^{\perp_e}$ denotes the dual of $C$ with respect to the usual Euclidean inner product. We remind the reader that the Euclidean inner product of $\mathbf{x},\mathbf{y} \in F^n$ with  $\mathbf{x} = (x_1, \dots, x_n), \mathbf{y} = (y_1, \dots, y_n)$ is given as
\[
\langle \mathbf{x}, \mathbf{y} \rangle_e = \sum_{i=1}^n x_i y_i.
\]
If $C \cap C^{\perp_e} =\{ \mathbf{0}\}$, then we say that $C$ is a \textit{linear code with complementary dual}. Here, the dual is defined using the Euclidean inner product, and we will abbreviate such a code as \textit{Euclidean LCD}.


 Let $T$ be the standard cyclic shift operator on $F^n$. 
 A linear code is said to be \textit{quasi-cyclic of index $\ell$ (QC)} if it is invariant under $T^\ell$. 
 We assume that $\ell$ divides $n$. If $\ell= 1$, then the QC code is a cyclic code.

Let $R = F[x]/\langle x^m-1\rangle$. We recall that cyclic codes of length $m$ over $F$ can be considered as ideals of $R$.

 Let $n = m\ell$ and let $C$ be a linear quasi-cyclic code of length $m \ell$ and index $\ell$ over $F$. Let
 \[
\mathbf{c} =(c_{0,0},  c_{0,1}, \dots,  c_{0,\ell-1},c_{1,0},  c_{1,1}, \dots  c_{1,\ell-1}, \dots, c_{m-1,0},  c_{m-1,1}, \dots,  c_{m-1,\ell-1})
 \]
denote a codeword in $C$. Define a map $\varphi: F^{m\ell} \rightarrow R^\ell$ by
\[
\varphi(\mathbf{c}) = (c_0(x), c_1(x), \dots, c_{\ell-1}(x)) \in R^\ell,
\]
where
\[
c_j(x) = c_{0,j}+c_{1,j}x+c_{2,j}x^2+ \dots + c_{m-1,j}x^{m-1} \in R.
\]
The following lemma is well-known.
 \begin{lemma}[\cite{lally2001,ling2001}] The map  $\varphi$ induces a one-to-one correspondence between quasi-cyclic
 codes over $F$ of index $\ell$ and length $m \ell$ and linear codes over $R$ of length $\ell$.
\end{lemma}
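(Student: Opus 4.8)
The plan is to exhibit $\varphi$ as an $F$-linear isomorphism $F^{m\ell}\to R^\ell$ that intertwines the shift operator $T^\ell$ with coordinatewise multiplication by $x$, and then to transport the property ``invariant under $T^\ell$'' across this isomorphism to the property ``closed under multiplication by $x$'', which is exactly what characterizes $R$-submodules of $R^\ell$.

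First I would check that $\varphi$ is an $F$-linear bijection. Linearity is immediate from the defining formula for $c_j(x)$. For injectivity, note that $\varphi(\mathbf{c})=\mathbf{0}$ forces each $c_j(x)=0$ in $R$; since each $c_j(x)$ is represented by a polynomial of degree $<m$, all coefficients $c_{i,j}$ vanish, so $\mathbf{c}=\mathbf{0}$. Surjectivity then follows either by comparing $F$-dimensions, as $\dim_F F^{m\ell}=m\ell=\dim_F R^\ell$, or by writing down the evident preimage of an arbitrary $\ell$-tuple of polynomials of degree $<m$. Consequently $\varphi$ restricts to a bijection between the $F$-subspaces of $F^{m\ell}$ and the $F$-subspaces of $R^\ell$.

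Next I would verify the key intertwining identity $\varphi(T^\ell\mathbf{c})=x\cdot\varphi(\mathbf{c})$, where on the right $x$ acts diagonally on $R^\ell$. Writing $\mathbf{c}$ as $m$ consecutive blocks of length $\ell$, the operator $T^\ell$ cyclically permutes these blocks, moving the block in position $i$ to position $i+1$ modulo $m$; hence the $j$-th component polynomial of $T^\ell\mathbf{c}$ is $\sum_{i=0}^{m-1}c_{i-1,j}\,x^i$ with indices read modulo $m$, and reindexing together with the relation $x^m=1$ turns this into $x\sum_{i}c_{i,j}x^i = x\,c_j(x)$ in $R$. This identity is the only computation of substance, and it is the step I expect to require the most care, since it hinges on correctly tracking the block indices modulo $m$ and on the folding produced by $x^m=1$.

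Finally I would assemble the correspondence. A subspace $C\subseteq F^{m\ell}$ is quasi-cyclic of index $\ell$ precisely when $T^\ell(C)\subseteq C$, which by the intertwining identity holds if and only if $x\cdot\varphi(C)\subseteq\varphi(C)$. Since $R$ is generated as an $F$-algebra by the class of $x$, an $F$-subspace of $R^\ell$ is closed under multiplication by $x$ exactly when it is an $R$-submodule of $R^\ell$, that is, a linear code over $R$ of length $\ell$. Combined with the bijectivity of $\varphi$ on subspaces, this gives the desired one-to-one correspondence; one may also observe that $\varphi$ respects the $R$-module structure, so the correspondence is in fact an isomorphism at the level of $R$-modules, not merely of sets.
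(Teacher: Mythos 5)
Your proposal is correct: the paper states this lemma without proof, citing \cite{lally2001,ling2001}, and your argument — that $\varphi$ is an $F$-linear bijection intertwining $T^\ell$ with coordinatewise multiplication by $x$, so that $T^\ell$-invariant subspaces correspond exactly to $R$-submodules of $R^\ell$ — is precisely the standard proof underlying those references. No gaps; the block-shift computation $\varphi(T^\ell\mathbf{c})=x\cdot\varphi(\mathbf{c})$ and the observation that closure under multiplication by $x$ together with $F$-linearity yields an $R$-submodule are exactly the points that need checking.
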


\subsection{Decomposition of quasi-cyclic codes}\label{subsection2.2}

 
Let $f(x) = a_0+a_1x+a_2x+ \dots + a_k x^k$ be a polynomial of degree $k$. The reciprocal polynomial of $f(x)$ is the polynomial
\[
f^*(x)=x^{\deg{f(x)}} f(x^{-1}) = a_k+a_{k-1}x+a_{k-2}x+ \dots + a_0 x^k.
\] 
A polynomial $f(x)$ is said to be self-reciprocal if $f(x)$ and $f^*(x)$ are associates (i.e., $f^*(x)=\alpha f(x)$ for some $\alpha\in F$). 
Let $f(x) = a_0+a_1x+a_2x+ \dots + a_{m} x^{m} \in \mathbb{F}_q[x]$, where $m$ is as before, that is $R= F[x]/\langle x^m-1 \rangle$. 
The   \textit{transpose polynomial} of $f(x)\in \mathbb{F}_q[x]$ is the polynomial
\[
\bar{f} (x) =x^{m} f(x^{-1}) = a_m+a_{m-1}x+a_{m-2}x+ \dots + a_0 x^m.
\]
Then $\bar{f}(x)=x^{m-\deg f(x)}f^*(x)$. Assume that $\gcd(q,m)=1$. With this assumption, we have the following factorization into distinct irreducible polynomials in $\mathbb{F}_q[x]$:

$$
x^m-1= \delta \prod_{i=1}^sf_i(x) \prod_{j=1}^p h_j(x)h^*_j(x),
$$
where $\delta$ is nonzero in $\mathbb{F}_q$, $f_i(x)$ is self-reciprocal for all $1\le i \le s$,  $h_j(x)$   and $h_j^*(x)$ are reciprocal pairs for all $1 \le j \le p$.

For each $i$ and $j$, let $F_i=F[x]/  (f_i )$, 
$H_j'= F[x]/ (h_j )$,   and
$H_j''= F[x]/ (h_j^* )$. 
Let $\xi$ be a primitive $m^{\text{th}}$ root of unity over $F$.
Let $\xi^{u_i}$ and  $\xi^{v_j}$ be roots of $f_i(x)$ and $h_j(x)$, respectively.
Then we also have $h_j^*(\xi^{-v_j})=0$, and
$F_i \cong F(\xi^{u_i})$, $H_j' \cong F(\xi^{v_j})$, and $H_j'' \cong F(\xi^{-v_j})$, see \cite[p. 136]{guneri2021}.

The map $\bar\ : f(x) \mapsto \bar{f}(x)$ can be naturally extended to the following isomorphisms:   
\begin{equation}
	\begin{split}
		\bar\  :\  \mathbb{F}_{q}[x]/ (f_i(x) )&\to \mathbb{F}_{q}[x]/ (f_i (x)) ,\\
		\bar\  :\  \mathbb{F}_{q}[x]/ (h_j(x) )&\to \mathbb{F}_{q}[x]/ (h_j^*(x) ) .\\
	\end{split}
\end{equation} 
Therefore, the map $\bar\ $ is an isomorphism from $H_j'=\mathbb{F}_q[x]/ ( h_j(x))$ to 
$H_j''=\mathbb{F}_q[x]/ ( h_j^*(x) )$.

By the Chinese Remainder Theorem (CRT), $R$ can be decomposed as
 $$
R \cong \left(\bigoplus_{i=1}^{s} F_i \right) 
\oplus
\left(\bigoplus_{j=1}^{p} \left( H_j'
\oplus  H_j'' \right) \right).
 $$ 
The isomorphism between $R$ and its CRT decomposition is given by 
\[
a(x) \mapsto \left(a(\xi^{u_1}), \dots, a(\xi^{u_s}),a(\xi^{v_1}),a(\xi^{-v_1}), \dots, a(\xi^{v_p}),a(\xi^{-v_p})\right). 
\]

As $\xi$ is an $m^{\mathrm{th}}$ root of unity, we have $\xi^m=1$. Thus  $a(\xi^{-1})=\bar{a}(\xi)$ for all polynomials $a(x)$ of degree at most $m$. Hence the above isomorphism can be written as 
\[
a(x) \mapsto \left(a(\xi^{u_1}), \dots, a(\xi^{u_s}),a(\xi^{v_1}),\bar{a}(\xi^{v_1}), \dots, a(\xi^{v_p}),\bar{a}(\xi^{v_p})\right)). 
\]
This isomorphism extends naturally to  $R^\ell$, which implies that
 $$
R^\ell \cong \left(\bigoplus_{i=1}^{s} F_i^\ell \right) 
\oplus
\left(\bigoplus_{j=1}^{p} \left( (H_j')^\ell
\oplus  (H_j'')^\ell  \right) \right).
 $$ 
Then, a QC code $C$ of index $\ell$ can be  decomposed as 
\begin{equation} \label{crtC}
C \cong \left(\bigoplus_{i=1}^{s} C_i \right) 
\oplus
\left(\bigoplus_{j=1}^{p} \left(  C_j' 
\oplus   C_j''   \right) \right), 
\end{equation} 
where each component code is a linear code of length $\ell$ over the base field $(F_i, H_j'$ or $H_j'')$ it is defined.  The component codes  $C_i, C_j', C_j''$ are called the \textit{constituents} of $C$.

The constituents can be described in terms of the generators of $C$. Namely, if $ C$ is an $r$-generator QC code with generators
\[
 \{ (a_{1,1}(x),\dots ,a_{1,\ell}(x)) ,\dots, (a_{r,1}(x),\dots,a_{r,\ell} (x) ) \} \subset R^\ell,
\]
then 
\begin{align*}
C_i &= \text{Span}_{F_i} \{ (a_{k,1}(\xi^{u_i}),\dots ,a_{k,\ell}(\xi^{u_i})) : 1 \le k \le r \}, \text{ for } 1 \le i \le s, \\
C_j' &= \text{Span}_{H_j'} \{ (a_{k,1}(\xi^{v_j}),\dots ,a_{k,\ell}(\xi^{v_j})) : 1 \le k \le r \}, \text{ for } 1 \le j \le p, \\
C_j'' &= \text{Span}_{H_j''} \{ (\bar{a}_{k,1}(\xi^{v_j}),\dots ,\bar{a}_{k,\ell}(\xi^{v_j})) : 1 \le k \le r \}, \text{ for } 1 \le j \le p.
\end{align*}

With a QC code $C$ and its CRT decomposition given in \eqref{crtC}, the Euclidean dual of $C$ is of the form 
\begin{equation} \label{crtCdual}
C^{\perp_e} = \left(\bigoplus_{i=1}^{s} C_i^{\perp_h} \right) 
\oplus
\left(\bigoplus_{j=1}^{p} \left(  C_j''^{\perp_e}
\oplus   C_j'^{\perp_e}   \right) \right).
\end{equation} 
Here, ${\perp_h}$ denotes the Hermitian dual on $F_i^\ell$ (for each $1\le i \le s$) with respect to  the Hermitian inner product
\begin{equation}\label{hinnerproduct}
	\langle \textbf{c},\textbf{d} \rangle_h= \sum_{k=1}^\ell c_k(\xi^{u_i})\bar{d}_k(\xi^{u_i}),
\end{equation}
where
$
\textbf{c} = (c_1(\xi^{u_i}), \dots,  c_\ell(\xi^{u_i})), \textbf{d} = (d_1(\xi^{u_i}), \dots,  d_\ell(\xi^{u_i})) \in F_i^\ell.
$ 
This is the inner product induced by $x\mapsto x^{-1}$, not the usual Hermitian inner product, see \cite[p. 2693]{ling2005}.  For each $1 \le j \le p$, the vector space $(H_j')^\ell \cong (H_j'')^\ell$ is equipped with the usual Euclidean inner product and ${\perp_e}$ denotes the usual Euclidean dual.

\begin{remark} \label{remarkH} 
Since $f_i(x)$ is self-reciprocal, the cardinality of $F_i$, say $q_i$, is an even power of $q$ for all $1 \le i \le s$ with two exceptions. 
One of these exceptions, for all $m$ and $q$, is the field coming from the irreducible factor $x-1$ of $x^m-1.$ 
When $q$ is odd and $m$ is even, $x+1$ is another self-reciprocal irreducible factor of $x^m-1$. In these cases, $q_i=q$. Except for two cases,
the Hermitian inner product $\langle \cdot,\cdot \rangle_h$ is equivalent to the usual Hermitian inner product, see also \cite[p.72]{guneri2016}  and \cite[p.136]{guneri2021}.  For the two exceptions, in which case the corresponding field $F_i$ is $F$, we equip $F_i^\ell$ with the  usual Euclidean inner product. Then the previous formula for $\langle \cdot,\cdot \rangle_h$ is still true, since $\xi^{u_i}=\xi^{-u_i}=\pm1$.


\end{remark}

We have the following characterization of Euclidean LCD QC codes from \cite[Theorem 7.3.6]{guneri2021}.

\begin{theorem} \label{qclcd} Let $C$ be a $q$-ary QC code of length $m\ell$ and index $\ell$ with a CRT decomposition as in \eqref{crtC}.
Then $C$ is Euclidean LCD if and only if 
$C_i \cap C_i^{\perp_h} = \{\mathbf{0}\}$ for all $1 \le i \le s$, and $C_j' \cap C_j''^{\perp_e} = \{\mathbf{0}\}$, $C_j'' \cap C_j'^{\perp_e} = \{\mathbf{0}\}$ 
for all $1\le j \le p$. 
 \end{theorem}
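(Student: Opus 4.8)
The plan is to reduce the global condition $C\cap C^{\perp_e}=\{\mathbf{0}\}$ to its CRT components and match them up using \eqref{crtC} and \eqref{crtCdual}. By definition, $C$ is Euclidean LCD precisely when $\mathrm{Hull}(C)=C\cap C^{\perp_e}=\{\mathbf{0}\}$, so it suffices to compute this intersection inside $R^\ell$ and to show that it splits as the external direct sum of the intersections of the corresponding constituents.

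First I would observe that the Euclidean dual of a QC code of index $\ell$ and length $m\ell$ is again QC of index $\ell$ and length $m\ell$: if $C$ is invariant under $T^\ell$, then $C^{\perp_e}$ is invariant under $T^{-\ell}$, which generates the same group of shifts as $T^\ell$. Consequently, under the correspondence induced by $\varphi$, both $C$ and $C^{\perp_e}$ are $R$-submodules of $R^\ell$. Since $\gcd(q,m)=1$, the ring $R$ is semisimple, and the CRT isomorphism of Section~\ref{subsection2.2} is a decomposition of $R^\ell$ as a direct sum of $R$-submodules, implemented by the coordinatewise action on $R^\ell$ of the primitive idempotents of $R$ attached to the factors $f_i$, $h_j$, $h_j^*$. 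The elementary fact I would then use is that, for any $R$-submodule $D\subseteq R^\ell$ and any such idempotent $e$, we have $e\mathbf{c}\in D$ whenever $\mathbf{c}\in D$; hence $\mathbf{c}\in D$ if and only if $e\mathbf{c}$ lies in the $e$-component of $D$ for every idempotent $e$. Applied to $C$ this recovers \eqref{crtC} with the constituents $C_i$, $C_j'$, $C_j''$ appearing as the idempotent-images, and applied to $C^{\perp_e}$ it recovers \eqref{crtCdual}.

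Combining the two, a vector $\mathbf{c}\in R^\ell$ lies in $C\cap C^{\perp_e}$ if and only if, for every idempotent $e$, its $e$-component lies simultaneously in the $e$-component of $C$ and in the $e$-component of $C^{\perp_e}$. Reading these components off from \eqref{crtC} and \eqref{crtCdual}, and keeping track of the swap $C_j'\leftrightarrow C_j''$ between the two factors of a reciprocal pair that is already built into \eqref{crtCdual}, this amounts to the $F_i$-component lying in $C_i\cap C_i^{\perp_h}$ for each $i$, and the two $H_j$-components lying in $C_j'\cap C_j''^{\perp_e}$ and $C_j''\cap C_j'^{\perp_e}$ respectively for each $j$. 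Therefore
\[
C\cap C^{\perp_e}\;\cong\;\Bigl(\bigoplus_{i=1}^{s}\bigl(C_i\cap C_i^{\perp_h}\bigr)\Bigr)\oplus\Bigl(\bigoplus_{j=1}^{p}\bigl((C_j'\cap C_j''^{\perp_e})\oplus(C_j''\cap C_j'^{\perp_e})\bigr)\Bigr),
\]
and a direct sum vanishes if and only if each summand vanishes, which is exactly the stated condition.

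The step I expect to carry the real content is \eqref{crtCdual} itself: one must check that the Euclidean inner product on $R^\ell$ transports, component by component, to the Hermitian form \eqref{hinnerproduct} on each $F_i^\ell$ and to the ordinary Euclidean form on each $(H_j')^\ell\cong(H_j'')^\ell$ with the roles of the two factors of a reciprocal pair interchanged; this is where the transpose map $f\mapsto\bar f$, the identity $a(\xi^{-1})=\bar a(\xi)$, and the fact that $\overline{h_j}$ is an associate of $h_j^*$ enter. Since \eqref{crtCdual} is available as a prior result, the remainder is the purely formal observation that intersection distributes over the CRT direct sum, so the only real care needed is the bookkeeping of which constituent of $C^{\perp_e}$ is paired against which constituent of $C$.
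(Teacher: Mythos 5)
Your argument is correct and is essentially the standard one behind this statement, which the paper itself does not reprove but quotes from \cite[Theorem 7.3.6]{guneri2021}: since $\gcd(q,m)=1$ makes $R$ semisimple, both $C$ and $C^{\perp_e}$ split along the primitive idempotents, so the hull decomposes componentwise via \eqref{crtC} and \eqref{crtCdual}, with the swap of the two constituents in each reciprocal pair accounting for the conditions $C_j'\cap C_j''^{\perp_e}=\{\mathbf{0}\}$ and $C_j''\cap C_j'^{\perp_e}=\{\mathbf{0}\}$. Your identification of \eqref{crtCdual} as the step carrying the real content is also accurate, and since that formula is available as a prior result, your reduction is complete.
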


\subsection{Quasi-cyclic codes of index 2}

In \cite{lally2001}, Lally and Fitzpatrick showed that a quasi-cyclic code of index $\ell$ can be generated by the rows of an upper triangular $\ell \times \ell$  polynomial matrix satisfying certain conditions. For the case $\ell=2$, this result was improved in    \cite[Theorem 3.1]{quasi2} to the  following theorem.
\begin{theorem} \label{qc2}
Let $C$ be a quasi-cyclic code of length $2m$ and index $2.$ 
Then $C$ is generated by two elements $(g_{11}(x), g_{12}(x))$ and $(0, g_{22}(x))$ such that they satisfy the following conditions: 
\begin{gather}  
g_{11}(x) \mid (x^m-1) \text{ and } g_{22}(x) \mid(x^m-1), \notag \\
\deg g_{12}(x) < \deg g_{22}(x), \tag{$\ast$}\\
g_{11}(x)g_{22}(x) \mid (x^m-1)g_{12}(x). \notag
\end{gather}
 Moreover, in this case $\dim C = 2m-\deg g_{11}(x)-\deg g_{22}(x)$.
\end{theorem}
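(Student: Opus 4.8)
The plan is to pass to the $R$-module picture, where $R = F[x]/\langle x^m-1\rangle$. Using the one-to-one correspondence between index-$2$ quasi-cyclic codes and $R$-submodules of $R^2$ recalled above, identify $C$ with the $R$-submodule $M = \varphi(C)$ of $R^2$; I will also use that $R$ is a principal ideal ring in which every ideal has a unique monic generator dividing $x^m-1$, and that a cyclic code of length $m$ with generator polynomial of degree $d$ has dimension $m-d$. To $M$ I would attach its two natural ideals: let $p_1\colon R^2\to R$ be projection onto the first coordinate, set $I_1 = p_1(M) = \langle g_{11}\rangle$ with $g_{11}\mid x^m-1$, and set $I_2 = \{\,b\in R : (0,b)\in M\,\} = \langle g_{22}\rangle$ with $g_{22}\mid x^m-1$ (in the degenerate cases $I_1 = 0$ or $I_2 = 0$ the relevant generator is taken to be $x^m-1$). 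Picking any $(g_{11},g_{12}')\in M$ with first coordinate $g_{11}$ and using that $(0,g_{22})\in M$, one subtracts a suitable $R$-multiple of $(0,g_{22})$ — obtained by dividing the degree $<m$ representative of $g_{12}'$ by $g_{22}$ — to replace $(g_{11},g_{12}')$ by some $(g_{11},g_{12})\in M$ with $\deg g_{12} < \deg g_{22}$. This gives the first two conditions in $(\ast)$.

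Next I would verify that $(g_{11},g_{12})$ and $(0,g_{22})$ generate $M$ as an $R$-module: for $(a,b)\in M$ we have $a\in I_1$, so $a = h\,g_{11}$ for some $h\in R$, whence $(a,b) - h(g_{11},g_{12}) = (0,\,b - h g_{12})\in M$ forces $b - h g_{12}\in I_2 = \langle g_{22}\rangle$, and $(a,b)$ is an $R$-combination of the two claimed generators. For the divisibility condition I would multiply $(g_{11},g_{12})\in M$ by the class of $(x^m-1)/g_{11}$ in $R$; since $\tfrac{x^m-1}{g_{11}}\cdot g_{11}\equiv 0 \pmod{x^m-1}$, this produces $(0,w)\in M$ with $w\equiv \tfrac{x^m-1}{g_{11}}\,g_{12}\pmod{x^m-1}$, so $w\in I_2 = \langle g_{22}\rangle$. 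Because $g_{22}\mid x^m - 1$, membership in $\langle g_{22}\rangle$ inside $R$ translates to plain divisibility by $g_{22}$ in $F[x]$: writing $\tfrac{x^m-1}{g_{11}}\,g_{12} = g_{22}t + s(x^m-1)$ and absorbing $s(x^m-1)$ via $g_{22}\mid x^m-1$ yields $g_{22}\mid \tfrac{x^m-1}{g_{11}}\,g_{12}$, i.e.\ $g_{11}g_{22}\mid (x^m-1)g_{12}$.

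Finally, the restriction of $p_1$ to $M$ is a surjection onto $I_1$ with kernel $\{0\}\times I_2$, giving a short exact sequence $0\to I_2\to M\to I_1\to 0$ of $F$-vector spaces; hence $\dim C = \dim_F I_1 + \dim_F I_2 = (m-\deg g_{11}) + (m-\deg g_{22})$, which is the claimed formula. (One could alternatively specialize the Lally--Fitzpatrick reduced Gr\"obner basis of \cite{lally2001} to $\ell = 2$, but the module-theoretic route above is shorter.) I expect the only genuinely delicate point to be the translation in the divisibility step between membership in the ideal $\langle g_{22}\rangle$ of $R$ and honest divisibility by $g_{22}$ in $F[x]$, which hinges on $g_{22}\mid x^m-1$ and on carefully tracking the reduction modulo $x^m-1$; the degenerate cases in which $g_{11}$ or $g_{22}$ equals $x^m-1$ (a vanishing constituent ideal) are the other spot requiring a moment's care, but they are routine.
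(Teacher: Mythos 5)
Your argument is correct. Note, however, that the paper does not actually prove this statement: it is imported verbatim from \cite[Theorem 3.1]{quasi2} (as an index-$2$ refinement of the Lally--Fitzpatrick upper-triangular generating set), so there is no in-paper proof to match against. Your module-theoretic route is a sound self-contained derivation of it: identifying $C$ with an $R$-submodule $M\subseteq R^2$, taking $I_1=p_1(M)=\langle g_{11}\rangle$ and $I_2=\{b:(0,b)\in M\}=\langle g_{22}\rangle$, reducing the second coordinate modulo $g_{22}$, and checking generation, the divisibility $g_{11}g_{22}\mid (x^m-1)g_{12}$ via $\frac{x^m-1}{g_{11}}(g_{11},g_{12})\in\{0\}\times I_2$, and the dimension via the exact sequence $0\to I_2\to M\to I_1\to 0$ are all handled correctly; the delicate points you flag (translating membership in $\langle g_{22}\rangle\subseteq R$ into divisibility in $F[x]$, which works precisely because $g_{22}\mid x^m-1$, and the degenerate cases $I_1=0$ or $I_2=0$ with generator $x^m-1$) are dealt with adequately, and no assumption $\gcd(q,m)=1$ is needed, consistent with the statement. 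This is essentially the same structural argument that underlies the cited source (a specialization of the reduced Gr\"obner basis description of \cite{lally2001} to $\ell=2$), just phrased through the projection/kernel ideals rather than through an explicit basis reduction; one small remark is that your exact-sequence computation, combined with the third condition in $(\ast)$, in fact shows the dimension formula holds for \emph{any} pair of generators satisfying $(\ast)$, not only the ones you construct, which is a slightly stronger reading of the ``moreover'' clause.
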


\begin{remark} \label{remarkgcd}   If $\gcd(q,m)=1$, then the condition 
$$g_{11}(x)g_{22}(x) \mid (x^m-1)g_{12}(x)$$ 
in Theorem \ref{qc2} is equivalent to the condition $\gcd(g_{11}(x),g_{22}(x)) \mid g_{12}(x)$, since $x^m-1$ has no multiple roots, see  \cite[Remark 3.1]{quasi2}.
\end{remark}


\begin{lemma} \label{lemma1gen} 
Let $\gcd(q,m)=1$ and let $C$ be a quasi-cyclic code generated by one element $(g_{11}(x),g_{12}(x))$, where $g_{11}(x) \mid (x^m-1)$.  
Let $g(x)=\gcd(g_{11}(x),g_{12}(x))$, $g_{11}(x)=g(x)g_{11}'(x)$, $g_{12}(x)=g(x)g_{12}'(x)$. 
Let 
\[
g_{22}(x) = \dfrac{x^m-1}{g_{11}'(x)}.
\]
Then the following statements are true.

1. The code $C$ is generated by two elements $(g_{11}(x),g_{12}(x) \mod{g_{22}(x)})$ and $(0,g_{22}(x))$ satisfying Conditions  $(\ast)$.

2. $\gcd(g_{11}(x),g_{22}(x))=g(x)$.
\end{lemma}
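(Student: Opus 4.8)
The plan is to verify the two claims of Lemma~\ref{lemma1gen} in turn, leaning on Theorem~\ref{qc2} and Remark~\ref{remarkgcd}. For part~2, I would start by computing $\gcd(g_{11}(x),g_{22}(x))$ directly from the definitions. Write $g_{11}(x)=g(x)g_{11}'(x)$, and note that by construction $g_{22}(x)=(x^m-1)/g_{11}'(x)$ is a genuine polynomial, since $g_{11}'(x)\mid g_{11}(x)\mid (x^m-1)$. Because $\gcd(q,m)=1$, the polynomial $x^m-1$ is squarefree, so its irreducible factors are distinct; hence $g_{11}'(x)$ and $g_{22}(x)=(x^m-1)/g_{11}'(x)$ are coprime, sharing no irreducible factor. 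Therefore $\gcd(g_{11}(x),g_{22}(x))=\gcd(g(x)g_{11}'(x),g_{22}(x))=\gcd(g(x),g_{22}(x))$. It remains to show this last gcd equals $g(x)$, i.e. that $g(x)\mid g_{22}(x)$. Since $g(x)\mid g_{11}(x)\mid (x^m-1)$ and $\gcd(g(x),g_{11}'(x))=1$ (any common factor of $g$ and $g_{11}'$ would make $g_{11}=gg_{11}'$ divisible by the square of that factor, contradicting squarefreeness of $x^m-1$), we get $g(x)\mid (x^m-1)/g_{11}'(x)=g_{22}(x)$, as required. Thus $\gcd(g_{11}(x),g_{22}(x))=g(x)$.

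For part~1, I would first observe that replacing the single generator $(g_{11}(x),g_{12}(x))$ by the pair $\{(g_{11}(x),g_{12}(x)\bmod g_{22}(x)),\,(0,g_{22}(x))\}$ does not change the code. Indeed, $(0,g_{22}(x))$ lies in $C$: since $g(x)=\gcd(g_{11},g_{12})\mid g_{12}$, write $g_{12}=g g_{12}'$ and use that $g_{22}(x)g_{11}(x)=\frac{x^m-1}{g_{11}'(x)}\cdot g(x)g_{11}'(x)=(x^m-1)g(x)$ is divisible by $(x^m-1)g$, and $g\mid g_{12}$, so $g_{22}(x)\cdot g_{11}(x)$ is a multiple of $(x^m-1)$ times something; more cleanly, in $R^2$ we can multiply the generator by a suitable polynomial to clear the first coordinate: there is $u(x)$ with $u(x)g_{11}(x)\equiv 0$ and $u(x)g_{12}(x)\equiv$ a unit multiple of $g_{22}(x)$ modulo $x^m-1$. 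The standard argument (as in the proof of Theorem~\ref{qc2} and \cite{quasi2}) is that $\left(\frac{x^m-1}{g_{11}(x)}\right)$-type multipliers produce exactly $(0,g_{22}(x))$ up to scalars; I would spell this out using $g(x)\,\frac{x^m-1}{g_{11}(x)} = g_{22}(x)\cdot\frac{g(x)}{g_{11}'(x)}\cdot\frac{g_{11}'(x)}{\,}$ — actually the clean identity is $\frac{x^m-1}{g_{11}(x)}\cdot g_{12}(x)=\frac{x^m-1}{g(x)g_{11}'(x)}\cdot g(x)g_{12}'(x)=\frac{(x^m-1)g_{12}'(x)}{g_{11}'(x)}$, and since $g_{11}'\mid x^m-1$ this equals $g_{22}(x)g_{12}'(x)$ with $\gcd(g_{12}'(x),g_{11}'(x))=1$, so a further multiplication by the inverse of $g_{12}'(x)$ modulo $g_{22}(x)$ (combined with adding multiples of $(0,g_{22})$) yields $(0,g_{22}(x))$ in $C$. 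Conversely $(g_{11},g_{12}\bmod g_{22})=(g_{11},g_{12})-q(x)(0,g_{22})$ for the quotient $q(x)$, so both new generators lie in $C$ and they clearly generate the original generator back. Then $\deg(g_{12}\bmod g_{22})<\deg g_{22}$ is automatic, $g_{11}\mid x^m-1$ is given, $g_{22}\mid x^m-1$ holds because $g_{11}'\mid x^m-1$, and the divisibility condition $g_{11}g_{22}\mid (x^m-1)g_{12}$ follows from part~2 via Remark~\ref{remarkgcd}: $\gcd(g_{11},g_{22})=g(x)\mid g_{12}(x)$, hence also $g(x)\mid (g_{12}\bmod g_{22})$ after noting $g\mid g_{22}$. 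So Conditions~$(\ast)$ hold.

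I expect the main obstacle to be the bookkeeping in part~1 showing that $(0,g_{22}(x))$ genuinely belongs to $C$ — that is, exhibiting the explicit multiplier in $R$ that annihilates the first coordinate of the generator while producing $g_{22}(x)$ (up to adjustments by $(0,g_{22}(x))$ itself and unit scalars) in the second. The coprimality facts $\gcd(g_{11}'(x),g_{22}(x))=1$ and $\gcd(g_{12}'(x),g_{11}'(x))=1$, both consequences of $x^m-1$ being squarefree, are what make the relevant inverses modulo $g_{22}(x)$ exist, so I would isolate these as preliminary observations before assembling the multiplier. Once $(0,g_{22}(x))\in C$ is established, everything else — the reduction of $g_{12}$ modulo $g_{22}$, the degree bound, and the verification of Conditions~$(\ast)$ using part~2 and Remark~\ref{remarkgcd} — is routine. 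Part~2 itself is short and I would present it first (or even fold it in before part~1, since part~1's verification of $(\ast)$ cites it), as it only requires the squarefreeness of $x^m-1$ under the hypothesis $\gcd(q,m)=1$.
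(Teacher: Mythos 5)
Your part~2 and your verification of Conditions $(\ast)$ are fine (indeed more detailed than the paper, which simply asserts $\gcd(g_{11},g_{22})=g$), and your overall route for part~1 is the paper's: multiply the generator by $\frac{x^m-1}{g_{11}(x)}$ to obtain $\left(0,\,g_{22}(x)g_{12}'(x)\right)\in C$, then argue that this already yields $(0,g_{22}(x))\in C$, after which reducing $g_{12}$ modulo $g_{22}$ is routine. However, the mechanism you give for that last, crucial step is flawed. You propose to multiply by ``the inverse of $g_{12}'(x)$ modulo $g_{22}(x)$'', but $g_{12}'$ need not be invertible modulo $g_{22}$: $g_{12}'$ is only guaranteed coprime to $g_{11}'$ (by the definition of $g$ as $\gcd(g_{11},g_{12})$ --- not, as you state, by squarefreeness), whereas any irreducible factor of $x^m-1$ dividing $g_{12}$ but not $g_{11}$ divides both $g_{12}'$ and $g_{22}$. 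For instance, over $\mathbb{F}_2$ with $m=7$, $g_{11}=(x+1)(x^3+x+1)$ and $g_{12}=(x+1)(x^3+x^2+1)$ give $g_{12}'=x^3+x^2+1$ dividing $g_{22}=(x+1)(x^3+x^2+1)$. Moreover, your fallback of ``adding multiples of $(0,g_{22})$'' is circular, since $(0,g_{22})\in C$ is exactly what is being proved.

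The repair is small and is precisely what the paper's one-line argument encodes: work modulo $x^m-1$ and invert $g_{12}'$ modulo $g_{11}'$ instead. Since $\gcd(g_{12}',g_{11}')=1$, choose $a(x)$ with $a\,g_{12}'\equiv 1 \pmod{g_{11}'}$; then $a\,g_{22}\,g_{12}' \equiv g_{22} \pmod{g_{11}'g_{22}=x^m-1}$, so $(0,g_{22})= a(x)\cdot\left(0,g_{22}g_{12}'\right)\in C$. Equivalently (the paper's phrasing), the cyclic code $\left\langle g_{22}g_{12}'\right\rangle$ in $R$ equals $\left\langle g_{22}\right\rangle$ because, $x^m-1$ being squarefree and $g_{12}'$ being coprime to $g_{11}'$, the two generators have the same zeros among the $m$-th roots of unity. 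With that substitution your argument matches the paper's proof; the rest of your write-up (degree bound after reduction, $g\mid (g_{12}\bmod g_{22})$ via $g\mid g_{22}$, and Remark~\ref{remarkgcd}) goes through as stated.
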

\begin{proof}  Let $C$ be   generated by one element $(g_{11}(x),g_{12}(x))$. Then
\[
\dfrac{x^m-1}{g_{11}(x)} (g_{11}(x),g_{12}(x)) =  \left(0, \dfrac{x^m-1}{g_{11}'(x)}  g_{12}'(x)\right).
\]
The zeros of the cyclic code $\left\langle  \dfrac{x^m-1}{g_{11}'(x)}  g_{12}'(x) \right\rangle$ are the same as the zeros of the polynomial $\dfrac{x^m-1}{g_{11}'(x)}$, so 
\[
\left\langle   \dfrac{x^m-1}{g_{11}'(x)}  g_{12}'(x) \right\rangle = \left\langle  \dfrac{x^m-1}{g_{11}'(x)}\right\rangle. 
\]
Thus $C$ is generated by the elements $(g_{11}(x),g_{12}(x))$ and $(0,g_{22}(x))$. Moreover, 
\[
\gcd(g_{11}(x),g_{22}(x))=g(x).
\]
Finally,  we can reduce $g_{12}(x)$  modulo $g_{22}(x)$ to the reduced Gr\"obner basis form, see \cite{lally2001}.
\end{proof}

\begin{remark} In \cite[Lemma 1]{seguin2004}, S\'eguin showed that if  $C$ is a quasi-cyclic code generated by one element $(g_{11}(x),g_{12}(x))$ with $g_{11}(x) \mid x^m-1$ and $g(x)=\gcd(g_{11}(x),g_{12}(x))$, then  $\dim C = m-\deg g(x)$. With the choice of $g_{22}(x)$ described in Lemma \ref{lemma1gen}, we see that the dimension of $C$ in Theorem \ref{qclcd} is consistent with the result by S\'eguin. 
\end{remark}

\section{Euclidean LCD quasi-cyclic codes of index 2}
From now on, we will assume that $\gcd(q,m)=1$. Let $C$ be a quasi-cyclic code   of index $2$. Then by Theorem \ref{qc2}, $C$ is generated by two elements $(g_{11}(x), g_{12}(x))$ and $(0, g_{22}(x))$ satisfying Conditions ($\ast$).  Since $\gcd(q,m)=1$, the code $C$ can be decomposed using the Chinese Remainder Theorem (CRT) as described in Subsection 2.2. In this setting, each constituent of $C$ is generated by the rows of a $2 \times 2$ matrix over its field of definition. Explicitly, $C_i, C_j'$ and $C_j''$ are generated by the rows of the matrices
$$
G_i=  \begin{bmatrix}
g_{11}(\xi^{u_i}) & g_{12}(\xi^{u_i}) \\
0 & g_{22}(\xi^{u_i}) 
\end{bmatrix} ,
G_j'=   \begin{bmatrix}
g_{11}(\xi^{v_j}) & g_{12}(\xi^{v_j}) \\
0 & g_{22}(\xi^{v_j}) 
\end{bmatrix}  ,
G_j''=   \begin{bmatrix}
\bar{g}_{11}(\xi^{v_j}) & \bar{g}_{12}(\xi^{v_j}) \\
0 & \bar{g}_{22}(\xi^{v_j}) 
\end{bmatrix}  ,
$$
respectively.

Let $g(x)= \gcd(g_{11}(x),g_{22}(x))$. Since we are assuming $\gcd(q,m)=1$, the condition 
$$g_{11}(x)g_{22}(x) \mid (x^m-1)g_{12}(x)$$ 
in Theorem \ref{qc2} is equivalent to the condition $g(x) \mid g_{12}(x)$, see Remark \ref{remarkgcd}.

Let  $l(x)= (x^m-1)/\text{lcm}(g_{11}(x),g_{22}(x))$. Let $g_{11}(x)=g(x)   g_{11}'(x), g_{22}(x)=g(x) g_{22}'(x)$, and
\begin{align*}
g_{11}'(x)=r_{11}(x) t_{11}(x),\\
g_{22}'(x)=r_{22}(x) t_{22}(x),
\end{align*}
where $r_{11}(x)=\gcd(g'_{11}(x), g'^*_{11}(x))$, and $r_{22}(x)=\gcd(g'_{22}(x), g'^*_{22}(x))$. Then $r_{11}(x)$ and $r_{22}(x)$ are self-reciprocal. 
The following is the main theorem of this section.

\begin{theorem} \label{main} Let $C$ be a quasi-cyclic code   of index $2$. Let $(g_{11}(x), g_{12}(x))$ and $(0, g_{22}(x))$ be the generators of $C$ satisfying Conditions $(\ast)$.   
	Then $C$ is Euclidean LCD if and only if all of the following conditions are true:
	\begin{enumerate}[(I)]
		\item  $g$ is self-reciprocal.
		\item  $l$ is self-reciprocal. 
		\item  $\gcd(t_{22}(x),g_{12}(x))=1$. 
		\item $\gcd(r_{22}(x), g_{11}(x)\bar{g}_{11}(x)+g_{12}(x) \bar{g}_{12}(x))=1$.
	\end{enumerate}
\end{theorem}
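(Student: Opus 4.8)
The plan is to reduce the global LCD condition to conditions on the constituents via Theorem \ref{qclcd}, and then to analyze each type of constituent separately using its explicit $2\times 2$ generator matrix. By Theorem \ref{qclcd}, $C$ is Euclidean LCD precisely when $C_i \cap C_i^{\perp_h} = \{\mathbf{0}\}$ for every self-reciprocal factor $f_i$, and $C_j' \cap C_j''^{\perp_e} = \{\mathbf{0}\}$ together with $C_j'' \cap C_j'^{\perp_e} = \{\mathbf{0}\}$ for every reciprocal pair $(h_j, h_j^*)$. The first move is to observe that evaluating at a root $\xi^{u_i}$ or $\xi^{v_j}$ turns each of $g_{11}, g_{12}, g_{22}$ into a scalar, so each constituent is either the zero space, a line, or all of the $2$-dimensional space, depending on which of these scalars vanish; an LCD constituent of dimension $0$ or $2$ is automatic, so the real work concerns the $1$-dimensional constituents. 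I would set up, for each root $\alpha$, a case distinction according to the vanishing pattern of $g_{11}(\alpha)$ and $g_{22}(\alpha)$ (noting $g_{12}$ enters only through the off-diagonal entry), and translate "this constituent is LCD" into a non-vanishing condition on a specific polynomial evaluated at $\alpha$.

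Concretely, I expect the following dictionary to emerge. A root $\alpha$ of the self-reciprocal part contributes a nonzero constituent $C_i$ exactly when $g_{11}(\alpha)=0$ or $g_{22}(\alpha)=0$ fails appropriately; when $C_i$ is a line spanned by a single nonzero row, the condition $C_i \cap C_i^{\perp_h}=\{\mathbf{0}\}$ becomes $\langle v, v\rangle_h \ne 0$, i.e. a sum like $g_{11}(\alpha)\overline{g_{11}}(\alpha) + g_{12}(\alpha)\overline{g_{12}}(\alpha) \ne 0$ or $g_{22}(\alpha)\overline{g_{22}}(\alpha)\ne 0$; here $\overline{g}(\alpha) = g(\alpha^{-1})$ and since $\alpha$ is a root of a self-reciprocal polynomial, $\alpha^{-1}$ is also a root of the relevant factors. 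For a reciprocal pair, the conditions $C_j'\cap C_j''^{\perp_e}=\{\mathbf{0}\}$ and its mirror become non-orthogonality statements linking the rows of $G_j'$ with those of $G_j''$, which unwind to requiring that certain products $g_{11}(\alpha)\overline{g_{11}}(\alpha)$, $g_{22}(\alpha)\overline{g_{22}}(\alpha)$, or $g_{11}(\alpha)\overline{g_{11}}(\alpha)+g_{12}(\alpha)\overline{g_{12}}(\alpha)$ be nonzero at $\alpha = \xi^{v_j}$. The key bookkeeping step is then to repackage "nonzero at every relevant root" as a $\gcd$ condition: a polynomial $P(x)$ is nonzero at all roots of a divisor $D(x)$ of $x^m-1$ iff $\gcd(P(x), D(x))=1$, using $\gcd(q,m)=1$ so that $x^m-1$ is squarefree.

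Carrying this out, I would first handle the diagonal scalar $g_{11}(\alpha)$: the factorization $g_{11} = g\, g_{11}'$ with $g = \gcd(g_{11},g_{22})$ and $g_{11}' = r_{11} t_{11}$ tells me exactly which roots make $g_{11}(\alpha)=0$, and similarly for $g_{22}$; the polynomial $l = (x^m-1)/\operatorname{lcm}(g_{11},g_{22})$ picks out the roots where both $g_{11}$ and $g_{22}$ are nonzero, i.e. where the constituent is $2$-dimensional. Sorting the roots of $x^m-1$ into the disjoint groups determined by $g$, $g_{11}'$, $g_{22}'$, and $l$, and intersecting each group with the self-reciprocal-factor roots versus the reciprocal-pair roots, I expect conditions (I) and (II) to emerge as exactly the requirement that the "both nonzero" region (governed by $l$) and the "common divisor" region (governed by $g$) be closed under $x \mapsto x^{-1}$ — i.e. self-reciprocal — since a root $\alpha$ in such a region that is not paired with $\alpha^{-1}$ would sit on one side of a reciprocal pair and force a $1$-dimensional constituent whose self-intersection with its dual is forced to be nonzero. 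Condition (III), $\gcd(t_{22}, g_{12})=1$, should come from the $1$-dimensional constituents spanned by the second row $(0, g_{22}(\alpha))$ when $g_{11}(\alpha)=0$: for these to be LCD we need the first row $(g_{11}(\alpha), g_{12}(\alpha)) = (0, g_{12}(\alpha))$ to not destroy complementarity, forcing $g_{12}(\alpha) \ne 0$ at the relevant roots, which are those of $t_{22}$. Condition (IV) should come from the $1$-dimensional constituents spanned by the first row when $g_{22}(\alpha)=0$ and $g_{11}(\alpha)\ne0$, where the self-Hermitian-product is $g_{11}(\alpha)\overline{g_{11}}(\alpha) + g_{12}(\alpha)\overline{g_{12}}(\alpha)$, required nonzero at the roots of $r_{22}$.

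The main obstacle I anticipate is the careful case analysis for the $1$-dimensional constituents of the self-reciprocal type, together with correctly tracking the interaction between the transpose operation $\bar{\phantom{g}}$ and evaluation at $\xi^{v_j}$ versus $\xi^{-v_j}$ for the reciprocal pairs — in particular, making sure that the two conditions $C_j'\cap C_j''^{\perp_e}=\{\mathbf 0\}$ and $C_j''\cap C_j'^{\perp_e}=\{\mathbf 0\}$ are genuinely symmetric under $\alpha \leftrightarrow \alpha^{-1}$, so that no new condition is produced beyond (I)–(IV), and verifying that the self-reciprocal factors $r_{11}, r_{22}$ are precisely the parts of $g_{11}', g_{22}'$ supported on self-paired roots while $t_{11}, t_{22}$ carry the rest. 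I would also need to double-check the two exceptional small fields flagged in Remark \ref{remarkH}, where $\langle\cdot,\cdot\rangle_h$ is really the Euclidean inner product on $F^\ell$; since the formula for $\langle\cdot,\cdot\rangle_h$ remains valid there, the argument should go through unchanged, but it deserves an explicit remark. Once the dictionary between vanishing patterns and $\gcd$-conditions is nailed down, assembling the four conditions is routine.
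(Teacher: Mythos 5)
Your overall strategy is exactly the paper's: reduce to the constituent conditions of Theorem \ref{qclcd}, analyze each $2\times 2$ constituent matrix according to the vanishing pattern of $g_{11}$ and $g_{22}$ at the relevant root, and repackage ``nonzero at all roots of a squarefree divisor'' as a gcd condition; the treatment you sketch for (I), (II) (a dimension mismatch between $C_j'$ and $C_j''$ when the $g$- or $l$-regions are not closed under $x\mapsto x^{-1}$ --- note it is the intersection $C_j''\cap {C_j'}^{\perp_e}$, not a self-intersection, that becomes nonzero) and for (IV) (self Hermitian product, resp.\ cross-orthogonality of $G_j'$ against $G_j''$, at roots of $r_{22}$) matches Lemmata \ref{nI}, \ref{nII}, \ref{nIV}, \ref{s1}, \ref{s2}.

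However, your stated mechanism for condition (III) is wrong. You locate it at roots $\alpha$ with $g_{11}(\alpha)=0$, where the constituent is spanned by the second row, and claim the first row $(0,g_{12}(\alpha))$ must not ``destroy complementarity.'' At such roots the first row already lies in the span of $(0,1)$, so no condition on $g_{12}$ arises there; pushing your reading through would produce either nothing or the spurious requirement $\gcd(t_{11},g_{12})=1$, which is not part of the theorem. The actual source of (III) (Lemma \ref{nIII} and cases 4, 6 of Lemma \ref{s2}) is at roots $\alpha$ of $t_{22}$, where $g_{22}(\alpha)=0$ and $g_{11}(\alpha)\neq 0$: there $C_j'=\langle (g_{11}(\alpha),g_{12}(\alpha))\rangle$, while the paired constituent at the inverse root is $\langle (0,1)\rangle$, because comparing $x^m-1=g\,l\,r_{11}\,t_{11}\,r_{22}\,t_{22}$ with its reciprocal factorization $x^m-1=\alpha\, g\, l\, r_{11}\, t_{11}^{*}\, r_{22}\, t_{22}^{*}$ (valid once (I) and (II) hold) forces $h_j\mid t_{22}\Rightarrow h_j^{*}\mid t_{11}$; the two cross-duality conditions then both reduce to $g_{12}(\alpha)\neq 0$, i.e.\ $\gcd(t_{22},g_{12})=1$. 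This pairing identity between $t_{11}$ and $t_{22}$ is the one genuinely nontrivial bookkeeping fact your sketch leaves implicit (your phrase about $r_{11},r_{22}$ carrying the ``self-paired'' part gestures at it), and without it the asymmetry of (III) --- why $t_{22}$ appears but $t_{11}$ does not --- cannot be explained; the rest of your plan would go through as in the paper.
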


The proof of Theorem \ref{main} is presented at the end of the section following Lemmata \ref{nI}, \ref{nII}, \ref{nIII}, \ref{nIV}, \ref{s1}, and \ref{s2}.

\begin{lemma} \label{nI} If $C$ is Euclidean LCD, then  (I) holds. 
\end{lemma}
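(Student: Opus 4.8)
The plan is to work through the CRT decomposition of $C$ and use the characterization of Euclidean LCD QC codes from Theorem \ref{qclcd}, focusing on the self-reciprocal constituents $C_i$ (indexed by $i$, corresponding to the self-reciprocal factors $f_i$ of $x^m-1$). Recall that for those constituents the relevant condition is $C_i \cap C_i^{\perp_h} = \{\mathbf 0\}$, i.e. each $C_i$ must be a Hermitian LCD code over $F_i$ (with the bar-induced inner product of \eqref{hinnerproduct}). Each $C_i$ is generated by the rows of the $2\times 2$ matrix $G_i$ with entries $g_{11}(\xi^{u_i}), g_{12}(\xi^{u_i}), g_{22}(\xi^{u_i})$. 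Since $g = \gcd(g_{11},g_{22})$, a self-reciprocal factor $f_i$ divides $g$ exactly when both $g_{11}(\xi^{u_i}) = 0$ and $g_{22}(\xi^{u_i}) = 0$, in which case $G_i$ is the zero matrix (note $g \mid g_{12}$ forces $g_{12}(\xi^{u_i}) = 0$ too), so $C_i = \{\mathbf 0\}$ is trivially Hermitian LCD. The claim of Lemma \ref{nI} is the converse: if $C$ is Euclidean LCD then $g$ is self-reciprocal.

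The key step is to argue contrapositively. Suppose $g$ is not self-reciprocal. Then $\gcd(g, g^*)$ is a proper divisor of $g$, so there is a reciprocal pair $h_j, h_j^*$ of irreducible factors of $x^m-1$ with, say, $h_j \mid g$ but $h_j^* \nmid g$ (equivalently $\xi^{v_j}$ is a root of $g$ but $\xi^{-v_j}$ is not). I would then show this forces one of the non-self-reciprocal constituent conditions $C_j' \cap C_j''^{\perp_e} = \{\mathbf 0\}$ or $C_j'' \cap C_j'^{\perp_e} = \{\mathbf 0\}$ in Theorem \ref{qclcd} to fail. Since $h_j \mid g \mid g_{11}$ and $h_j \mid g \mid g_{22}$ and $h_j \mid g \mid g_{12}$, the matrix $G_j'$ is the zero matrix, so $C_j' = \{\mathbf 0\}$; hence $C_j''^{\perp_e} \supseteq (C_j')^{\perp_e} $... more precisely $C_j' \cap C_j''^{\perp_e} = \{\mathbf 0\}$ holds automatically, but the companion condition $C_j'' \cap C_j'^{\perp_e} = \{\mathbf 0\}$ becomes $C_j'' \cap (H_j'')^\ell = C_j'' = \{\mathbf 0\}$, which demands $C_j'' = \{\mathbf 0\}$. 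But $C_j''$ is generated by the rows of $G_j''$, whose entries involve $\bar g_{11}(\xi^{v_j}) = (\xi^{v_j})^{m}g_{11}(\xi^{-v_j})$, and since $h_j^* \nmid g_{11}$... one must check $g_{11}(\xi^{-v_j}) \ne 0$ is not automatic, so I would instead examine which of $g_{11},g_{22}$ fails to vanish at $\xi^{-v_j}$ and conclude $G_j''$ is nonzero, so $C_j'' \ne \{\mathbf 0\}$, contradicting Euclidean LCD.

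A subtlety to handle carefully: $h_j \mid g$ means $h_j$ divides both $g_{11}$ and $g_{22}$; it does not immediately follow that $h_j^* \nmid g_{11}$ or $h_j^* \nmid g_{22}$ individually — only that $h_j^*\nmid g = \gcd(g_{11},g_{22})$, i.e. $h_j^*$ fails to divide at least one of them. So I would split on cases: if $h_j^* \nmid g_{11}$ then $\bar g_{22}(\xi^{v_j})$ could still be zero but $\bar g_{11}(\xi^{v_j}) \ne 0$... wait, $\bar g_{11}(\xi^{v_j}) = 0 \iff g_{11}(\xi^{-v_j}) = 0 \iff h_j^* \mid g_{11}$, so $h_j^* \nmid g_{11}$ gives $\bar g_{11}(\xi^{v_j}) \ne 0$, making $G_j''$ nonzero; symmetrically if $h_j^* \nmid g_{22}$. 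Either way $C_j'' \ne \{\mathbf 0\}$ while $C_j' = \{\mathbf 0\}$ (since $h_j \mid g_{11}, g_{12}, g_{22}$), so $C_j'' \cap C_j'^{\perp_e} = C_j'' \ne \{\mathbf 0\}$, contradicting Theorem \ref{qclcd}.

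I expect the main obstacle to be the bookkeeping around the transpose polynomial $\bar f(x) = x^m f(x^{-1})$ versus the reciprocal $f^*(x)$, and making sure the divisibility translations ``$h_j \mid g$'' $\leftrightarrow$ ``$g$ vanishes at $\xi^{v_j}$'' $\leftrightarrow$ ``$\bar g$ vanishes at $\xi^{-v_j}$'' are applied with the correct sign in the exponent; this is where a careless step could silently break the argument. Everything else is a short deduction from Theorem \ref{qclcd} and the fact that a zero generator matrix gives the zero code while a nonzero one gives a nonzero code, so a code equal to the whole ambient space in one slot cannot intersect a nonzero subspace trivially.
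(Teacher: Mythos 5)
Your proposal is correct and follows essentially the same route as the paper: take $h_j \mid g$ with $h_j^* \nmid g$, note $h_j \mid g_{11}, g_{12}, g_{22}$ forces $C_j' = \{\mathbf{0}\}$ so that $C_j'^{\perp_e}$ is the full space, while $h_j^* \nmid \gcd(g_{11},g_{22})$ forces $C_j'' \ne \{\mathbf{0}\}$, so $C_j'' \cap C_j'^{\perp_e} = C_j'' \ne \{\mathbf{0}\}$ contradicts Theorem \ref{qclcd}. Your careful case split on which of $g_{11}, g_{22}$ survives at $\xi^{-v_j}$ is exactly the paper's observation that $h_j^*$ fails to divide at least one of them, so $C_j''$ is at least one-dimensional.
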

\begin{proof} Assume that  $g$ is not self-reciprocal. This implies that there exists $h_j$ such that $h_j\mid g$ and  $h_j^*  \nmid g$. 

1. Since $h_j \mid g$ and $g\mid g_{12}$, it follows that $h_j\mid g_{12}$. Then $g_{11}(\xi^{v_j})= g_{12}(\xi^{v_j})=g_{22}(\xi^{v_j}) =0$, and $C_j'$ is generated by the rows of the  matrix
\[
G_j'=\begin{bmatrix}
g_{11}(\xi^{v_j}) & g_{12}(\xi^{v_j}) \\
0 & g_{22}(\xi^{v_j}) 
\end{bmatrix}    = \begin{bmatrix}
0 & 0 \\
0 & 0 
\end{bmatrix}. 
\]
This implies that $C_j'= \{ \mathbf{0} \}$, and so ${C_j'}^{\perp_e}$ is $2$-dimensional over  the base field $H_j''= F[x]/\langle h^*_j \rangle$. Therefore, ${C_j'}^{\perp_e}=(H_j'')^2.$

2. On the other hand,  the condition $h_j^* \nmid g$ implies that $h_j^* $ does not divide at least one of $g_{11}$ and $g_{22}$.  
This means that $C_j''$ is at least 1-dimensional over $H_j''= F[x]/\langle h_j^* \rangle$.  It follows that $C_j'' \cap {C_j'}^{\perp_e}=C_j'' \ne \{\mathbf{0}\}$. By Theorem \ref{qclcd}, $C$ is not Euclidean LCD. This proves the lemma.
\end{proof}

\begin{lemma} \label{nII}   If $C$ is Euclidean LCD, then  (II) holds. 
\end{lemma}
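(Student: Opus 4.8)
The plan is to mirror the structure of the proof of Lemma \ref{nI}, but now exploiting the constituent codes $C_i$ attached to the self-reciprocal factors $f_i$ rather than the pair $C_j', C_j''$ attached to reciprocal pairs, and also (if needed) the factors contributing to $l$. Recall $l(x) = (x^m-1)/\operatorname{lcm}(g_{11}(x), g_{22}(x))$ collects exactly the irreducible factors of $x^m-1$ that divide neither $g_{11}$ nor $g_{22}$. For such a factor, the corresponding evaluations of $g_{11}$ and $g_{22}$ are both nonzero, so the relevant constituent matrix has nonzero determinant and the constituent is the full $2$-dimensional space. The strategy is: assume $l$ is not self-reciprocal, produce a reciprocal pair $h_j, h_j^*$ with $h_j \mid l$ but $h_j^* \nmid l$, and derive a violation of the criterion in Theorem \ref{qclcd}.

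First I would set up the dichotomy: if $l$ is not self-reciprocal, there is a pair $h_j, h_j^*$ with $h_j \mid l$ and $h_j^* \nmid l$ (the roles of $h_j$ and $h_j^*$ can be swapped if necessary, so without loss of generality assume this orientation). From $h_j \mid l$ we get $h_j \nmid g_{11}$ and $h_j \nmid g_{22}$, hence $g_{11}(\xi^{v_j}) \ne 0$ and $g_{22}(\xi^{v_j}) \ne 0$. Therefore the matrix $G_j'$ is invertible over $H_j'$, which forces $C_j' = (H_j')^2$. Next I would analyze $C_j''$: since $h_j^* \nmid l$, the factor $h_j^*$ divides $\operatorname{lcm}(g_{11}, g_{22})$, so $h_j^*$ divides at least one of $g_{11}, g_{22}$; equivalently at least one of $\bar g_{11}(\xi^{v_j}), \bar g_{22}(\xi^{v_j})$ vanishes — here one uses $\bar g_{11}(\xi^{v_j}) = g_{11}(\xi^{-v_j})$ and that $h_j(\xi^{v_j})=0 \iff h_j^*(\xi^{-v_j})=0$, so $h_j^* \mid g_{11}$ gives $g_{11}(\xi^{-v_j})=0$, i.e. $\bar g_{11}(\xi^{v_j})=0$, and similarly for $g_{22}$. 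Consequently $G_j''$ has a zero row (or at least is not of full rank $2$), so $\dim_{H_j''} C_j'' \le 1$, hence $(C_j'')^{\perp_e} \ne \{\mathbf 0\}$. Since $C_j' = (H_j')^2$, its Euclidean dual $(C_j')^{\perp_e} = \{\mathbf 0\}$ inside $(H_j'')^2$ — wait, that direction is fine; what I actually need is $C_j'' \cap (C_j')^{\perp_e}$. Let me instead use the other clause: $(C_j'')^{\perp_e}$ is at least $1$-dimensional, and $C_j' = (H_j')^2 \cong (H_j')^2$, so $C_j' \cap (C_j'')^{\perp_e} = (C_j'')^{\perp_e} \ne \{\mathbf 0\}$. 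By Theorem \ref{qclcd} this contradicts $C$ being Euclidean LCD. Thus $l$ must be self-reciprocal, proving (II).

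The one subtlety I would be careful about — and where a clean argument matters most — is the bookkeeping between ``$h_j^* \mid \operatorname{lcm}(g_{11}, g_{22})$'' and the vanishing pattern of the $\bar g$-evaluations in $G_j''$, making sure the root correspondence $\xi^{v_j} \leftrightarrow \xi^{-v_j}$ and the identity $\bar a(\xi^{v_j}) = a(\xi^{-v_j})$ (valid for $\deg a \le m$, as noted in Subsection \ref{subsection2.2}) are applied consistently; this is the same mechanism used in Lemma \ref{nI} but with the divisibility hypotheses reversed. Everything else is a direct transcription of the Theorem \ref{qclcd} criterion, so I do not expect any real obstacle beyond this indexing care.
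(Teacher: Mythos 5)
Your proof is correct and takes essentially the same route as the paper's: from $h_j \mid l$ you get $g_{11}(\xi^{v_j}),g_{22}(\xi^{v_j})\ne 0$ so $C_j'=(H_j')^2$, from $h_j^*\nmid l$ you get $h_j^*\mid \mathrm{lcm}(g_{11},g_{22})$ so $G_j''$ drops rank, and the clause $C_j'\cap C_j''^{\perp_e}=\{\mathbf{0}\}$ of Theorem \ref{qclcd} is violated. The only (harmless) difference is that the paper invokes Lemma \ref{nI} to conclude $C_j''$ is exactly $1$-dimensional, whereas you only use $\dim C_j''\le 1$, which already gives $C_j''^{\perp_e}\ne\{\mathbf{0}\}$ and hence suffices.
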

\begin{proof} Assume that  $l$ is not self-reciprocal. This implies that there exists $h_j$ such that $h_j\mid l$ and  $h_j^*  \nmid l$.  Since $h_j \mid l$, it follows that $h_j \nmid \text{lcm}(g_{11},g_{22})$. In particular, $h_j\nmid g_{11}$ and $h_j\nmid g_{22}$. Then    $g_{11}(\xi^{v_j}) \ne 0$, $g_{22}(\xi^{v_j}) \ne 0$ and $\rank(G_j')=2$.
This implies that  $C_j'=(H_j')^2.$ 

On the other hand, $h_j^* \nmid l$ implies that $h_j^* \mid \text{lcm}(g_{11},g_{22}) = g \cdot g_{11}' \cdot g_{22}'$. But since $h_j \nmid g$ (from the condition $h_j \mid l$ above), by Lemma \ref{nI} we also have that $h_j^* \nmid g$. Hence $h_j^* \mid g_{11}'\cdot g_{22}'$, which means $h_j^*$ divides  either  $g_{11}$  or $g_{22}$ but not both. This implies that $C_j''$ is 1-dimensional. Then $C_j''^{\perp_e}$ is also 1-dimensional, and so $C_j' \cap C_j''^{\perp_e} \ne \{ \mathbf{0} \}$.  By Theorem \ref{qclcd}, $C$ is not Euclidean LCD.
\end{proof}

\begin{lemma} \label{nIII} Assume that (I) and (II) hold. If (III) is not true, then $C$ is not Euclidean LCD. 
\end{lemma}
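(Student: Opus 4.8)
The plan is to exhibit one constituent pair $(C_j',C_j'')$ that fails the criterion of Theorem~\ref{qclcd}. Assume (I) and (II) hold but $\gcd(t_{22},g_{12})\neq 1$, and fix an irreducible common divisor $p$ of $t_{22}$ and $g_{12}$. Since $p\mid t_{22}\mid g_{22}'$ and $\gcd(t_{22},r_{22})=1$ with $r_{22}=\gcd(g_{22}',g_{22}'^*)$, the divisor $p$ cannot be self-reciprocal (a self-reciprocal irreducible factor of $g_{22}'$ would divide $r_{22}$); after relabelling the reciprocal pair if necessary we may take $p=h_j$ for some $j$. First I would record the divisibilities forced by the fact that $g,g_{11}',g_{22}',l$ are pairwise coprime with $g\,g_{11}'\,g_{22}'\,l=x^m-1$ up to a nonzero scalar: from $h_j\mid g_{22}'$ we get $h_j\nmid g$, $h_j\nmid g_{11}'$ and $h_j\nmid l$; moreover $h_j^*\nmid g_{22}'$, for otherwise $h_j\mid g_{22}'^*$, hence $h_j\mid r_{22}$, contradicting $\gcd(r_{22},t_{22})=1$.

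Next I would pin down the two constituents explicitly. Evaluating at $\xi^{v_j}$ and using $h_j\mid g_{22}$, $h_j\mid g_{12}$ and $h_j\nmid g_{11}$ (the last because $h_j\nmid g$ and $h_j\nmid g_{11}'$), one gets $g_{11}(\xi^{v_j})\neq 0$, $g_{12}(\xi^{v_j})=0$, $g_{22}(\xi^{v_j})=0$, so $\rank(G_j')=1$ and $C_j'=\{(a,0):a\in H_j'\}$. For $C_j''$ I would bring in (I) and (II): $g$ self-reciprocal together with $h_j\nmid g$ gives $h_j^*\nmid g$, and $h_j\mid g_{22}$ gives $h_j\mid\mathrm{lcm}(g_{11},g_{22})$, hence $h_j\nmid l$, so by (II) $h_j^*\nmid l$. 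Since $h_j^*$ divides exactly one of the pairwise coprime factors $g,g_{11}',g_{22}',l$, and $g$, $g_{22}'$, $l$ are now excluded, it divides $g_{11}'$, hence $g_{11}$. Consequently $\bar g_{11}(\xi^{v_j})=g_{11}(\xi^{-v_j})=0$, while $\bar g_{22}(\xi^{v_j})=g_{22}(\xi^{-v_j})\neq 0$ since $h_j^*\nmid g_{22}$; thus $\rank(G_j'')=1$ with only its second column nonzero, and $C_j''=\{(0,b):b\in H_j''\}$.

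To conclude, I would take Euclidean duals of one-dimensional coordinate subspaces: $C_j''^{\perp_e}=\langle(1,0)\rangle$, so (under the identification $(H_j')^2\cong(H_j'')^2$) $C_j'\cap C_j''^{\perp_e}=C_j'\neq\{\mathbf0\}$ --- equivalently $C_j''\cap C_j'^{\perp_e}=C_j''\neq\{\mathbf0\}$. Either way Theorem~\ref{qclcd} shows that $C$ is not Euclidean LCD.

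The rank and dual computations are routine. The one \emph{delicate} step, which I would write out carefully, is the bookkeeping in the second paragraph: combining self-reciprocity of $g$ (condition (I)) and of $l$ (condition (II)) with squarefreeness of $x^m-1$ to force $h_j^*\mid g_{11}$ yet $h_j^*\nmid g_{22}$, which is exactly what places $C_j''$ in the coordinate subspace complementary to $C_j'$ and thereby produces the obstruction. I would also make sure the ``relabel the reciprocal pair'' step genuinely covers the case $p=h_j^*$, the argument being symmetric under interchanging $C_j'$ and $C_j''$.
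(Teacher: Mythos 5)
Your proof is correct and follows essentially the same route as the paper: you identify the constituent pair at $h_j$, show $C_j'=\langle(1,0)\rangle$ and $C_j''=\langle(0,1)\rangle$, and conclude via Theorem~\ref{qclcd} that $C_j'\cap C_j''^{\perp_e}\neq\{\mathbf 0\}$. The only (harmless) variation is that you force $h_j^*\mid g_{11}$ by elimination in the coarse coprime factorization $x^m-1=g\,g_{11}'\,g_{22}'\,l$, whereas the paper takes reciprocals of the finer factorization $x^m-1=g\cdot l\cdot r_{11}\cdot t_{11}\cdot r_{22}\cdot t_{22}$ to get $h_j^*\mid t_{11}$; both yield the same divisibility data, and your closing remark about the case $p=h_j^*$ being handled by symmetry matches the paper's implicit treatment.
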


\begin{proof} We note that for all $1 \le i \le s$, the irreducible polynomial $f_i$ is self-reciprocal, and so $f_i \nmid t_{22}$.  
Assume that (III) is not true, that is $\gcd(t_{22},g_{12}) \ne 1$. Then there exists $h_j$ such that $h_j \mid \gcd(t_{22},g_{12})$.

1. Since  $g$ and $l$ are self-reciprocal, 
\[
x^m-1  = g \cdot l \cdot r_{11} \cdot t_{11} \cdot r_{22} \cdot t_{22}.  
\] 
Furthermore, the  reciprocal of  $x^m-1$ is  $-(x^m-1)$, so we can rewrite 
\begin{align*}
x^m-1   &= \alpha \cdot g \cdot l \cdot r_{11} \cdot t_{11}^*    \cdot  r_{22} \cdot  t_{22}^*,
\end{align*} 
for some $\alpha 
\in F$. In particular, since  $h_j \mid t_{22}$, and since the polynomials $g,l,r_{11},r_{22}$ are self-reciprocal, we have that $h_j^* \mid t_{11}$.

2. Since $h_j \mid t_{22}$, it follows that $h_j \mid g_{22}$ and $h_j \nmid g_{11}$.  Then  
\[
G_j'=\begin{bmatrix}
g_{11}(\xi^{v_j}) & g_{12}(\xi^{v_j}) \\
0 & g_{22}(\xi^{v_j}) 
\end{bmatrix} = \begin{bmatrix}
g_{11}(\xi^{v_j}) & 0 \\
0 & 0 
\end{bmatrix},
\]
where $g_{11}(\xi^{v_j}) \ne 0$.
Since $h_j^* \mid t_{11}$, we have that $h_j^* \mid g_{11}$ and $h_j^* \nmid g_{22}$. Then 
\[
G_j'' = \begin{bmatrix}
	\bar{g}_{11}(\xi^{v_j}) & \bar{g}_{12}(\xi^{v_j}) \\
	0 & \bar{g}_{22}(\xi^{v_j}) 
\end{bmatrix} = \begin{bmatrix}
 	0 &  \bar{g}_{12}(\xi^{v_j}) \\
	0 & \bar{g}_{22}(\xi^{v_j}) 
\end{bmatrix},
\]
where $\bar{g}_{22}(\xi^{v_j}) \ne 0$. Hence  $C_j' =\langle (1,0) \rangle$, $C_j'' =\langle (0,1) \rangle$, and $C_j''^{\perp_e} =\langle (1,0) \rangle = C_j'$.
Therefore, $C_j' \cap {C_j''}^{\perp_e}=C_j' \ne \{\mathbf{0}\}$ and so $C$ is not Euclidean LCD.
\end{proof}

\begin{lemma}   \label{nIV}  Assume that (I) and (II) hold. If (IV) is not true, then $C$ is not Euclidean LCD. 
\end{lemma}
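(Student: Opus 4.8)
The plan is to mirror the strategy of Lemmata \ref{nIII} and work at the level of a single constituent coming from a \emph{self-reciprocal} irreducible factor $f_i$ of $x^m-1$, since condition (IV) concerns $r_{22}$, which is a product of self-reciprocal $f_i$'s. So suppose (IV) fails: there is a self-reciprocal irreducible $f=f_i$ with $f \mid r_{22}$ and $f \mid g_{11}\bar g_{11}+g_{12}\bar g_{12}$. Write $\al=\xi^{u_i}$ for a root of $f$, so that $\bar a(\al)=a(\al^{-1})$ for any polynomial $a$. Because $f \mid r_{22}=\gcd(g'_{22},g'^{*}_{22})$ and $r_{22}\mid g'_{22}\mid g_{22}$, we get $g_{22}(\al)=0$; and since $f$ is self-reciprocal, $\bar g_{22}(\al)=g_{22}(\al^{-1})=0$ too. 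Moreover $f\mid r_{22}$ forces $f\nmid t_{11}t_{22}$ and, because $x^m-1 = g\cdot l\cdot r_{11}\cdot t_{11}\cdot r_{22}\cdot t_{22}$ is squarefree, $f\nmid g$, $f\nmid l$, $f\nmid r_{11}$; in particular $f\nmid g_{11}=g\,r_{11}t_{11}$, so $g_{11}(\al)\neq0$ and likewise $\bar g_{11}(\al)=g_{11}(\al^{-1})\neq0$.

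Next I compute the constituent $C_i$ and its Hermitian dual over $F_i=F[x]/(f)$. With $g_{22}(\al)=0$ the generator matrix $G_i$ degenerates: its second row is zero, so $C_i = \mathrm{Span}_{F_i}\{(g_{11}(\al),\,g_{12}(\al))\}$, a one-dimensional code (nonzero since $g_{11}(\al)\ne0$). For the Hermitian inner product $\langle \mathbf c,\mathbf d\rangle_h = \sum_k c_k(\al)\bar d_k(\al)$ on $F_i^2$, a vector $(y_1,y_2)$ lies in $C_i^{\perp_h}$ iff $g_{11}(\al)\,\overline{y_1} + g_{12}(\al)\,\overline{y_2}=0$ (after applying the bar automorphism, which is a field automorphism of $F_i$ of order dividing $2$). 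The key point: the generator $(g_{11}(\al),g_{12}(\al))$ of $C_i$ itself satisfies this equation precisely because
\[
\langle (g_{11}(\al),g_{12}(\al)),(g_{11}(\al),g_{12}(\al))\rangle_h = g_{11}(\al)\bar g_{11}(\al) + g_{12}(\al)\bar g_{12}(\al) = \big(g_{11}\bar g_{11}+g_{12}\bar g_{12}\big)(\al) = 0,
\]
where the last equality is exactly the hypothesis $f \mid g_{11}\bar g_{11}+g_{12}\bar g_{12}$. Hence the nonzero vector $(g_{11}(\al),g_{12}(\al))$ lies in $C_i \cap C_i^{\perp_h}$, so $C_i \cap C_i^{\perp_h}\neq\{\mathbf 0\}$, and by Theorem \ref{qclcd} the code $C$ is not Euclidean LCD.

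One point needs care, and I expect it to be the main obstacle: the self-reciprocal factor $f$ could be one of the two exceptional factors $x-1$ or (when $q$ odd, $m$ even) $x+1$, where $F_i=F$ and, by Remark \ref{remarkH}, $F_i^2$ is equipped with the \emph{Euclidean} inner product rather than a genuine Hermitian one. In that case $\al=\pm1=\al^{-1}$, so $\bar a(\al)=a(\al)$ and the bar automorphism is the identity; the displayed computation still reads $g_{11}(\al)^2 + g_{12}(\al)^2 = (g_{11}\bar g_{11}+g_{12}\bar g_{12})(\al)=0$, which is exactly the Euclidean self-orthogonality of the generator, so the same conclusion $C_i\cap C_i^{\perp_e}\ne\{\mathbf 0\}$ holds. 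Thus the argument is uniform across all self-reciprocal constituents, and the proof goes through. A secondary check is that $f$ cannot divide $g_{12}$ together with $g_{11}\bar g_{11}+g_{12}\bar g_{12}$ in a way that would already force $g_{11}(\al)=0$ — but we ruled that out above since $f\nmid g_{11}$ independently of $g_{12}$, so $C_i$ is genuinely one-dimensional and the witness vector is genuinely nonzero.
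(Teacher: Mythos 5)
There is a genuine gap: your reduction of condition (IV) to self-reciprocal factors rests on a false premise. You claim that ``$r_{22}$ is a product of self-reciprocal $f_i$'s,'' but the paper only establishes that $r_{22}=\gcd(g_{22}',g_{22}'^{*})$ is self-reciprocal \emph{as a polynomial}; since $x^m-1$ is squarefree, this means the irreducible factors of $r_{22}$ are either self-reciprocal $f_i$'s or occur in reciprocal pairs $h_j,h_j^{*}$ — it does not exclude the latter. (Indeed, case 5 of Lemma \ref{s2} in the paper is devoted precisely to the situation $h_j\mid r_{22}$.) So condition (IV) can fail through an irreducible factor $a=h_j$ that is \emph{not} self-reciprocal, e.g.\ when $\gcd\bigl(r_{22},\,g_{11}\bar g_{11}+g_{12}\bar g_{12}\bigr)$ is exactly $h_jh_j^{*}$, and your argument never produces a witness in that situation: there may be no $f_i$ dividing the gcd at all, so no constituent $C_i$ intersects its Hermitian dual, and Theorem \ref{qclcd} gives you nothing. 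The missing half of the proof must work with the paired constituents: if $h_j\mid r_{22}$ and $h_j\mid g_{11}\bar g_{11}+g_{12}\bar g_{12}$, then $h_j\nmid g_{11}$, $h_j\mid g_{22}$, $h_j^{*}\nmid g_{11}$, $h_j^{*}\mid g_{22}$, so $C_j'=\langle (g_{11}(\xi^{v_j}),g_{12}(\xi^{v_j}))\rangle$, $C_j''=\langle (\bar g_{11}(\xi^{v_j}),\bar g_{12}(\xi^{v_j}))\rangle$, and $C_j'^{\perp_e}=\langle (-g_{12}(\xi^{v_j}),g_{11}(\xi^{v_j}))\rangle$; the vanishing of $g_{11}(\xi^{v_j})\bar g_{11}(\xi^{v_j})+g_{12}(\xi^{v_j})\bar g_{12}(\xi^{v_j})$ then forces $C_j''=C_j'^{\perp_e}\ne\{\mathbf 0\}$, so $C_j''\cap C_j'^{\perp_e}\ne\{\mathbf 0\}$ and Theorem \ref{qclcd} applies. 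This is exactly the paper's second case, which your proposal omits.

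The part you do prove — the case $a=f_i$ self-reciprocal, including the careful treatment of the exceptional factors $x\mp 1$ where the form $\langle\cdot,\cdot\rangle_h$ degenerates to the Euclidean one — is correct and matches the paper's case 1. Once you add the Euclidean-pair argument above for $a=h_j$, the proof is complete.
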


\begin{proof}
Similar to the proof of Lemma \ref{nIII}, if $g$ and $l$ are self-reciprocal, we have
\[
x^m-1  = g \cdot l \cdot r_{11} \cdot t_{11} \cdot r_{22} \cdot t_{22}.
\] 
Assume that condition (IV) is not true, that is, there exists an irreducible factor  $a(x)$ of $x^m-1$ such that 
$$a \mid \gcd(r_{22}, g_{11}\bar{g}_{11}+g_{12} \bar{g}_{12}).$$
Since $a \mid r_{22}$, we have that  $a \nmid g_{11}$,  $a \mid g_{22}$,  $a \nmid g_{11}^*$,  and $a \mid g_{22}^*$.   We have two cases depending on whether $a$ is self-reciprocal. 

1. $a$ is self-reciprocal, that is, $a=f_i$ for some $i$. Since $f_i  \nmid g_{11}$ and $f_i \mid g_{22}$, $G_i$ is of the form
\[
G_i=  \begin{bmatrix}
g_{11}(\xi^{u_i}) & g_{12}(\xi^{u_i}) \\
0 & 0
\end{bmatrix},
\] 
where $g_{11}(\xi^{u_i}) \ne 0$. Then   $C_i =\langle (g_{11}(\xi^{u_i}), g_{12}(\xi^{u_i}))\rangle$ is a 1-generator code, whose dual is $C_i^{\perp_h} =\langle (-\bar{g}_{12}(\xi^{u_i}),\bar{g}_{11}(\xi^{u_i}))\rangle$, see Remark \ref{remarkH}.  But since $f_i \mid  g_{11}\bar{g}_{11}+g_{12} \bar{g}_{12}$, 
\[
 g_{11}(\xi^{u_i})\bar{g}_{11}(\xi^{u_i})+g_{12}(\xi^{u_i}) \bar{g}_{12}(\xi^{u_i}) =0, 
\]
and so $C_i = C_i^{\perp_h}$. Hence $C_i \cap  C_i^{\perp_h} \ne \{\mathbf{0}\}$, and so $C$ is not Euclidean LCD. 

2. $a$ is not self-reciprocal, that is, $a=h_j$ for some $j$. With the same reasoning as in case 1, we have that $C_j' =\langle (g_{11}(\xi^{v_j}), g_{12}(\xi^{v_j}))\rangle$ and $C_j'^{\perp_e} =\langle (-g_{12}(\xi^{v_j}), g_{11}(\xi^{v_j}))\rangle$. Also, $C_j'' =\langle (\bar{g}_{11}(\xi^{v_j}), \bar{g}_{12}(\xi^{v_j}))\rangle$. 
But since $h_j \mid g_{11}\bar{g}_{11}+g_{12} \bar{g}_{12}$, 
\[
 g_{11}(\xi^{v_j})\bar{g}_{11}(\xi^{v_j})+g_{12}(\xi^{v_j}) \bar{g}_{12}(\xi^{v_j}) =0, 
\]
and so $C_j'' = C_j'^{\perp_e}$. Hence $C_j'' \cap  C_j'^{\perp_e} \ne \{\mathbf{0}\}$, and so $C$ is not Euclidean LCD. \qedhere

\end{proof}

\begin{lemma} \label{s1}  If (I), (II), (III) and (IV) hold, then $C_i \cap C_i^{\perp_h} = \{\mathbf{0}\}$ for all $1 \le i \le s$. 
\end{lemma}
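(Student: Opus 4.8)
The plan is to show that each self-reciprocal constituent $C_i$ satisfies $C_i \cap C_i^{\perp_h} = \{\mathbf{0}\}$ by splitting into cases according to $\rank(G_i)$, exactly mirroring the case analysis that appeared in the necessity lemmas, but now running the implications in the opposite direction. First I would observe that, since $g$ and $l$ are self-reciprocal by (I) and (II), we have the factorization $x^m-1 = g\cdot l \cdot r_{11}\cdot t_{11}\cdot r_{22}\cdot t_{22}$ together with the ``reciprocal'' rewriting $x^m-1 = \alpha\, g\cdot l\cdot r_{11}\cdot t_{11}^*\cdot r_{22}\cdot t_{22}^*$; comparing these shows that the irreducible factors of $t_{11}$ and $t_{22}$ are paired up by reciprocation, so no self-reciprocal $f_i$ divides $t_{11}$ or $t_{22}$. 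Consequently, for a self-reciprocal $f_i$ we have $f_i \mid g_{11}' \iff f_i \mid r_{11}$ and $f_i\mid g_{22}'\iff f_i \mid r_{22}$, and similarly $f_i \mid g_{11} \iff f_i\mid g$ or $f_i\mid r_{11}$, etc. This bookkeeping is what translates the polynomial conditions into statements about which entries of $G_i$ vanish.

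Next I would run the case analysis on $G_i = \begin{bmatrix} g_{11}(\xi^{u_i}) & g_{12}(\xi^{u_i}) \\ 0 & g_{22}(\xi^{u_i}) \end{bmatrix}$. If $\rank(G_i) = 2$, then $C_i = F_i^2$, so $C_i^{\perp_h} = \{\mathbf{0}\}$ and there is nothing to prove. If $\rank(G_i) = 0$, then $C_i = \{\mathbf{0}\}$ and again the intersection is trivial. The interesting case is $\rank(G_i) = 1$. Here there are two sub-cases. If $f_i \mid g_{22}$ but $f_i\nmid g_{11}$, then by the bookkeeping above $f_i \mid r_{22}$ (it cannot divide $g$ since then it would divide $g_{11}$, nor $t_{22}$), and $C_i = \langle (g_{11}(\xi^{u_i}), g_{12}(\xi^{u_i}))\rangle$ with $C_i^{\perp_h} = \langle(-\bar g_{12}(\xi^{u_i}), \bar g_{11}(\xi^{u_i}))\rangle$ as in Remark \ref{remarkH}. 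These two one-dimensional spaces coincide if and only if $g_{11}(\xi^{u_i})\bar g_{11}(\xi^{u_i}) + g_{12}(\xi^{u_i})\bar g_{12}(\xi^{u_i}) = 0$, i.e. $f_i \mid g_{11}\bar g_{11} + g_{12}\bar g_{12}$; but since $f_i\mid r_{22}$, condition (IV) forbids exactly this, so $C_i \cap C_i^{\perp_h} = \{\mathbf{0}\}$. The other sub-case, $f_i\mid g_{11}$ but $f_i\nmid g_{22}$, I would need to rule out entirely: here $g_{12}(\xi^{u_i})$ would have to be nonzero for $\rank(G_i)$ to equal $1$, so $C_i = \langle(0,1)\rangle$; but then I should check that this situation is actually incompatible with the hypotheses — it forces $f_i\mid g_{11}'$ hence $f_i\mid r_{11}$ hence (by self-reciprocity pairing) $f_i\mid r_{22}$ as well, which would make $f_i\mid g_{22}$, a contradiction. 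So only the first sub-case occurs, and we have to double-check that this deduction ($f_i \mid r_{11} \Rightarrow f_i \mid r_{22}$ for self-reciprocal $f_i$, when such an $f_i$ divides $g_{11}'$) is genuinely valid — it should follow because $r_{11} = \gcd(g_{11}', g_{11}'^*)$ and self-reciprocity of $f_i$ makes divisibility of $g_{11}'$ equivalent to divisibility of $g_{11}'^*$, forcing $f_i\mid r_{11}$; but the link to $r_{22}$ needs the global factorization of $x^m-1$, so I want to be careful there.

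I expect the main obstacle to be precisely this bookkeeping step: establishing cleanly, for a self-reciprocal irreducible $f_i$, the chain of equivalences relating ``$f_i$ divides $g_{11}$ / $g_{22}$'' to ``$f_i$ divides $g$ / $l$ / $r_{11}$ / $r_{22}$'', and in particular verifying that the $\rank(G_i)=1$ case with $f_i\nmid g_{22}$ truly cannot arise under (I)--(IV) — that is where a subtle error could hide. Once that dictionary is in place, the rest is the same short linear-algebra computation over $F_i$ (a one-dimensional code equals its Hermitian dual iff a single scalar vanishes) that already appeared in Lemma \ref{nIV}, now read contrapositively. I would end by noting that conditions (I), (II), (IV) are exactly the ones consumed here, while (III) will be used in the companion Lemma \ref{s2} for the $C_j', C_j''$ constituents.
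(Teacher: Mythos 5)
Your overall strategy (case analysis on which of $g_{11},g_{22}$ the self-reciprocal factor $f_i$ divides, with condition (IV) disposing of the one genuinely dangerous one-dimensional case) is the same as the paper's, and your treatment of the sub-case $f_i \nmid g_{11}$, $f_i \mid g_{22}$ — deducing $f_i \mid r_{22}$ and invoking (IV) to get $g_{11}(\xi^{u_i})\bar g_{11}(\xi^{u_i})+g_{12}(\xi^{u_i})\bar g_{12}(\xi^{u_i})\ne 0$ — is exactly what the paper does, as are the rank-$0$ and rank-$2$ cases.

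However, there is a genuine error in the remaining sub-case $f_i \mid g_{11}$, $f_i \nmid g_{22}$. You try to rule this case out by the chain ``$f_i \mid g_{11}' \Rightarrow f_i \mid r_{11} \Rightarrow f_i \mid r_{22}$,'' but the last implication is false: there is no reciprocity pairing between $r_{11}$ and $r_{22}$. Each of $r_{11}$ and $r_{22}$ is itself self-reciprocal (the pairing is between $t_{11}$ and $t_{22}$), and since $g_{11}'=g_{11}/g$ and $g_{22}'=g_{22}/g$ are coprime divisors of the square-free polynomial $x^m-1$, one has $f_i \mid r_{11} \Rightarrow f_i \nmid r_{22}$. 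In fact this sub-case is perfectly compatible with (I)--(IV) and does occur; it simply does not need to be excluded. The correct (and easy) handling is the one the paper gives: here $g_{11}(\xi^{u_i})=0$ and $g_{22}(\xi^{u_i})\ne 0$, so $C_i=\langle(0,1)\rangle$ regardless of whether $g_{12}(\xi^{u_i})$ vanishes (your claim that $g_{12}(\xi^{u_i})$ must be nonzero for rank $1$ is also off, though harmless), and then $C_i^{\perp_h}=\langle(1,0)\rangle$, so $C_i\cap C_i^{\perp_h}=\{\mathbf{0}\}$ directly. With that replacement the proof is complete; as written, the argument for this sub-case would not survive scrutiny.
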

\begin{proof} For each $i$, there are four cases depending on the divisibility of $f_i$ with respect to $g_{11}$ and $g_{22}$. We will show that $C_i \cap C_i^{\perp_h} = \{\mathbf{0}\}$ in each of these four cases. 
	
1. $f_i \mid g_{11}$ and $f_i \mid g_{22}$. This implies that $f_i \mid g$ and hence $f_i \mid g_{12}$. Then $G_i$ is the zero matrix, $C_i= \{\mathbf{0}\}$, and so $C_i \cap C_i^{\perp_h} = \{\mathbf{0}\}$. 
	
2. $f_i \mid g_{11}$ and $f_i \nmid g_{22}$. Then $g_{11}(\xi^{u_i})=0$, $g_{22}(\xi^{u_i})\ne0$,  and $G_i$ is of the form
\[
G_i= \begin{bmatrix}
	0 & g_{12}(\xi^{u_i}) \\
	0 & g_{22}(\xi^{u_i}) 
\end{bmatrix}.
\] 
Then $C_i=\langle (0,1) \rangle$, $C_i^{\perp_h}=\langle (1,0) \rangle$ and so $C_i \cap C_i^{\perp_h} = \{\mathbf{0}\}$.

3. $f_i \nmid g_{11}$ and $f_i \mid g_{22}$.  Then $g_{11}(\xi^{u_i})\ne 0$, $g_{22}(\xi^{u_i})= 0$, and $G_i$ is of the form
\[
G_i=  \begin{bmatrix}
g_{11}(\xi^{u_i}) & g_{12}(\xi^{u_i}) \\
0 & 0
\end{bmatrix} .
\] 
Hence   $C_i =\langle (g_{11}(\xi^{u_i}), g_{12}(\xi^{u_i}))\rangle$ is a 1-generator code, whose dual is 
\[
C_i^{\perp_h} =\langle (-\bar{g}_{12}(\xi^{u_i}),\bar{g}_{11}(\xi^{u_i}))\rangle.
\]
From condition (IV), we have 
$
\gcd(r_{22}, g_{11}\bar{g}_{11}+g_{12}\bar{g}_{12})=1,
$
and since $f_i \mid r_{22}$, it follows that $f_i \nmid (g_{11}\bar{g}_{11}+g_{12}\bar{g}_{12})$. Then
$$
g_{11}(\xi^{u_i})\bar{g}_{11}(\xi^{u_i})+g_{12}(\xi^{u_i}) \bar{g}_{12}(\xi^{u_i})) \ne 0,
$$
 and $C_i \cap C_i^{\perp_h} = \{\mathbf{0}\}$.

4. $f_i \nmid g_{11}$ and $f_i \nmid g_{22}$. In this case, $C_i$ is $2$-dimensional, $C_i^{\perp_h} = \{\mathbf{0}\}$, and so $C_i \cap C_i^{\perp_h} = \{\mathbf{0}\}$.  \end{proof}

\begin{lemma} \label{s2}  If (I), (II), (III) and (IV) hold, then $C_j' \cap C_j''^{\perp_e} = \{\mathbf{0}\}$ and $C_j'' \cap C_j'^{\perp_e} = \{\mathbf{0}\}$  for all $1\le j \le p$.  
\end{lemma}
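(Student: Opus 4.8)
The plan is to mirror the case analysis of Lemma \ref{s1}, but now working simultaneously with the two conjugate constituents $C_j'$ and $C_j''$, and to exploit the factorization $x^m-1 = g\cdot l\cdot r_{11}\cdot t_{11}\cdot r_{22}\cdot t_{22}$ (valid because (I) and (II) give that $g$ and $l$ are self-reciprocal), together with the fact that applying the reciprocal map turns $t_{11}$ into $t_{11}^*$ and $t_{22}$ into $t_{22}^*$, while fixing $g$, $l$, $r_{11}$, $r_{22}$ up to scalars. Since $h_j$ and $h_j^*$ are a reciprocal pair, for each such pair exactly one of the following holds: (a) $h_j\mid g$ (equivalently $h_j^*\mid g$, since $g$ is self-reciprocal), (b) $h_j\mid l$ (equivalently $h_j^*\mid l$), (c) $h_j\mid r_{11}r_{22}$ (equivalently $h_j^*\mid r_{11}r_{22}$, since each $r$ is self-reciprocal), or (d) $h_j$ divides exactly one of $t_{11}, t_{22}$ and $h_j^*$ divides the other. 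I would organize the proof around which of these four alternatives $\{h_j,h_j^*\}$ falls into, and in each alternative compute $G_j'$ and $G_j''$ explicitly, hence $C_j'$, $C_j''$, and the relevant Euclidean duals $C_j'^{\perp_e}$, $C_j''^{\perp_e}$.

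In alternative (a), $h_j\mid g\mid g_{12}$ forces $G_j'$ and $G_j''$ to be zero matrices, so $C_j'=C_j''=\{\mathbf 0\}$ and both intersections are trivially $\{\mathbf 0\}$. In alternative (b), $h_j\nmid g_{11}$ and $h_j\nmid g_{22}$ (and likewise for $h_j^*$), so $\rank G_j' = \rank G_j'' = 2$, giving $C_j'=(H_j')^2$, $C_j''=(H_j'')^2$, hence $C_j'^{\perp_e}=C_j''^{\perp_e}=\{\mathbf 0\}$ and again both intersections vanish. In alternative (c), $h_j\mid r_{11}r_{22}$ splits into subcases according to whether $h_j$ divides $r_{11}$ (so $h_j\mid g_{11}$, $h_j\nmid g_{22}$, and the same for $h_j^*$ since $r_{11}$ is self-reciprocal) or $r_{22}$ (so $h_j\mid g_{22}$, $h_j\nmid g_{11}$, symmetrically for $h_j^*$); in the first subcase one gets $C_j'=\langle(0,1)\rangle$, $C_j''=\langle(0,1)\rangle$, $C_j'^{\perp_e}=C_j''^{\perp_e}=\langle(1,0)\rangle$, and in the second $C_j'=\langle(g_{11}(\xi^{v_j}),g_{12}(\xi^{v_j}))\rangle$ is a nonzero one-dimensional code with $C_j'^{\perp_e}=\langle(-g_{12}(\xi^{v_j}),g_{11}(\xi^{v_j}))\rangle$, and $C_j''=\langle(\bar g_{11}(\xi^{v_j}),\bar g_{12}(\xi^{v_j}))\rangle$; then condition (IV), namely $\gcd(r_{22},g_{11}\bar g_{11}+g_{12}\bar g_{12})=1$ together with $h_j\mid r_{22}$, gives $g_{11}(\xi^{v_j})\bar g_{11}(\xi^{v_j})+g_{12}(\xi^{v_j})\bar g_{12}(\xi^{v_j})\neq 0$, so $C_j''\neq C_j'^{\perp_e}$ and, both being one-dimensional, $C_j'\cap C_j''^{\perp_e}=\{\mathbf 0\}$ and $C_j''\cap C_j'^{\perp_e}=\{\mathbf 0\}$. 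In alternative (d), say $h_j\mid t_{22}$ and $h_j^*\mid t_{11}$ (the other orientation is symmetric, swapping the roles of the two constituents); then $h_j\mid g_{22}$, $h_j\nmid g_{11}$, while $h_j^*\mid g_{11}$, $h_j^*\nmid g_{22}$; condition (III), $\gcd(t_{22},g_{12})=1$, forces $h_j\nmid g_{12}$, hence $g_{12}(\xi^{v_j})\neq 0$. A short computation then gives $C_j'=\langle(g_{11}(\xi^{v_j}),g_{12}(\xi^{v_j}))\rangle$ with both coordinates nonzero, $C_j'^{\perp_e}=\langle(-g_{12}(\xi^{v_j}),g_{11}(\xi^{v_j}))\rangle$, and $C_j''=\langle(0,\bar g_{22}(\xi^{v_j}))\rangle=\langle(0,1)\rangle$ with $C_j''^{\perp_e}=\langle(1,0)\rangle$; since $C_j'$ has no zero coordinate it meets neither $\langle(1,0)\rangle$ nor $\langle(0,1)\rangle$, so $C_j'\cap C_j''^{\perp_e}=\{\mathbf 0\}$ and $C_j''\cap C_j'^{\perp_e}=\{\mathbf 0\}$.

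The bookkeeping obstacle is alternative (d): one has to be careful that (III) is exactly what rules out the dangerous configuration in which $C_j'$ picks up a zero coordinate aligned with the support of $C_j''^{\perp_e}$ — this is the index-$2$ analogue of the phenomenon seen in Lemma \ref{nIII}, and it is the only place where condition (III) is used on the $H_j'$, $H_j''$ side. I would therefore write out case (d) in full detail (including the mirror orientation $h_j\mid t_{11}$, $h_j^*\mid t_{22}$), and treat (a), (b), and the first subcase of (c) quickly since they are degenerate, and the second subcase of (c) by a direct invocation of (IV) just as in Lemma \ref{s1}, Case 3. Combining all alternatives yields $C_j'\cap C_j''^{\perp_e}=\{\mathbf 0\}$ and $C_j''\cap C_j'^{\perp_e}=\{\mathbf 0\}$ for every $1\le j\le p$, which is the claim.
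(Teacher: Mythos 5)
Your proposal is correct and follows essentially the same route as the paper: a case analysis on which of the pairwise coprime factors $g, l, r_{11}, t_{11}, r_{22}, t_{22}$ of $x^m-1$ the irreducible $h_j$ divides, computing $G_j'$, $G_j''$ and the relevant duals explicitly, with condition (III) used in the $t_{11}/t_{22}$ cases and condition (IV) in the $r_{22}$ case. Your only deviation is cosmetic — grouping the paper's six cases into four alternatives by treating the reciprocal pair $\{h_j,h_j^*\}$ jointly — and all the individual computations match the paper's.
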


\begin{proof} Since  $g$ and $l$ are self-reciprocal, we have that
\begin{align*}
x^m-1   &= g \cdot l \cdot r_{11} \cdot t_{11}     \cdot  r_{22} \cdot  t_{22}.
\end{align*}
For each $j$, the irreducible polynomial $h_j$ divides exactly one of these six factors of $x^m-1$, which leads to the following six cases.

1. $h_j \mid g$. Since $g \mid g_{12}$, we also have that $h_j \mid g_{12}$.  Then  
$g_{11}(\xi^{v_j})=g_{12}(\xi^{v_j})=g_{22}(\xi^{v_j})=0$. Since $g$ is self-reciprocal, it follows that $h_j^* \mid g$. Similarly as before,
$\bar{g}_{11}(\xi^{v_j})=\bar{g}_{12}(\xi^{v_j})=\bar{g}_{22}(\xi^{v_j})=0$.
Hence $G_j'=G_j''=\{\mathbf{0}\}$, which implies that  $C_j' \cap C_j''^{\perp_e} = \{\mathbf{0}\}$ and $C_j'' \cap C_j'^{\perp_e} = \{\mathbf{0}\}$.

2. $h_j \mid l$. This condition implies that $h_j \nmid g$ and $h_j^* \nmid g$. Then $C_j'$ and $C_j''$ are 2-dimensional, which implies that $C_j'^{\perp_e}$ and $C_j''^{\perp_e}$ are $0$-dimensional. Then $C_j' \cap C_j''^{\perp_e} = \{\mathbf{0}\}$ and $C_j'' \cap C_j'^{\perp_e} = \{\mathbf{0}\}$.

3. $h_j \mid r_{11}$. This condition implies that $h_j \mid g_{11}$, $h_j \nmid g_{22}$, $h_j^* \mid g_{11}$ and $h_j^* \nmid g_{22}$. It follows that  
$$
G_j'=   \begin{bmatrix}
0 & g_{12}(\xi^{v_j}) \\
0 & g_{22}(\xi^{v_j}) 
\end{bmatrix} ,
G_j''=   \begin{bmatrix}
0 & \bar{g}_{12}(\xi^{v_j}) \\
0 & \bar{g}_{22}(\xi^{v_j}) 
\end{bmatrix}.
$$
It can then be readily checked that $C_j' \cap C_j''^{\perp_e} = \{\mathbf{0}\}$ and $C_j'' \cap C_j'^{\perp_e} = \{\mathbf{0}\}$.

4. $h_j \mid t_{11}$. We recall from the proof of Lemma \ref{nIII} that we can  rewrite 
\begin{align*}
x^m-1   &= \alpha \cdot g \cdot l \cdot r_{11} \cdot t_{11}^*    \cdot  r_{22} \cdot  t_{22}^*.
\end{align*}  
Then the condition $h_j \mid t_{11}$   implies that $h_j^* \mid t_{22}$. Hence $h_j \mid g_{11}$, $h_j \nmid g_{22}$, $h_j^* \nmid g_{11}$ and $h_j^* \mid g_{22}$.  
From condition (III), we have that $\gcd(t_{22},g_{12})=1$, and so $h_j^* \nmid g_{12}$.  
The matrices $G_j'$ and  $G_j''$ are of the following form
$$
G_j'=   \begin{bmatrix}
0 & g_{12}(\xi^{v_j}) \\
0 & g_{22}(\xi^{v_j}) 
\end{bmatrix} ,
G_j''=   \begin{bmatrix}
\bar{g}_{11}(\xi^{v_j}) & \bar{g}_{12}(\xi^{v_j}) \\
0 & 0
\end{bmatrix},
$$
where $g_{22}(\xi^{v_j})\ne0$, $\bar{g}_{11}(\xi^{v_j})\ne 0$, and $\bar{g}_{12}(\xi^{v_j})\ne0$.  Then $C_j' =\langle (0,1) \rangle$, $C_j'' =\langle (\bar{g}_{11}(\xi^{v_j}),\bar{g}_{12}(\xi^{v_j})) \rangle$, and the dual codes are 
$$
C_j'^{\perp_e}  =\langle (1,0) \rangle,C_j''^{\perp_e}  =\langle (-\bar{g}_{12}(\xi^{v_j}),\bar{g}_{11}(\xi^{v_j}))\rangle.
$$ 
It can then be readily checked that $C_j' \cap C_j''^{\perp_e} = \{\mathbf{0}\}$ and $C_j'' \cap C_j'^{\perp_e} = \{\mathbf{0}\}$.

5. $h_j \mid r_{22}$. This condition implies that $h_j \nmid g_{11}$, $h_j \mid g_{22}$, $h_j^* \nmid g_{11}$ and $h_j^* \mid g_{22}$. The matrices $G_j'$ and  $G_j''$ are of the following form
$$
G_j'=   \begin{bmatrix}
g_{11}(\xi^{v_j}) & g_{12}(\xi^{v_j}) \\
0 & 0
\end{bmatrix},
G_j''=   \begin{bmatrix}
\bar{g}_{11}(\xi^{v_j}) & \bar{g}_{12}(\xi^{v_j}) \\
0 & 0
\end{bmatrix},
$$
where $g_{11}(\xi^{v_j})\ne0$, and $\bar{g}_{11}(\xi^{v_j})\ne 0$. Then 
$$C_j' =\langle (g_{11}(\xi^{v_j}), g_{12}(\xi^{v_j})) \rangle , C_j'' =\langle (\bar{g}_{11}(\xi^{v_j}), \bar{g}_{12}(\xi^{v_j})) \rangle,$$
and the dual codes are 
$$
C_j'^{\perp_e}  =\langle (-g_{12}(\xi^{v_j}),g_{11}(\xi^{v_j}))\rangle, C_j''^{\perp_e}  =\langle (-\bar{g}_{12}(\xi^{v_j}),\bar{g}_{11}(\xi^{v_j}))\rangle.
$$  
From condition (IV), we have 
$$\gcd(r_{22}, g_{11}(x)\bar{g}_{11}(x)+g_{12}(x) \bar{g}_{12}(x))=1,$$ 
and since $h_j \mid r_{22}$, it follows that $h_j \nmid g_{11}(x)\bar{g}_{11}(x)+g_{12}(x) \bar{g}_{12}(x)$. Then
$$
g_{11}(\xi^{v_j})\bar{g}_{11}(\xi^{v_j})+g_{12}(\xi^{v_j}) \bar{g}_{12}(\xi^{v_j})) \ne 0.
$$
Hence $C_j' \cap C_j''^{\perp_e} = \{\mathbf{0}\}$ and $C_j'' \cap C_j'^{\perp_e} = \{\mathbf{0}\}$.

6. $h_j \mid t_{22}$. This case is similar to case 4.  \qedhere 

\end{proof}

\begin{proof}[Proof of Theorem \ref{main}] We recall that we want to prove that $C$ is Euclidean LCD if and only if the conditions (I), (II), (III) and (IV) hold. The ``if" direction follows from Lemmata \ref{s1} and \ref{s2}. The ``only if" direction follows from Lemmata \ref{nI}, \ref{nII}, \ref{nIII}, and \ref{nIV}.
\end{proof}

We now describe some examples we obtained from Theorem \ref{main}. Following \cite{wang2024}, we denote
\[
d(n, k)_{ \mathbb{F}_q } = \max\{ d(C) : C \text{ is an } [n, k, d(C)] \text{ linear code over } \mathbb{F}_q \},
\]
and
\[
d_{\mathrm{LCD}}(n, k)_{ \mathbb{F}_q } = \max\{ d(C) : C \text{ is an } [n, k, d(C)] \text{ LCD code over } \mathbb{F}_q \}.
\]
We call a linear code with parameters $[n, k, d(n, k)_{\mathbb{F}_q}]$ an \emph{optimal code}. A comprehensive record of $d(n, k)_{\mathbb{F}_q}$   can be found in the online database codetables.de, see ~\cite{grassltable}. An LCD code with parameters $[n, k, d_{\mathrm{LCD}}(n, k)_{\mathbb{F}_q}]$ is called an \emph{optimal LCD code}.

In \cite{carlet2018}, Carlet et al. showed that any linear code is equivalent to a Euclidean LCD code for $q > 3$. This motivates us to search for binary and ternary Euclidean LCD codes with good parameters. 
Using Theorem~\ref{main}, we construct several optimal binary Euclidean LCD codes, as presented in Tables~\ref{tab1} and~\ref{tab2}.

We use the abbreviation \textbf{BKLC} to indicate that a code is the \emph{best known linear code} according to \cite{grassltable}, although it may not be optimal. Similarly, \textbf{BKLCD} denotes the \emph{best known LCD code} according to \cite{li2024} and \cite{wang2024}, while it also may not be optimal.

The  codes obtained in Table \ref{tab1} are compared with the recently updated tables of bounds on the minimum distance of binary Euclidean LCD codes in \cite{li2024} and \cite{wang2024}. In particular, we refer to \cite[Tables~4~and~5]{li2024} for codes of length $29 \le n \le 40$, and to \cite[Table~6]{wang2024} for length $41 \le n \le 50$.

To the best of our knowledge, there is no existing database of LCD codes with length $n > 50$ in the literature. Consequently, the codes obtained in Table~\ref{tab2} are compared with the linear code database available at  codetables.de, see \cite{grassltable}. 
We note that the BKLC codes from \cite{grassltable} used for comparison are not LCD codes.

\begin{table}[H]
\centering
\tiny
\begin{tabular}{|c|c|c|c|c|}
\hline 
$C$ & $g_{11}(x)$ & $g_{12}(x)$ & $g_{22}(x)$ & Remark \\ 
\hline
$[30,15,7]_2$ & $x^2 + x + 1$ & \makecell{$x^{12} + x^{10} + x^9+x$} & 
\makecell{$(x + 1)(x^4 + x + 1)$\\$(x^4 + x^3 + 1)(x^4 + x^3 + x^2 + x + 1)$} & Optimal LCD, see \cite{li2024} \\
\hline
$[30,16,6]_2$ & $x + 1$ & \makecell{ $(x + 1)$\\$(x^{10} + x^5 + x^3 + x^2)$} & 
\makecell{$(x + 1)(x^4 + x + 1)$\\$(x^4 + x^3 + 1)(x^4 + x^3 + x^2 + x + 1)$} & Optimal LCD, see \cite{li2024} \\
\hline
$[34,17,8]_2$ & $x + 1$ & \makecell{$x^{15} + x^{14} + x^{13} + x^9 + x^6$\\$+\,x^5 + x^4 + x^3 + x^2 + 1$} & 
\makecell{$(x^8 + x^5 + x^4 + x^3 + 1)$\\$(x^8 + x^7 + x^6 + x^4 + x^2 + x + 1)$} & Optimal LCD, see \cite{li2024} \\
\hline
$[34,25,4]_2$ & $x + 1$ & $x^7 + x^6 + x^5 + x^4$ & $x^8 + x^5 + x^4 + x^3 + 1$ & Optimal LCD, see \cite{li2024} \\
\hline
$[34,26,4]_2$ & $1$ & $x^6 + x^5 + x^3 + x^2$ & $x^8 + x^5 + x^4 + x^3 + 1$ & Optimal LCD, see \cite{li2024} \\
\hline
$[42,32,4]_2$ & $x^2 + x + 1$ & \makecell{$(x+1)(x^2+x+1)$\\$(x^4 + x^3 + 1)$} & 
\makecell{$(x^2 + x + 1)$\\$(x^3 + x + 1)(x^3 + x^2 + 1)$} & Optimal LCD, see \cite{wang2024} \\
\hline
$[46,23,10]_2$ & $x + 1$ & \makecell{$1+x^2+x^6+x^8$\\$+x^9+x^{13}+x^{15}+x^{20}$} & 
\makecell{$(x^{11} + x^{9} + x^{7} + x^{6} + x^{5} + x + 1)$\\$(x^{11} + x^{10} + x^{6} + x^{5} + x^{4} + x^{2} + 1)$} & BKLCD, improves \cite{wang2024} \\
\hline
$[50,24,10]_2$ & \makecell{$(x+1)(x^4+x^3$\\$+x^2+x+1)$} & \makecell{$(x+1)(1+x^5+$\\$x^7+x^{14})$} & 
\makecell{$(x+1)(x^{20}+x^{15}$\\$+x^{10}+x^5+1)$} & $d=d_{\text{BKLCD}}$, see \cite{wang2024} \\
\hline
$[50,25,10]_2$ & $x + 1$ & \makecell{$x^{22} + x^{21} + x^{20} + x^{18} + x^{14}$\\$+\,x^{13} + x^{12} + x^4 + x^3 + x$} & 
\makecell{$(x^4 + x^3 + x^2 + x + 1)$\\$(x^{20} + x^{15} + x^{10} + x^5 + 1)$} & $d=d_{\text{BKLCD}}$, see \cite{wang2024} \\
\hline
$[50,28,8]_2$ & $x+1$ & \makecell{$(x+1)(1+x^5$\\$+x^7+x^8)$} & 
\makecell{$(x+1)(x^{20}+x^{15}$\\$+x^{10}+x^5+1)$} & $d=d_{\text{BKLCD}}$, see \cite{wang2024} \\
\hline
\end{tabular}
\caption{Binary quasi-cyclic Euclidean LCD codes from Theorem \ref{main}, length $30 \le n \le 50$.}
\label{tab1}
\end{table}

\begin{table}[H]   
    \centering
\tiny\begin{tabular}{|c|c|c|c|c|}
   \hline 
   $C$  & $g_{11}(x)$ & $g_{12}(x)$ & $g_{22}(x)$ & Comparison with \cite{grassltable}  \\ 
   \hline 
   $[54,22,12]_2$ & \makecell{$(x+1)$\\$(x^6+x^3+1)$}  & 
  \makecell{
$x(x + 1)^3(x^6 + x^3 + 1)$\\
$(x^{10} + x^9 + x^6 + x^3 + x^2 + x + 1)$
}& 
   \makecell{$(x+1)$\\$(x^6 + x^3 + 1)$\\$(x^{18} + x^9 + 1)$} & 
   $d = d_{\text{BKLC}} - 1$ \\

   \hline 
   $[54,25,11]_2$ & \makecell{$(x+1)$\\$(x^2+x+1)$} & 
   \makecell{$(x^2+x+1)$\\$(x^{13} + x^9 + x^7 + x^6 + x^5 + 1)$} & 
   \makecell{$(x^2 + x + 1)$\\$(x^6 + x^3 + 1)$\\$(x^{18} + x^9 + 1)$} & 
   $d = d_{\text{BKLC}} - 1$ \\

   \hline 
   $[54,26,11]_2$ & $x^2 + x + 1$ & 
\makecell{
$x^3(x + 1)(x^2 + x + 1)$\\
$(x^3 + x + 1)(x^6 + x^5 + x^4 + x + 1)$\\
$(x^7 + x^6 + x^3 + x + 1)$
}& 
   \makecell{$(x^2 + x + 1)$\\$(x^6 + x^3 + 1)$\\$(x^{18} + x^9 + 1)$} & 
   $d = d_{\text{BKLC}} - 1$ \\

   \hline 
   $[54,27,11]_2$ & $x^2 + x + 1$ & 
   \makecell{$x^{22} + x^{21} + x^{15} + x^{11} + x^{10}$\\$+\,x^8 + x^7 + x^3 + x^2 + 1$} & 
   \makecell{$(x + 1)$\\$(x^6 + x^3 + 1)$\\$(x^{18} + x^9 + 1)$} & 
   $d = d_{\text{BKLC}}$ \\

   \hline 
   $[54,28,9]_2$ & $1$ & 
   \makecell{$x^{23} + x^{19} + x^{12} + x^{10} + x^9$\\$+\,x^8 + x^7 + x^5 + x^2 + x$} & 
   \makecell{$(x^2 + x + 1)$\\$(x^6 + x^3 + 1)$\\$(x^{18} + x^9 + 1)$} & 
   $d = d_{\text{BKLC}} - 1$ \\

   \hline 
   $[54,29,9]_2$ & $1$ & 
   \makecell{$x^{23} + x^{21} + x^{19} + x^{18} + x^{16}$\\$+\,x^{10} + x^6 + x^4 + x^3 + x^2$} & 
   \makecell{$(x + 1)$\\$(x^6 + x^3 + 1)$\\$(x^{18} + x^9 + 1)$} & 
   $d = d_{\text{BKLC}} - 1$ \\

   \hline 
   $[54,33,6]_2$ & $x^2 + x + 1$ & 
   \makecell{$x^{17} + x^{12} + x^{11} + x^3$} & 
   \makecell{$(x + 1)$\\$(x^{18} + x^9 + 1)$} & 
   $d = d_{\text{BKLC}} - 2$ \\

   \hline 
   $[62,30,10]_2$ & 
   \makecell{$(x+1)$\\$(x^5+x^2+1)$\\$(x^5+x^3+1)$} & 
\makecell{
$(x + 1)^2$\\
$(x^{15} + x^{14} + x^{13} + x^9 + x^8 + x^7$\\ $+ x^6 + x^5 + x^4 + x^3 + x^2 + x + 1)$
} & 
   \makecell{$(x + 1)$\\$(x^5 + x^3 + x^2 + x + 1)$\\$(x^5 + x^4 + x^2 + x + 1)$\\$(x^5 + x^4 + x^3 + x + 1)$\\$(x^5 + x^4 + x^3 + x^2 + 1)$} & 
   $d = d_{\text{BKLC}} - 2$ \\

   \hline 
   $[62,40,8]_2$ & $x + 1$ & 
 \makecell{
$x^3(x + 1)^2$\\
$(x^{14} + x^{13} + x^{12} + x^{11}$\\
$+\,x^{10} + x^9 + x^8 + x^7$\\
$+\,x^4 + x^3 + x^2 + x + 1)$
} & 
   \makecell{$(x + 1)$\\$(x^5 + x^3 + x^2 + x + 1)$\\$(x^5 + x^4 + x^2 + x + 1)$\\$(x^5 + x^4 + x^3 + x + 1)$\\$(x^5 + x^4 + x^3 + x^2 + 1)$} & 
   $d = d_{\text{BKLC}}$ \\

   \hline 
   $[66,30,12]_2$ & 
    \makecell{$(x + 1)$\\$(x^2 + x + 1)$\\$(x^{10} + x^7 + x^5 + x^3 + 1)$}& 
    \makecell{$x^4(x + 1)^6$\\$(x^2 + x + 1)$\\$(x^{10} + x^7 + x^5 + x^3 + 1)$}& 
   \makecell{$(x + 1)$\\$(x^2 + x + 1)$\\$(x^{10} + x^7 + x^5 + x^3 + 1)$\\$(x^{10} + x^9 + x^8 + x^7 + x^6 + x^5$\\$+\,x^4 + x^3 + x^2 + x + 1)$} & 
   $d = d_{\text{BKLC}} - 3$ \\
   \hline
\end{tabular}

\caption{Binary quasi-cyclic Euclidean LCD codes from Theorem \ref{main}, length $54 \le n \le 66.$}
\label{tab2}
\end{table}


We also  describe some examples of  ternary quasi-cyclic Euclidean LCD codes in Table \ref{tabternary}. We note that the codes with length $n=26$ and $n=28$ from \cite{grassltable} used for comparison are not LCD codes.
 
\begin{table}[H]   
    \centering
\tiny\begin{tabular}{|c|c|c|c|c|c|c|c|c|c|}
   \hline $C$  & $g_{11}(x)$& $g_{12}(x)$& $g_{22}(x)$& Remark  \\ 
   \hline $[14,7,6]_3$ & $x + 2$ & \makecell{$2x^5 + 2x^4 + x^3 + 2$} 
   & \makecell{$x^6 + x^5 + x^4 + x^3 + x^2 + x + 1$} 
   & Optimal, see \cite{grassltable} \\
\hline $[16,8,6]_3$ & $x + 1$ & \makecell{$2x^6 + x^5 + 2x^4 + 2x^2$} 
& \makecell{$(x + 2)(x^2 + 1)(x^2 + x + 2)(x^2 + 2x + 2)$} 
& Optimal, see \cite{grassltable} \\

\hline $[22,11,7]_3$ & $x + 2$ & \makecell{$x^9 + 2x^7 + 2x^6 + x^4 + 2x^2$} 
& \makecell{$(x^5 + 2x^3 + x^2 + 2x + 2)$\\$(x^5 + x^4 + 2x^3 + x^2 + 2)$} 
& $d=d_{\text{BKLCD}}$, see \cite{li2024} \\

\hline $[26,13,7]_3$ &\makecell{
$(x + 2)(x^3 + 2x + 2)$
}
& \makecell{$2x^8 + x^6 + 2x^3 + 2x^2$} 
& \makecell{$(x^3 + x^2 + 2)$\\$(x^3 + x^2 + x + 2)(x^3 + 2x^2 + 2x + 2)$} 
& $d = d_{\text{BKLC}}-1$, see \cite{grassltable} \\
\hline $[26,14,7]_3$ & $x^3 + 2x + 2$ & $x^7 + 2x^4 + x + 2$ 
& \makecell{$(x^3 + x^2 + 2)(x^3 + x^2 + x + 2)$\\$(x^3 + 2x^2 + 2x + 2)$} 
& $d = d_{\text{BKLC}}$, see \cite{grassltable} \\
\hline $[26,19,4]_3$ & $x + 2$ & $x^5 + 2x^4 + 2x^3 + x^2 + x + 1$ 
& \makecell{$(x^3 + x^2 + x + 2)$\\$(x^3 + 2x^2 + 2x + 2)$} 
& $d = d_{\text{BKLC}}-1$, see \cite{grassltable}\\
\hline $[28,14,8]_3$ & $x + 2$ & 
\makecell{
$x^3(x + 2)^4$\\
$(x^3 + 2x + 1)$
}
& \makecell{$(x + 2)(x^6 + x^5 + x^4 + x^3 + x^2 + x + 1)$\\$(x^6 + 2x^5 + x^4 + 2x^3 + x^2 + 2x + 1)$} 
& $d = d_{\text{BKLC}}-1$, see \cite{grassltable}\\

   \hline
    \end{tabular}

     \caption{Ternary quasi-cyclic Euclidean LCD codes from Theorem \ref{main}, length $14~\le~n~\le~28.$
    }
    \label{tabternary}
\end{table}





\section{One-generator quasi-cyclic codes of index 2}


In this section, we consider the   special case  when the code $C$ is generated by one element 
 $(g_{11}(x), g_{12}(x))$, where $g_{11}(x) \mid (x^m-1)$.  Let $g(x)=\gcd(g_{11}(x),g_{12}(x))$,
$g_{11}(x)=g(x)g_{11}'(x)$, $g_{12}(x)=g(x)g_{12}'(x)$. Let 
\[
g_{22}(x) = \dfrac{x^m-1}{g_{11}'(x)}.
\]
In view of Lemma \ref{lemma1gen}, we can assume that $C$ is   generated by two elements $(g_{11}(x), \tilde{g}_{12}(x))$ and $(0, g_{22}(x))$ satisfying Conditions  $(\ast)$, where 
  $\tilde{g}_{12}(x)=g_{12}(x) \mod{g_{22}(x)}$.

Furthermore, also by Lemma \ref{lemma1gen}, we have that  $\gcd(g_{11}(x),g_{22}(x))=g(x)$.  We note that the code $C$ can be generated by either the element  $(g_{11}(x),\tilde{g}_{12}(x))$ or the element  $(g_{11}(x),g_{12}(x))$. Hence without loss of generality, from now on we will write $g_{12}(x)$ instead of $\tilde{g}_{12}(x)$.

In the rest of this section, we still maintain the notation  in Section 3. 
We have
\[
\text{lcm}(g_{11},g_{22}) = \dfrac{g_{11}\cdot g_{22}}{\gcd(g_{11},g_{22})} = \dfrac{g \cdot g_{11}' \cdot (x^m-1)}{g \cdot g_{11}'} = x^m-1.
\]
Hence   $l(x)=1$ in this case.

\begin{lemma} \label{1a} If $g$ is self-reciprocal, then the following are equivalent.

1. $\gcd(t_{22},g_{12})=1$. 

2. $\gcd \left( t_{22},  g_{11}\bar{g}_{11}+g_{12} \bar{g}_{12}\right)=1.$
\end{lemma}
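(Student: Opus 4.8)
The plan is to test divisibility by $t_{22}$ one irreducible factor at a time. Recall that $t_{22}\mid x^m-1$, so $t_{22}$ is squarefree, and since $t_{22}=g_{22}'/r_{22}$ with $r_{22}=\gcd(g_{22}',(g_{22}')^*)$, every irreducible factor of $t_{22}$ is non-self-reciprocal. In the one-generator setting $l(x)=1$, so
\[
x^m-1 = g\cdot r_{11}t_{11}\cdot r_{22}t_{22},
\]
a product of pairwise coprime polynomials. Since $g$ is assumed self-reciprocal, $r_{11}$ and $r_{22}$ are self-reciprocal by construction, and $x^m-1$ is self-reciprocal up to a unit, the argument in the proof of Lemma~\ref{nIII} (rewriting $x^m-1$ with $t_{11},t_{22}$ replaced by $t_{11}^*,t_{22}^*$ and comparing factorizations) shows that an irreducible $h$ divides $t_{22}$ if and only if $h^*$ divides $t_{11}$; in particular $t_{11}$ and $t_{22}^*$ are associates. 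Hence every irreducible $h\mid t_{22}$ satisfies $h\nmid g$ and $h^*\mid t_{11}\mid g_{11}$.

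With this I would simplify the second condition. If $h\mid t_{22}$ has root $\alpha$, then $\alpha^m=1$ and $\alpha^{-1}$ is a root of $h^*$, so $\bar{g}_{11}(\alpha)=\alpha^m g_{11}(\alpha^{-1})=g_{11}(\alpha^{-1})=0$ because $h^*\mid g_{11}$. Thus every irreducible factor of $t_{22}$ divides $\bar{g}_{11}$, hence divides $g_{11}\bar{g}_{11}$, so that
\[
\gcd\bigl(t_{22},\,g_{11}\bar{g}_{11}+g_{12}\bar{g}_{12}\bigr)=\gcd\bigl(t_{22},\,g_{12}\bar{g}_{12}\bigr).
\]
Since $t_{22}$ is squarefree, this equals $1$ exactly when both $\gcd(t_{22},g_{12})=1$ and $\gcd(t_{22},\bar{g}_{12})=1$.

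It then remains to see that $\gcd(t_{22},\bar{g}_{12})=1$ holds automatically, so that condition $(2)$ collapses to $\gcd(t_{22},g_{12})=1$, which is condition $(1)$. Passing to reciprocals (using $t_{22}(0)\ne0$ and that $t_{11}$ and $t_{22}^*$ are associates) gives $\gcd(t_{22},\bar{g}_{12})=1\iff\gcd(t_{22}^*,g_{12})=1\iff\gcd(t_{11},g_{12})=1$. The last of these always holds: any irreducible $h\mid t_{11}$ divides $g_{11}$ (as $t_{11}\mid g_{11}$) but not $g$ (by the coprime factorization above), so if $h\mid g_{12}$ then $h\mid\gcd(g_{11},g_{12})=g$, a contradiction. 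This yields the equivalence.

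The step I expect to be the main obstacle is the first one: establishing that $t_{11}$ and $t_{22}$ are reciprocals of one another (up to associates). This is exactly where the hypothesis that $g$ is self-reciprocal is used (together with $l=1$), and it is the point where one must handle the squarefree factorization of $x^m-1$ carefully; everything after that is routine manipulation with reciprocals. One should also note that the final step uses $\gcd(g_{11},g_{12})=g$, which is built into the one-generator description of $C$ adopted in this section.
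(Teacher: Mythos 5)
Your proposal is correct and follows essentially the same route as the paper: the same factorization $x^m-1=g\cdot r_{11}t_{11}\cdot r_{22}t_{22}$ and its reciprocal, the key fact that any irreducible factor of $t_{22}$ has its reciprocal dividing $t_{11}\mid g_{11}$ (hence divides $\bar{g}_{11}$), and the relation $\gcd(g_{11},g_{12})=g$ to dispose of the $\bar{g}_{12}$ term. The only difference is organizational: you show $\gcd(t_{22},\bar{g}_{12})=1$ holds automatically, whereas the paper absorbs that case into a contrapositive argument using the self-reciprocality of $g$; both are valid.
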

\begin{proof} Since $g$ is self-reciprocal, we have 
\begin{align*}
x^m-1 &= g \cdot  r_{11} \cdot t_{11} \cdot r_{22} \cdot t_{22} \\
    &= \alpha \cdot g \cdot r_{11} \cdot t_{11}^* \cdot r_{22} \cdot t_{22}^*,
\end{align*}
for some $\alpha \in F$. In particular, if   $h_j \mid t_{22}$, then $h_j \mid t_{11}^*$. 
We also note that for all $1 \le i \le s$, the irreducible polynomial $f_i$ is self-reciprocal, and so $f_i \nmid t_{22}$.  

1.  If $\gcd(t_{22},g_{12}) \ne 1$, then there exists $h_j$ such that $h_j \mid \gcd(t_{22},g_{12})$.   
Since $h_j \mid t_{22}$, we have that $h_j\mid t_{11}^* $. Then $h_j \mid g_{11}^*$ and so  $h_j \mid \bar{g}_{11}$. Then
$
h_j \mid  (g_{11}\bar{g}_{11}+g_{12} \bar{g}_{12}), 
$
and so $\gcd \left( t_{22},  g_{11}\bar{g}_{11}+g_{12} \bar{g}_{12}\right) \ne 1.$

2. Assume that $\gcd \left( t_{22},  g_{11}\bar{g}_{11}+g_{12} \bar{g}_{12}\right) \ne 1,$ that is    there exists $h_j$ such that 
\[ 
h_j \mid \gcd \left( t_{22},  g_{11}\bar{g}_{11}+g_{12} \bar{g}_{12}\right).
\]
Similar to part 1, the condition $h_j \mid t_{22}$ implies that $h_j \mid \bar{g}_{11}$. Then $h_j \mid g_{12} \bar{g}_{12}$. If $h_j  \mid\bar{g}_{12}$, then since $h_j \mid \bar{g}_{11}$, we also have that $h_j \mid \gcd(\bar{g}_{11}, \bar{g}_{12}) = \bar{g}$. Since $g$ is self-reciprocal, it follows that  $h_j \mid g$, and so $h_j \mid g_{12}$. This shows that $\gcd(t_{22},g_{12}) \ne 1$. \qedhere 
\end{proof}

\begin{lemma} \label{1b} If $g$ is self-reciprocal and $\gcd(t_{22},g_{12})=1$, then 
\[
\gcd \left(g_{11}',  g_{11}\bar{g}_{11}+g_{12} \bar{g}_{12}\right)=1.
\]
\end{lemma}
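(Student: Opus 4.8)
The plan is to show that every irreducible factor of $g_{11}'$ fails to divide $g_{11}\bar g_{11} + g_{12}\bar g_{12}$, using the fact that $l=1$ in the one-generator case and that $g$ is self-reciprocal. Recall $g_{11}' = r_{11}t_{11}$ with $r_{11}$ self-reciprocal, and $x^m-1 = g\cdot r_{11}\cdot t_{11}\cdot r_{22}\cdot t_{22}$, so an irreducible factor $a$ of $g_{11}'$ divides exactly one of $r_{11}$ or $t_{11}$. I would split into these two cases.

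First, suppose $a \mid r_{11}$, so $a$ is self-reciprocal, $a \mid g_{11}$, $a \nmid g_{22}$, and $a \mid \bar g_{11}$. Since $a \nmid g_{22} = g\,g_{22}'$ we get $a \nmid g$, hence $a \nmid g_{12}$ (as $g = \gcd(g_{11},g_{12})$), and since $a$ is self-reciprocal also $a \nmid \bar g_{12}$. Now $a \mid g_{11}\bar g_{11}$ but $a \nmid g_{12}\bar g_{12}$, so $a \nmid g_{11}\bar g_{11} + g_{12}\bar g_{12}$. Second, suppose $a \mid t_{11}$. From $x^m-1 = \alpha\,g\cdot r_{11}\cdot t_{11}^*\cdot r_{22}\cdot t_{22}^*$ (using that $g,r_{11},r_{22}$ are self-reciprocal), $a \mid t_{11}$ forces $a^* \mid t_{22}$. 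Hypothesis (III), $\gcd(t_{22},g_{12}) = 1$, gives $a^* \nmid g_{12}$; but then by Lemma 4.4 (with $h_j = a^*$, noting $a$ is not self-reciprocal since it divides $t_{11}$), or directly by the argument in its proof, $\gcd(t_{22}, g_{11}\bar g_{11}+g_{12}\bar g_{12}) = 1$ as well, so $a^* \nmid g_{11}\bar g_{11}+g_{12}\bar g_{12}$, i.e. $a \mid t_{11}$ means $a \mid \overline{t_{22}}$ up to a scalar, and applying $\bar{\ }$ to the divisibility $a^* \nmid g_{11}\bar g_{11}+g_{12}\bar g_{12}$ (which equals its own transpose up to scalar) gives $a \nmid g_{11}\bar g_{11}+g_{12}\bar g_{12}$.

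More carefully in the second case: $g_{11}\bar g_{11} + g_{12}\bar g_{12}$ is fixed, up to a unit, by the transpose map $\bar{\ }$, since $\overline{g_{11}\bar g_{11}} = \bar g_{11} g_{11}$ (up to the factor $x^{m-\deg}$ bookkeeping, but the root set is symmetric under $\xi \mapsto \xi^{-1}$). Therefore the set of roots of $g_{11}\bar g_{11}+g_{12}\bar g_{12}$ among the $m$-th roots of unity is invariant under inversion, so $a^* \nmid g_{11}\bar g_{11}+g_{12}\bar g_{12}$ is equivalent to $a \nmid g_{11}\bar g_{11}+g_{12}\bar g_{12}$. Combining the two cases, no irreducible factor of $g_{11}'$ divides $g_{11}\bar g_{11}+g_{12}\bar g_{12}$, which is the claim.

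The main obstacle is the bookkeeping around the transpose map $\bar{\ }$ versus the reciprocal $*$: one must be careful that "self-reciprocal with respect to $*$" and "root set symmetric under inversion" line up correctly, and that Lemma 4.4 is genuinely applicable with $h_j$ taken to be the reciprocal partner $a^*$ of the factor $a$ of $t_{11}$. Once the symmetry of the root set of $g_{11}\bar g_{11}+g_{12}\bar g_{12}$ under $\xi\mapsto\xi^{-1}$ is cleanly established, both cases reduce to divisibility facts already extracted in the proofs of Lemmata 4.4, \ref{nIII}, and \ref{nIV}.
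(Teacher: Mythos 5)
Your overall strategy---checking that no irreducible factor $a$ of $g_{11}'=r_{11}t_{11}$ divides $g_{11}\bar{g}_{11}+g_{12}\bar{g}_{12}$, split according to $a\mid r_{11}$ or $a\mid t_{11}$---does lead to the statement, and your second case (via $a^*\mid t_{22}$, Lemma \ref{1a}, and the invariance of the root set of $g_{11}\bar{g}_{11}+g_{12}\bar{g}_{12}$ under $\xi\mapsto\xi^{-1}$) is sound. But the first case contains a genuine error: from $a\mid r_{11}$ you conclude that ``$a$ is self-reciprocal'' and use this to pass from $a\nmid g_{12}$ to $a\nmid\bar{g}_{12}$. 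A self-reciprocal polynomial need not have self-reciprocal irreducible factors: $r_{11}=\gcd(g_{11}',g_{11}'^*)$ is only closed under reciprocation as a product, so it may contain a reciprocal pair $h_jh_j^*$ with $h_j^*\neq h_j$. For such a factor $a=h_j$ your justification of $a\nmid\bar{g}_{12}$ breaks down, since $a\nmid g_{12}$ says nothing about $a^*$. The gap is easy to close: if $a\mid\bar{g}_{12}$ then $a^*\mid g_{12}$; but $a^*\mid r_{11}\mid g_{11}'$ because $r_{11}$ is self-reciprocal, and $\gcd(g_{11}',g_{12})=1$ (since $g=\gcd(g_{11},g_{12})$, $g_{11}=g\,g_{11}'$ and $x^m-1$ is squarefree), a contradiction. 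With this repair your first case is fine.

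For comparison, the paper's proof avoids both the case split on $r_{11}$ versus $t_{11}$ and the appeal to Lemma \ref{1a}: it proves $\gcd(g_{11}',g_{12})=1$ and $\gcd(g_{11}',\bar{g}_{12})=1$ (the latter by exactly the reciprocal-pair argument above, together with the observation that a factor $h_j$ of $t_{11}$ has $h_j^*\mid t_{22}$, which is where hypothesis (III) enters), deduces $\gcd(g_{11}',g_{12}\bar{g}_{12})=1$, and then uses $g_{11}'\mid g_{11}\mid g_{11}\bar{g}_{11}$ to conclude. Your second case would also simplify this way: once you know $a\nmid g_{12}$ and $a^*\nmid g_{12}$ (hence $a\nmid\bar{g}_{12}$), you already have $a\nmid g_{12}\bar{g}_{12}$ while $a\mid g_{11}\bar{g}_{11}$, so neither Lemma \ref{1a} nor the root-set symmetry of $g_{11}\bar{g}_{11}+g_{12}\bar{g}_{12}$ is actually needed.
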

\begin{proof} 
Since $g$ is self-reciprocal, we have 
\begin{align*}
x^m-1 &= g \cdot  r_{11} \cdot t_{11} \cdot r_{22} \cdot t_{22} \\
    &= \alpha \cdot g \cdot r_{11} \cdot t_{11}^* \cdot r_{22} \cdot t_{22}^*,
\end{align*}
for some $\alpha \in F$. In particular, if   $h_j^* \nmid t_{22}$ for some $j$, then $h_j \nmid t_{11}$. Since $g=\gcd(g_{11},g_{12})$ and $g_{11}=g \cdot g_{11}'$, it follows that $\gcd( g_{11}', g_{12})=1$. 

1.  We  show that $\gcd( g_{11}', \bar{g}_{12})=1$.  Suppose that there exists an irreducible factor $a(x)$ of $x^m-1$ such that $a \mid \gcd( g_{11}', \bar{g}_{12})$. 
If $a=f_i$ for some $i$, then  since $f_i$ is self-reciprocal, $f_i \mid \bar{g}_{12}$ implies that  $f_i \mid g_{12}$.
On the other hand, $f_i \mid g_{11}'$ and $\gcd( g_{11}', g_{12})=1$ imply that $f_i \nmid g_{12}$, a contradiction. 

If $a=h_j$ for some $j$, then $h_j \mid \bar{g}_{12}$ implies that $h_j^* \mid g_{12}$. Since $\gcd(t_{22},g_{12})=1$, we have that $h_j^* \nmid t_{22}$. But then $h_j \nmid t_{11},$ implying  $h_j \nmid g_{11}',$ also a contradiction. Hence, $\gcd( g_{11}', \bar{g}_{12})=1$.

2.    The conditions $\gcd( g_{11}', g_{12})=1$ and $\gcd( g_{11}', \bar{g}_{12})=1$ imply that
\[
\gcd( g_{11}', g_{12}\bar{g}_{12})=1.
\]
Furthermore, since $g_{11}'\mid g_{11}$, we   obtain
\[
  \gcd \left(g_{11}',  g_{11}\bar{g}_{11}+g_{12} \bar{g}_{12}\right)=1. \qedhere
\]
\end{proof}

 \begin{theorem} \label{main2}  Let $C$ be a quasi-cyclic code generated by one element $(g_{11}(x),g_{12}(x))$, where $g_{11}(x)\mid (x^m-1)$. Let $g(x)=\gcd(g_{11}(x),g_{12}(x))$. Then $C$ is Euclidean LCD if and only if 
  \begin{equation} \label{eq1}
  \gcd \left(\dfrac{x^m-1}{g(x)},  g_{11}(x)\bar{g}_{11}(x)+g_{12}(x) \bar{g}_{12}(x)\right)=1.
  \end{equation}
 \end{theorem}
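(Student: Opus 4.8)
\textbf{Proof proposal for Theorem \ref{main2}.}

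The plan is to derive Theorem \ref{main2} as a consequence of Theorem \ref{main} together with Lemmata \ref{1a} and \ref{1b}. Recall that in the one-generator case we have arranged, via Lemma \ref{lemma1gen}, that $C$ is generated by $(g_{11}(x),g_{12}(x))$ and $(0,g_{22}(x))$ with $g_{22}(x)=(x^m-1)/g_{11}'(x)$, and we have already observed that this forces $\operatorname{lcm}(g_{11},g_{22})=x^m-1$, hence $l(x)=1$. So condition (II) of Theorem \ref{main} is automatically satisfied, and I only need to track conditions (I), (III), (IV). The strategy is to show that, under the standing hypotheses, each of (I), (III), (IV) is equivalent to (or subsumed by) the single condition \eqref{eq1}.

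First I would handle the ``only if'' direction. Suppose $C$ is Euclidean LCD; then (I)--(IV) hold by Theorem \ref{main}. Condition (I) gives that $g$ is self-reciprocal, so $\bar g = \alpha g$ for some scalar, and hence $\gcd$ computations are insensitive to replacing $g$ by $\bar g$. Now $x^m-1 = g\cdot g_{11}'\cdot g_{22}'$ in this setting; I would verify that with $g_{22}=(x^m-1)/g_{11}'$ one has $g_{22}' = (x^m-1)/(g\,g_{11}') = g_{22}/g$, and moreover $(x^m-1)/g = g_{11}'\cdot g_{22}'$. Since $g_{22}' = r_{22}t_{22}$ and the self-reciprocal part $r_{22}$ of $g_{22}'$ can be related to the self-reciprocal part of $g_{11}'$ (because $g_{11}'$ and $g_{22}'$ are, up to reciprocation, the ``complementary'' factors once $g$ is removed and $g$ is self-reciprocal), I would argue that $\gcd((x^m-1)/g,\ g_{11}\bar g_{11}+g_{12}\bar g_{12})=1$ is equivalent to the conjunction of $\gcd(r_{22},\ g_{11}\bar g_{11}+g_{12}\bar g_{12})=1$, $\gcd(g_{11}',\ g_{11}\bar g_{11}+g_{12}\bar g_{12})=1$, plus a piece coming from $t_{22}$. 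Condition (IV) is exactly the $r_{22}$ piece; Lemma \ref{1b} supplies the $g_{11}'$ piece from (I) and (III); and Lemma \ref{1a} converts (III) into $\gcd(t_{22},\ g_{11}\bar g_{11}+g_{12}\bar g_{12})=1$, which is the $t_{22}$ piece. Since $(x^m-1)/g$ is the product (up to scalar) of $g_{11}'$, $r_{22}$, and $t_{22}$ — this factorization is where I must be careful — combining these gives \eqref{eq1}.

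For the ``if'' direction, assume \eqref{eq1}. I would first extract (I): if $g$ were not self-reciprocal there is an irreducible $h_j$ with $h_j\mid g$ and $h_j^*\nmid g$; then $h_j^*\mid (x^m-1)/g$, and I would show $h_j^*$ divides $g_{11}\bar g_{11}+g_{12}\bar g_{12}$ (because $h_j\mid g\mid g_{11},g_{12}$ forces $h_j^*\mid \bar g_{11},\bar g_{12}$), contradicting \eqref{eq1}; so (I) holds. With (I) in hand, $(x^m-1)/g$ factors as a scalar times $g_{11}'\cdot r_{22}\cdot t_{22}$, so \eqref{eq1} immediately yields $\gcd(r_{22},\ g_{11}\bar g_{11}+g_{12}\bar g_{12})=1$, which is (IV), and also $\gcd(t_{22},\ g_{11}\bar g_{11}+g_{12}\bar g_{12})=1$; the latter gives (III) by Lemma \ref{1a}. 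Since (II) is free, Theorem \ref{main} concludes that $C$ is Euclidean LCD.

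The main obstacle I anticipate is the bookkeeping in the step ``$(x^m-1)/g = (\text{scalar})\cdot g_{11}'\cdot r_{22}\cdot t_{22}$.'' One must justify that, once $g=\gcd(g_{11},g_{12})=\gcd(g_{11},g_{22})$ is removed and $g$ is self-reciprocal, the remaining irreducible factors of $x^m-1$ split cleanly into the factor $g_{11}'$ of $g_{11}$ and the factor $g_{22}' = r_{22}t_{22}$ of $g_{22}$, with $g_{11}'$ and $g_{22}'$ coprime — this uses $\operatorname{lcm}(g_{11},g_{22})=x^m-1$ and $\gcd(g_{11}',g_{22}')=1$, which follows since $g=\gcd(g_{11},g_{22})$. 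The subtlety is that the decomposition of $g_{22}'$ into self-reciprocal part $r_{22}$ and the rest $t_{22}$ interacts with $g_{11}'$ only through the reciprocal pairs, and one needs $\gcd(r_{11},\cdot)$ not to appear — but since $r_{11}$ divides $g_{11}'$, Lemma \ref{1b}'s conclusion $\gcd(g_{11}',\ g_{11}\bar g_{11}+g_{12}\bar g_{12})=1$ already covers the $r_{11}$ (and $t_{11}$) contributions, so no separate condition on $r_{11}$ or $t_{11}$ is needed. Making this accounting airtight — i.e. that \eqref{eq1} with the single modulus $(x^m-1)/g$ captures exactly conditions (III) and (IV) and nothing is lost — is the crux.
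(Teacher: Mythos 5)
Your proposal is correct and follows essentially the same route as the paper: reduce to Theorem \ref{main} with $l(x)=1$ killing condition (II), use Lemmata \ref{1a} and \ref{1b} to convert (I) and (III) into coprimality of $g_{11}'$ and $t_{22}$ with $g_{11}\bar g_{11}+g_{12}\bar g_{12}$, combine with (IV) via the factorization $x^m-1=g\cdot g_{11}'\cdot r_{22}\cdot t_{22}$ (valid once $g$ is self-reciprocal and since $x^m-1$ is squarefree), and for the converse first extract (I) by the same $h_j\mid g$, $h_j^*\nmid g$ contradiction before splitting \eqref{eq1} into (III) and (IV). The "bookkeeping" you flag as the crux is exactly the step the paper handles the same way, so no genuinely different ideas are involved.
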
  
\begin{proof} We recall from  Lemma \ref{lemma1gen}   we can assume that the code $C$ is generated by two elements 
$
(g_{11}(x),\tilde{g}_{12}(x))
$
and $(0,g_{22}(x))$ satisfying Conditions  $(\ast)$,  where
\[
g_{22}(x)= \dfrac{x^m-1}{g_{11}'(x)}.
\]
Furthermore, $\gcd(g_{11}(x),g_{22}(x))=g(x)$.
By Theorem \ref{main}, $C$ is Euclidean LCD if and only if conditions (I), (II), (III) and (IV) hold. Since $l(x)=1$, condition (II) holds trivially. 

1. We first show that if (I), (III) and (IV) hold, then \eqref{eq1} holds.
From Lemmata \ref{1a} and \ref{1b}, conditions (I) and (III) imply that 
\[
\gcd \left( g_{11}',  g_{11}\bar{g}_{11}+g_{12} \bar{g}_{12}\right)=1,
\]
and
\[
\gcd \left( t_{22},  g_{11}\bar{g}_{11}+g_{12} \bar{g}_{12}\right)=1.
\]
Together with conditions (IV), and the polynomials $g_{11}', r_{22}, t_{22}$ are pairwise relatively prime, we have that
\[
\gcd \left( g_{11}' r_{22} t_{22},  g_{11}\bar{g}_{11}+g_{12} \bar{g}_{12}\right)=1.
\]
This is condition \eqref{eq1}, since $x^m-1 = g \cdot g_{11}' \cdot r_{22} \cdot t_{22}$ under the assumption (I) that $g$ is self-reciprocal.

2. We now show that if \eqref{eq1} holds, then (I), (III) and (IV) hold. If $g$ is not self-reciprocal, then    there exists $h_j$ such that $h_j\mid g$ and  $h_j^*  \nmid g$. Since $h_j \mid g$, we have that $h_j^* \mid g^*$, which implies that $h_j^* \mid \bar{g}_{11}$ and $h_j^* \mid \bar{g}_{12}$.
Then
\[
h_j^* \mid   \gcd \left(\dfrac{x^m-1}{g},  g_{11}\bar{g}_{11}+g_{12} \bar{g}_{12}\right),
\]
contradicting \eqref{eq1}. Hence (I) holds. 
We then have 
$
x^m-1  = g   \cdot g_{11}' \cdot r_{22} \cdot t_{22}.
$
Now condition \eqref{eq1} becomes
\[
\gcd \left( g_{11}' \cdot r_{22} \cdot t_{22},  g_{11}\bar{g}_{11}+g_{12} \bar{g}_{12}\right)=1.
\]
Therefore,
\[
\gcd \left(  r_{22},  g_{11}\bar{g}_{11}+g_{12} \bar{g}_{12}\right)=1,
\]
which is condition (IV), and 
\[
\gcd \left( t_{22},  g_{11}\bar{g}_{11}+g_{12} \bar{g}_{12}\right)=1,
\]
which is equivalent to condition (III),  by Lemma \ref{1a}.
\end{proof}

This result is consistent with that in \cite{guneri2023} and \cite{guan2023}.  We also  describe some examples of  binary and ternary one-generator quasi-cyclic Euclidean LCD codes in Table \ref{tab1gen}.

\vspace{1 cm}
\begin{table}[H]   
    \centering
\tiny\begin{tabular}{|c|c|c|c|c|}
   \hline $C$  & $g_{11}(x)$ & $g_{12}(x)$ & Remark  \\ 

  \hline $[30,13,8]_2$ & $x^2 + x + 1$ & \makecell{$x^{14} + x^{10} + x^7 + x^6 + x^5 + x^4 + x + 1$} & Optimal LCD, see \cite{li2024}\\
    \hline $[30,14,8]_2$ & $x + 1$ & \makecell{$x^{14} + x^{13} + x^{12} + x^6 + x^5 + x^3 + x + 1$} & Optimal LCD, see \cite{li2024}\\
     \hline $[34,16,8]_2$ & $x + 1$ & \makecell{$x^{16} + x^{14} + x^{13} + x^{10} + x^9 + x^5 + x^2 + x$} & $d=d_{\text{BKLC}}$, see \cite{grassltable}\\ 
     
            \hline $[42,18,10]_2$ & $x^3 + 1$ & \makecell{$x^{19} + x^{18} + x^{17} + x^{16} + x^{12}$\\$+\,x^{11} + x^{10} + x^9 + x^3 + x$} & $d=d_{\text{BKLCD}}$, see \cite{wang2024} \\
     \hline $[42,19,10]_2$ & $x^2 + x + 1$ & \makecell{$x^{20} + x^{19} + x^{18} + x^{17} + x^{16}+ x^{12} $\\ $+ x^{11} + x^8 + x^7 + x^5 + x^3 +x^2 + x + 1$} & $d=d_{\text{BKLCD}}$, see \cite{wang2024}\\
         \hline $[42,20,10]_2$ & $x + 1$ & \makecell{$x^{19} + x^{18} + x^{17} + x^{13} + x^{12}$ \\ $+ x^9 + x^8 + x^7 + x + 1$} & $d=d_{\text{BKLCD}}$, see \cite{wang2024}\\

   \hline $[16,7,6]_3$ & $x + 1$ & \makecell{$2x^6 + x^5 + x^3 + 2x^2 + 2x$} & Optimal, see \cite{grassltable}\\
 
\hline $[20,9,7]_3$ & $x + 1$ & \makecell{$2x^8 + x^7 + 2x^6 + 2x^5 + x^3$} &  $d=d_{\text{BKLCD}}$, see \cite{li2024} \\

\hline $[22,10,8]_3$ & $x + 2$ & \makecell{$x^{10} + 2x^8 + x^7 + 2x^6 + x^5$\\$+\,2x^4 + 2x^2 + x$} &  Optimal LCD, see \cite{li2024} \\

   \hline
    \end{tabular}
   \caption{Binary and ternary one-generator quasi-cyclic Euclidean LCD codes from Theorem~\ref{main2}. 
    }
    \label{tab1gen}
\end{table}

\section{Symplectic LCD quasi-cyclic codes of index 2}
Assume that all the notations are the same as in the previous sections. First, we recall some definitions and results, for details see \cite[Section V]{ezerman2025}. 
 For $x=(x_1|x_2)$ and $y=(y_1|y_2)$ in $F^{2m}$, where $x_i,y_i\in F^m$ for $i=1,2$, we have 
 
 $$\langle x,y\rangle_s=\langle x_1,y_2\rangle_e-\langle x_2,y_1\rangle_e.$$
 Define $\tau : F^{2m}\to F^{2m}$ as $(x_1|x_2)\mapsto (x_2|-x_1)$, where $x_1,x_2\in F^m$. Then, we have 
 
 $$\langle (x_1|x_2), (y_1|y_2)\rangle_s=-\langle \tau ((x_1|x_2)),(y_1|y_2)\rangle_e. $$
 
 From the above relation, it is easy to see that the symplectic dual $C^{\perp_s}$ of a QC code $C$ of length $2m$ and index $2$ satisfies $$C^{\perp_s}=\tau(C^{\perp_e})=\tau(C)^{\perp_e}.$$

 Let $C$ be a QC code of length $2m$ and index $2$ with CRT decomposition given in \eqref{crtC}. By  extending the map $\tau$  canonically to the vector spaces $(F_i)^2,(H_j')^2 , (H_j'')^2$ and applying the maps component-wise to \eqref{crtCdual}, we obtain  symplectic dual of $C$ (see \cite[Proposition V.1]{ezerman2025}).
 \begin{proposition}\label{slcdconst}
 	Let $C$ be a QC code with CRT decomposition as given in \eqref{crtC}. Then its symplectic dual $C^{\perp_s}$ is given by \begin{equation} \label{crtCsdual}
 		C^{\perp_s} = \left(\bigoplus_{i=1}^{s} C_i^{\perp_{s_i}} \right) 
 		\oplus
 		\left(\bigoplus_{j=1}^{p} \left(  C_j''^{\perp_s}
 		\oplus   C_j'^{\perp_s}   \right) \right),
 	\end{equation} 
 	where ${C^{\perp_{s_i}}}=\tau(C_i)^{\perp_h}$ (see Definition \eqref{hinnerproduct} for $\perp_h$) for each $1\le i \le s$ and $ \perp_s$ denotes the usual symplectic dual on $(H_j')^2 \cong (H_j'')^2$ for all $1\leq j\leq p$.
 \end{proposition}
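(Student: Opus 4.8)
The plan is to derive Proposition \ref{slcdconst} directly from the Euclidean decomposition of $C^{\perp_e}$ already recorded in \eqref{crtCdual}, together with the identity $C^{\perp_s}=\tau(C^{\perp_e})$ established just above the statement. The only genuine content is to check that the CRT isomorphism intertwines the global map $\tau:F^{2m}\to F^{2m}$ with a \emph{component-wise} map on each constituent, and then to identify what that component-wise map does on $(F_i)^2$, on $(H_j')^2$ and on $(H_j'')^2$.

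First I would observe that $\tau$ is $F[x]/\langle x^m-1\rangle$-linear: writing a codeword of the index-$2$ QC code as a pair $(a_1(x),a_2(x))\in R^2$ under the map $\varphi$, the map $\tau$ sends $(a_1(x)\,|\,a_2(x))$ to $(a_2(x)\,|\,-a_1(x))$, which is multiplication by the matrix $\left(\begin{smallmatrix}0&1\\-1&0\end{smallmatrix}\right)$ over $R$. Since the CRT isomorphism $R^2\cong\bigl(\bigoplus_i F_i^2\bigr)\oplus\bigl(\bigoplus_j((H_j')^2\oplus(H_j'')^2)\bigr)$ is an isomorphism of $R$-modules, it carries this matrix action to the \emph{same} matrix $\left(\begin{smallmatrix}0&1\\-1&0\end{smallmatrix}\right)$ acting over each local ring $F_i$, $H_j'$, $H_j''$; that is, $\tau$ descends to the analogously defined involution $\tau_i$ on $F_i^2$ and $\tau$ on $(H_j')^2$ and $(H_j'')^2$. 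This is the step where one must be slightly careful, because the pairing $\langle\cdot,\cdot\rangle_h$ on $F_i^2$ is the twisted one (induced by $x\mapsto x^{-1}$, see Remark \ref{remarkH} and \eqref{hinnerproduct}), so the relation $\langle (x_1|x_2),(y_1|y_2)\rangle_s=-\langle\tau(x_1|x_2),(y_1|y_2)\rangle_e$ must be transported to each $F_i$-component as $\langle\cdot,\cdot\rangle_{s_i}=-\langle\tau_i(\cdot),\cdot\rangle_h$, which is exactly the definition ${C_i}^{\perp_{s_i}}=\tau(C_i)^{\perp_h}$ in the statement; on the $H_j'$, $H_j''$ components the pairing is the ordinary Euclidean one and one recovers the ordinary symplectic dual.

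With the action of $\tau$ identified component-wise, the proof assembles quickly: apply $\tau$ to each summand of \eqref{crtCdual}. On the $i$-th summand, $\tau(C_i^{\perp_h})$ — using $\tau_i^{-1}$ adjointness — equals $(\tau_i(C_i))^{\perp_h}=C_i^{\perp_{s_i}}$. On the $j$-th block, $\tau\bigl(C_j''^{\perp_e}\oplus C_j'^{\perp_e}\bigr)=C_j''^{\perp_s}\oplus C_j'^{\perp_s}$, because $\tau$ restricted to a single $2m$-long pair built from an $H_j'$-coordinate and its $H_j''$-partner is precisely the involution defining the symplectic dual there. Collecting the pieces gives exactly \eqref{crtCsdual}.

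The main obstacle — and really the only subtle point — is bookkeeping the two twists simultaneously: the $x\mapsto x^{-1}$ twist already present in $\perp_h$ on the self-reciprocal constituents $F_i$, and the coordinate swap $\tau$. One must confirm that these do not interfere, i.e. that $\tau$ commutes with the bar map $a(x)\mapsto\bar a(x)$ entrywise (it does, since $\tau$ only permutes and signs the two coordinates while bar acts within each coordinate), so that $\tau$ genuinely descends to each $F_i^2$ without mixing the $H_j'$ and $H_j''$ blocks. Once that commutation is noted, the rest is a direct transcription of the Euclidean decomposition through $\tau$, and I would cite \cite[Proposition V.1]{ezerman2025} for the analogous statement in the general index-$\ell$ setting to keep the argument short.
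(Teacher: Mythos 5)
Your argument is correct and is essentially the route the paper itself takes (and attributes to \cite[Proposition V.1]{ezerman2025}): use $C^{\perp_s}=\tau(C^{\perp_e})$, note that $\tau$ is given by an $R$-linear matrix action so the CRT isomorphism carries it to the same component-wise involution on each $(F_i)^2$, $(H_j')^2$, $(H_j'')^2$, and then push $\tau$ through each summand of \eqref{crtCdual}, using that $\tau$ is an isometry for $\langle\cdot,\cdot\rangle_h$ and $\langle\cdot,\cdot\rangle_e$ so that $\tau(D^{\perp})=\tau(D)^{\perp}$ on each constituent. Your write-up supplies the details the paper leaves to the cited reference, with no gap.
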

Using the above characterization, we have the following characterization of symplectic LCD QC codes in terms of constituents (see \cite[Eq. V.7]{ezerman2025}). 
\begin{theorem}
		Let $C$ be a QC code with CRT decomposition as given in \eqref{crtC}. Then $C$ is symplectic LCD code if and only if $C\cap C^{\perp_{s_i}}=\{\mathbf{0}\}$ for all $1\leq i\leq s$, and $C_j'\cap C_j''^{\perp_s}=\{\mathbf{0}\}=C_j'^{\perp_s}\cap C_j''$ for all $1\leq j\leq p$.
\end{theorem}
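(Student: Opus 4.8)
The plan is to obtain the criterion as an immediate consequence of Proposition~\ref{slcdconst}, in exactly the way that Theorem~\ref{qclcd} follows from the Euclidean decomposition \eqref{crtCdual}. (Here the first condition should be read as $C_i\cap C_i^{\perp_{s_i}}=\{\mathbf 0\}$ for every $1\le i\le s$, with $C_i^{\perp_{s_i}}=\tau(C_i)^{\perp_h}$ as in Proposition~\ref{slcdconst}.)

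First I would recall that, by definition, $C$ is symplectic LCD exactly when $C\cap C^{\perp_s}=\{\mathbf 0\}$. Transporting everything through the CRT isomorphism
$$R^2\;\cong\;\Bigl(\bigoplus_{i=1}^{s}F_i^2\Bigr)\oplus\Bigl(\bigoplus_{j=1}^{p}\bigl((H_j')^2\oplus(H_j'')^2\bigr)\Bigr),$$
the relation \eqref{crtC} exhibits $C$ as a ``block-diagonal'' subspace whose $i$-th block is $C_i\subseteq F_i^2$ and whose $j$-th block is $C_j'\oplus C_j''\subseteq(H_j')^2\oplus(H_j'')^2$, while Proposition~\ref{slcdconst} (formula \eqref{crtCsdual}) exhibits $C^{\perp_s}$ as block-diagonal with respect to the \emph{same} decomposition, with $i$-th block $C_i^{\perp_{s_i}}\subseteq F_i^2$ and $j$-th block $C_j''^{\perp_s}\oplus C_j'^{\perp_s}$, where, after the identification $(H_j')^2\cong(H_j'')^2$ induced by the isomorphism $f\mapsto\bar f$, the factor $C_j''^{\perp_s}$ sits in the $(H_j')^2$-slot and $C_j'^{\perp_s}$ in the $(H_j'')^2$-slot.

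Next I would invoke the elementary fact that, for an internal direct sum $V=\bigoplus_k V_k$ and subspaces $A=\bigoplus_k A_k$, $B=\bigoplus_k B_k$ with $A_k,B_k\subseteq V_k$, one has $A\cap B=\bigoplus_k(A_k\cap B_k)$; this is immediate from the uniqueness of components in a direct sum. Applying it with $A=C$ and $B=C^{\perp_s}$ yields
$$C\cap C^{\perp_s}\;=\;\Bigl(\bigoplus_{i=1}^{s}\bigl(C_i\cap C_i^{\perp_{s_i}}\bigr)\Bigr)\oplus\Bigl(\bigoplus_{j=1}^{p}\bigl((C_j'\cap C_j''^{\perp_s})\oplus(C_j''\cap C_j'^{\perp_s})\bigr)\Bigr),$$
and since a direct sum of subspaces is zero if and only if each summand is zero, $C\cap C^{\perp_s}=\{\mathbf 0\}$ is equivalent to the conjunction of $C_i\cap C_i^{\perp_{s_i}}=\{\mathbf 0\}$ for all $i$ and $C_j'\cap C_j''^{\perp_s}=\{\mathbf 0\}=C_j''\cap C_j'^{\perp_s}$ for all $j$, which is the claim.

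There is no genuine obstacle here; the only step requiring care is the bookkeeping in the $j$-blocks. Because Proposition~\ref{slcdconst} interchanges the two factors $(H_j')^2$ and $(H_j'')^2$ (the dual of the first constituent $C_j'$ reappears in the second slot, and vice versa), one must pair $C_j'$ with $C_j''^{\perp_s}$ and $C_j''$ with $C_j'^{\perp_s}$ rather than, say, $C_j'$ with $C_j'^{\perp_s}$; this is precisely the phenomenon already present in the Euclidean criterion of Theorem~\ref{qclcd}. Once this is pinned down, it would be worth adding a sentence, parallel to the discussion preceding Theorem~\ref{main}, noting that each constituent $C_i,C_j',C_j''$ and its symplectic dual is computable from the $2\times 2$ generator matrices $G_i,G_j',G_j''$, so that the criterion can be checked explicitly in examples.
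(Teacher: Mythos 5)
Your proposal is correct and follows essentially the same route as the paper, which states the result as an immediate consequence of Proposition~\ref{slcdconst} (citing \cite{ezerman2025}): transporting $C$ and $C^{\perp_s}$ through the CRT decomposition and intersecting block by block, with the crucial swap pairing $C_j'$ with $C_j''^{\perp_s}$ and $C_j''$ with $C_j'^{\perp_s}$, exactly as in the Euclidean criterion of Theorem~\ref{qclcd}. Your reading of the statement's first condition as $C_i\cap C_i^{\perp_{s_i}}=\{\mathbf{0}\}$ is also the intended one.
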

Now, we give a polynomial characterization of symplectic LCD QC codes of index $2$ and for one-generator QC codes.

\begin{theorem} \label{slcdmain} 
Let $C$ be a quasi-cyclic code   of index $2$. Let $(g_{11}(x), g_{12}(x))$ and $(0, g_{22}(x))$ be the generators of $C$ satisfying Conditions $(\ast)$.   
		Then $C$ is symplectic LCD if and only if all of the following conditions are true:
		\begin{enumerate}[(I)]
			\item  $g$ is self-reciprocal.
			\item  $l$ is self-reciprocal. 
			\item  $r_{11}(x)=1$. 
			\item $\gcd(r_{22}(x), g_{11}(x)\bar{g}_{12}(x)-g_{12}(x) \bar{g}_{11}(x))=1$.
		\end{enumerate}
	\end{theorem}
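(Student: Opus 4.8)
The strategy is to mirror the Euclidean case (Theorem \ref{main}) step by step, replacing the Hermitian/Euclidean constituent conditions with the symplectic ones via Proposition \ref{slcdconst}. First I would observe that the argument for conditions (I) and (II) is essentially unchanged: for $h_j \mid g$ one still gets $C_j' = C_j'' = \{\mathbf{0}\}$ and their symplectic duals become the full $2$-dimensional spaces, forcing $h_j^* \mid g$; and for the $l$ case one again compares dimensions of $C_j'$ and $C_j''^{\perp_s}$, using that a symplectic dual on a $2$-dimensional space has complementary dimension just as the Euclidean one does. So (I) and (II) carry over verbatim from Lemmata \ref{nI} and \ref{nII} and their converses from Lemmata \ref{s1}, \ref{s2} restricted to those cases.

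The real change is in conditions (III) and (IV), which come from the $t_{11}/t_{22}$ and $r_{11}/r_{22}$ cases. Here I would compute the symplectic dual of a one-generator constituent $\langle (a,b)\rangle$ in $(H_j')^2$: since the symplectic form on $F^2$ pairs $(a,b)$ with $(c,d)$ as $ad-bc$, the symplectic dual of $\langle(a,b)\rangle$ is $\langle(a,b)\rangle$ itself (it is always self-orthogonal in dimension $2$ over a field when the form is alternating). This is the key structural difference from the Euclidean/Hermitian setting and it is why the symplectic condition (III) becomes $r_{11}=1$ rather than a gcd condition on $g_{12}$: in the $h_j \mid t_{11}$ / $h_j \mid t_{22}$ cases one constituent is $\langle(0,1)\rangle$ and the other is $\langle(*,*)\rangle$ with nonzero first coordinate, and the intersection $C_j' \cap C_j''^{\perp_s}$ is automatically zero — so $t_{11}, t_{22}$ impose no obstruction, leaving only the $r_{11}$ case, where both constituents are of the form $\langle(0,*)\rangle$ and hence equal to $\langle(0,1)\rangle$, giving $C_j' = C_j''^{\perp_s}$ and a nonzero intersection. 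Thus $h_j \mid r_{11}$ must be forbidden, i.e. $r_{11}=1$. For (IV), in the $h_j \mid r_{22}$ case (and the self-reciprocal analogue $f_i \mid r_{22}$, using $\tau$ and $\perp_h$) both constituents are $\langle(g_{11}(\xi^{v_j}),g_{12}(\xi^{v_j}))\rangle$-type with nonzero first coordinate, and one checks that $\tau(C_j'')$ equals $\langle(-\bar g_{12}(\xi^{v_j}), \bar g_{11}(\xi^{v_j}))\rangle$; the intersection $C_j' \cap \tau(C_j'')^{\perp_e}$ — equivalently the vanishing of the symplectic pairing $g_{11}\bar g_{12} - g_{12}\bar g_{11}$ at $\xi^{v_j}$ — is what must be avoided, giving condition (IV).

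Concretely I would package this as a sequence of lemmata paralleling \ref{nI}--\ref{s2}: two lemmata showing LCD $\Rightarrow$ (I), (II) (copying the Euclidean proofs with $\perp_s$ in place of $\perp_e$/$\perp_h$, which is legitimate since only dimension counts and the $f_i\mid g$, $h_j\mid g$, $h_j\mid l$ structure matter); one lemma showing that failure of (III) (some $h_j \mid r_{11}$) forces $C_j'=C_j''^{\perp_s}=\langle(0,1)\rangle$; one lemma showing that failure of (IV) forces $C_i = C_i^{\perp_{s_i}}$ or $C_j'' = C_j'^{\perp_s}$; and then two "sufficiency" lemmata going through all the cases $h_j \mid g$, $h_j\mid l$, $h_j \mid r_{11}$, $h_j \mid t_{11}$, $h_j \mid r_{22}$, $h_j \mid t_{22}$ (and the four $f_i$-cases) and checking the symplectic intersections vanish. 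The main obstacle — really the only subtle point — is getting the symplectic dual of a $1$-dimensional constituent right and tracking the $\tau$-twist correctly through the CRT decomposition in the $f_i$ (self-reciprocal) cases, where $\perp_{s_i} = \tau(\cdot)^{\perp_h}$ rather than a plain symplectic dual; once the identity $\tau(\langle(a,b)\rangle)^{\perp_h} = \langle(\overline{-b},\bar a)\rangle$ (with the $x\mapsto x^{-1}$ conjugation) is established, the four-case analysis for the $f_i$ part is routine, and the $h_j$ part reduces to elementary $2\times 2$ linear algebra over the field $H_j'$.
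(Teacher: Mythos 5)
Your overall plan coincides with the paper's: conditions (I) and (II) are carried over from Lemmata \ref{nI} and \ref{nII} by dimension counts, the decisive new observation is that a one-dimensional constituent is its own symplectic dual (the form is alternating), which removes the $t_{11},t_{22}$ obstructions and turns condition (III) into $r_{11}=1$, and condition (IV) arises from the $r_{22}$-case through the pairing $g_{11}\bar g_{12}-g_{12}\bar g_{11}$. For the $h_j$-constituents your analysis is correct and is exactly what the paper does in Lemma \ref{nIIIs} and the lemma following it.

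However, the identity you yourself flag as the crux for the self-reciprocal constituents is wrong, and the error is not cosmetic. With $\tau(a,b)=(b,-a)$ and $\langle \mathbf{c},\mathbf{d}\rangle_h=\sum_k c_k\bar d_k$, one has $\tau(\langle(a,b)\rangle)^{\perp_h}=\langle(\bar a,\bar b)\rangle$, not $\langle(-\bar b,\bar a)\rangle$; the latter is the formula for the untwisted dual $\langle(a,b)\rangle^{\perp_h}$ used in the Euclidean section, i.e.\ you have effectively dropped the $\tau$-twist. (Check: $(b,-a)$ paired with $(\bar a,\bar b)$ gives $b\bar{\bar a}-a\bar{\bar b}=ba-ab=0$, while paired with $(-\bar b,\bar a)$ it gives $-b^2-a^2$, which is nonzero in general.) This has two consequences. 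First, in the case $f_i\mid r_{22}$ your formula produces the obstruction $g_{11}\bar g_{11}+g_{12}\bar g_{12}=0$ at $\xi^{u_i}$, i.e.\ the Euclidean-type condition, rather than the symplectic pairing $g_{11}\bar g_{12}-g_{12}\bar g_{11}$ that appears in condition (IV). Second, in the case $f_i\mid r_{11}$ — which your (III)-lemma, as sketched, does not treat at all, since you only consider $h_j\mid r_{11}$ — your formula gives $C_i=\langle(0,1)\rangle$ and $C_i^{\perp_{s_i}}=\langle(1,0)\rangle$, so you would wrongly conclude that there is no obstruction, whereas with the correct identity $C_i^{\perp_{s_i}}=\langle(0,1)\rangle=C_i$ and the code fails to be symplectic LCD, exactly as in case 1 of Lemma \ref{nIIIs}. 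So both the necessity of (III) for self-reciprocal factors of $r_{11}$ and the $f_i$-part of (IV) break under your identity; replacing it by $C_i^{\perp_{s_i}}=\langle(\bar g_{11}(\xi^{u_i}),\bar g_{12}(\xi^{u_i}))\rangle$, as the paper does via Proposition \ref{slcdconst}, repairs the argument, and the remainder of your case analysis then goes through as planned.
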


	The proof will follow in manner similar to that of Theorem \ref{main}. For instance, Conditions $(I)$ and $(II)$ will follow similar to Lemmata \ref{nI} and \ref{nII} using dimension arguments. We prove the necessary Conditions $(III)$ and $(IV)$ in the following lemmata. The sufficient part follows using arguments similar to those of Lemmata  \ref{s1} and \ref{s2}.

\begin{lemma}\label{nIIIs}
Assume that $(I)$ and $(II)$ hold. If $(III)$ does not hold, then $C$ is not symplectic LCD.
\end{lemma}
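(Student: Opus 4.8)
The plan is to prove the contrapositive within the constituent characterization of symplectic LCD codes recalled just above: if $(III)$ fails, that is $r_{11}(x)\neq 1$, I will exhibit one constituent whose intersection with its symplectic-type dual is nonzero, so that $C$ cannot be symplectic LCD. Conditions $(I)$ and $(II)$ are available by hypothesis but are not actually needed for this direction.

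First I would unpack the divisibility information carried by $a\mid r_{11}$ for an irreducible factor $a(x)$ of $r_{11}(x)$. Writing $g_{11}=g\,g_{11}'$, $g_{12}=g\,g_{12}'$, $g_{22}=g\,g_{22}'$ and using $\gcd(q,m)=1$ (so $x^m-1$ is squarefree) together with $g=\gcd(g_{11},g_{12})=\gcd(g_{11},g_{22})$, one gets $\gcd(g,g_{11}')=\gcd(g_{11}',g_{12}')=\gcd(g_{11}',g_{22}')=1$. Since $a\mid r_{11}\mid g_{11}'$, this yields $a\mid g_{11}$ but $a\nmid g$, $a\nmid g_{12}$ and $a\nmid g_{22}$.

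Next I would split on whether $a$ is self-reciprocal. If $a=f_i$, then $g_{11}(\xi^{u_i})=0$ while $g_{12}(\xi^{u_i})\neq 0$ and $g_{22}(\xi^{u_i})\neq 0$, so $G_i$ has zero first column and a nonzero second column, whence $C_i=\langle(0,1)\rangle$. Applying $\tau(a,b)=(b,-a)$ gives $\tau(C_i)=\langle(1,0)\rangle$, whose Hermitian dual (computed exactly as in Remark \ref{remarkH}) is $\langle(0,1)\rangle$; hence $C_i^{\perp_{s_i}}=\tau(C_i)^{\perp_h}=\langle(0,1)\rangle=C_i$, so $C_i\cap C_i^{\perp_{s_i}}\neq\{\mathbf 0\}$. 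If instead $a=h_j$, then since $r_{11}$ is self-reciprocal we also have $h_j^*\mid r_{11}$, so $h_j$ and $h_j^*$ both divide $g_{11}$ while neither divides $g$, $g_{12}$, or $g_{22}$. Using $\bar g_{11}(\xi^{v_j})=g_{11}(\xi^{-v_j})$ together with $h_j^*(\xi^{-v_j})=0$ gives $g_{11}(\xi^{v_j})=\bar g_{11}(\xi^{v_j})=0$, while $g_{12}(\xi^{v_j})\neq 0$ and $\bar g_{12}(\xi^{v_j})\neq 0$; hence $C_j'=C_j''=\langle(0,1)\rangle$. A direct computation with the length-$2$ symplectic form $\langle(a,b),(c,d)\rangle_s=ad-bc$ then gives $C_j''^{\perp_s}=\langle(0,1)\rangle=C_j'$, so $C_j'\cap C_j''^{\perp_s}\neq\{\mathbf 0\}$. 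In either case the constituent characterization is violated, and $C$ is not symplectic LCD.

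I do not expect a genuine conceptual obstacle. The only point requiring care is the bookkeeping at the constituent level — correctly evaluating $\tau(C_i)^{\perp_h}$ for the self-reciprocal constituents and keeping the sign conventions of $\langle\cdot,\cdot\rangle_s$ straight — together with the routine squarefreeness argument that pins down which of $g$, $g_{12}$, $g_{22}$ the factor $a$ divides.
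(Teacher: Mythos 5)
Your proposal is correct and follows essentially the same route as the paper: pick an irreducible $a\mid r_{11}$, split according to whether $a$ is self-reciprocal, show the relevant constituents all equal $\langle(0,1)\rangle$ and coincide with their symplectic-type duals, and invoke the constituent characterization. One small inaccuracy: in this two-generator setting $g=\gcd(g_{11},g_{22})$ with merely $g\mid g_{12}$ (the identity $g=\gcd(g_{11},g_{12})$ is the one-generator convention of Section 4), so your claims $a\nmid g_{12}$ and the nonvanishing of $g_{12}(\xi^{u_i})$, $g_{12}(\xi^{v_j})$, $\bar g_{12}(\xi^{v_j})$ are unjustified; they are, however, harmless, since $g_{22}$ (resp.\ $\bar g_{22}$) not vanishing at the relevant root already forces $C_i$, $C_j'$, $C_j''$ to be $\langle(0,1)\rangle$, which is exactly how the paper argues. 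Your observation that (I) and (II) are not actually needed for this implication is also accurate.
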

\begin{proof}
	Since  $g$ and $l$ are self-reciprocal, 
	\[
	x^m-1  = g \cdot l \cdot r_{11} \cdot t_{11} \cdot r_{22} \cdot t_{22}.  
	\] 
	Furthermore, the  reciprocal of  $x^m-1$ is  $-(x^m-1)$, so we can rewrite 
	\begin{align*}
		x^m-1   &= \alpha \cdot g \cdot l \cdot r_{11} \cdot t_{11}^*    \cdot  r_{22} \cdot  t_{22}^*,
	\end{align*} 
	for some $\alpha 
	\in F$. Assume that $(III)$ is not true, that is, there exists an irreducible  $a(x)$ such that $a(x)$ divide $r_{11}$.\\
	1. If $a(x)$ is self-reciprocal, then $a(x)=f_i(x)$ for some $1\leq i\leq s$. As $f_i(x)$ divides $r_{11}$, we have $f_i(x)\mid g_{11}(x)$ and $f_i(x)\nmid g_{22}(x)$. Consequently, $G_i$ is of the form
	\[
	G_i= \begin{bmatrix}
		0 & g_{12}(\xi^{u_i}) \\
		0 & g_{22}(\xi^{u_i}) 
	\end{bmatrix},
	\]
where 	$g_{22}(\xi^{u_i})\neq 0$. Hence $C_i=\langle (0,1)\rangle=C_i^{\perp_{s_i}}$, that is, $C_i\cap C_i^{\perp_{s_i}}\neq\{\mathbf{0}\}$. Thus, $C$ is not symplectic LCD.\\
2. If $a(x)$ is not self-reciprocal, then $a(x)=h_j(x)$ for some $1\leq j\leq p$. As $h_j(x)$ divides $r_{11}$ and $r_{11}$ is self-reciprocal, therefore $h_j^*(x)$ divides $r_{11}$. It follows that $h_j(x)\mid g_{11}(x)$, $h_j(x)\nmid g_{22}(x)$,  $h_j^*(x)\mid g_{11}(x)$ and  $h_j^*(x)\nmid g_{22}(x)$. Then $G_j'$ and $G_j''$ are of the form
$$G_j'=   \begin{bmatrix}
	0 & g_{12}(\xi^{v_j}) \\
	0 & g_{22}(\xi^{v_j}) 
\end{bmatrix} ,
G_j''=   \begin{bmatrix}
	0 & \bar{g}_{12}(\xi^{v_j}) \\
	0 & \bar{g}_{22}(\xi^{v_j}) 
\end{bmatrix},$$
  where	$g_{22}(\xi^{v_j})\neq 0$ and $\bar{g}_{22}(\xi^{v_j})\neq 0$. Hence $C_j'=\langle (0,1)\rangle=C_j'^{\perp_{s}}$ and $C_j''=\langle (0,1)\rangle=C_j''^{\perp_{s}}$. It follows that $C_j'\cap C_j''^{\perp_{s}}\neq \{\mathbf{0}\}\neq C_j'^{\perp_{s}}\cap C_j''$. Thus, $C$ is not symplectic LCD.
\end{proof}

\begin{lemma}
	Assume $(I)$ and $(II)$ hold. If $(IV)$ is not true, then $C$ is not symplectic LCD.
\end{lemma}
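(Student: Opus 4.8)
The plan is to mirror the structure of Lemma \ref{nIV}: assume condition $(IV)$ fails and produce a constituent whose intersection with its appropriate symplectic dual is nonzero, thereby violating the constituent-wise characterization of symplectic LCD codes. So I would start by picking an irreducible factor $a(x)$ of $x^m-1$ with
\[
a \mid \gcd\bigl(r_{22},\, g_{11}\bar g_{12}-g_{12}\bar g_{11}\bigr).
\]
Since $a \mid r_{22}$ and $r_{22}\mid g_{22}'$ while $\gcd(g_{11}',g_{22}')=1$, the divisibility pattern forced is $a\nmid g_{11}$, $a\mid g_{22}$, and because $r_{22}$ is self-reciprocal, also $a^*\nmid g_{11}$ (equivalently $a\nmid g_{11}^*$, so $a\nmid \bar g_{11}$... careful: I want $\bar g_{11}(\xi)\ne 0$ at the relevant root) and $a\mid g_{22}^*$. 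This is exactly the divisibility set used in case 1 of Lemma \ref{nIV} and case 5 of Lemma \ref{s2}, so the constituent matrices collapse to a single nonzero row.

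Then I split into the two usual cases. If $a=f_i$ is self-reciprocal, then $G_i$ has the form $\begin{bmatrix} g_{11}(\xi^{u_i}) & g_{12}(\xi^{u_i})\\ 0 & 0\end{bmatrix}$ with $g_{11}(\xi^{u_i})\ne 0$, so $C_i=\langle (g_{11}(\xi^{u_i}),g_{12}(\xi^{u_i}))\rangle$ is one-dimensional. I would compute $C_i^{\perp_{s_i}}=\tau(C_i)^{\perp_h}$: applying $\tau$ sends the generator to $(g_{12}(\xi^{u_i}),-g_{11}(\xi^{u_i}))$, and the Hermitian dual of that line is $\langle (\bar g_{11}(\xi^{u_i}),\bar g_{12}(\xi^{u_i}))\rangle$ up to scalar — so $C_i^{\perp_{s_i}}=\langle(\bar g_{11}(\xi^{u_i}),\bar g_{12}(\xi^{u_i}))\rangle$. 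The two lines coincide precisely when the $2\times 2$ determinant $g_{11}(\xi^{u_i})\bar g_{12}(\xi^{u_i})-g_{12}(\xi^{u_i})\bar g_{11}(\xi^{u_i})$ vanishes, which is exactly what $a=f_i \mid g_{11}\bar g_{12}-g_{12}\bar g_{11}$ gives us. Hence $C_i\cap C_i^{\perp_{s_i}}=C_i\ne\{\mathbf 0\}$. If $a=h_j$ is not self-reciprocal, then by the same reasoning $C_j'=\langle(g_{11}(\xi^{v_j}),g_{12}(\xi^{v_j}))\rangle$, $C_j''=\langle(\bar g_{11}(\xi^{v_j}),\bar g_{12}(\xi^{v_j}))\rangle$, and the usual symplectic dual of the line $\langle(a,b)\rangle$ in $(H_j')^2$ is $\langle(a,b)\rangle$ itself when... actually the symplectic dual of $\langle(a,b)\rangle$ is the set of $(u,v)$ with $av-bu=0$, i.e. again $\langle(a,b)\rangle$; so I would check that $C_j'^{\perp_s}=C_j'$ and that $C_j''\subseteq C_j'^{\perp_s}$ iff the determinant $g_{11}(\xi^{v_j})\bar g_{12}(\xi^{v_j})-g_{12}(\xi^{v_j})\bar g_{11}(\xi^{v_j})=0$, which holds since $h_j\mid g_{11}\bar g_{12}-g_{12}\bar g_{11}$. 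Thus $C_j''\cap C_j'^{\perp_s}\ne\{\mathbf 0\}$ and by the constituent characterization $C$ is not symplectic LCD.

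The one point that needs care, and which I expect to be the main obstacle, is getting the symplectic/Hermitian dual of a one-dimensional constituent exactly right — in particular tracking the twist introduced by $\tau$ and the bar map in the definition $C_i^{\perp_{s_i}}=\tau(C_i)^{\perp_h}$, and confirming that the vanishing of the \emph{skew}-form $g_{11}\bar g_{12}-g_{12}\bar g_{11}$ (rather than the symmetric form $g_{11}\bar g_{11}+g_{12}\bar g_{12}$ appearing in Theorem \ref{main}) is the correct degeneracy condition. I would verify this by writing the $1$-dimensional constituent as the row space of $(\alpha,\beta)$ and noting that its symplectic dual (after the $\tau$-twist, in the Hermitian-$\perp_h$ sense for $F_i$) is spanned by $(\bar\alpha,\bar\beta)$, so membership $(\bar\alpha,\bar\beta)\in\langle(\alpha,\beta)\rangle$ is governed exactly by $\alpha\bar\beta-\beta\bar\alpha=0$. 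Once that identity is pinned down, the rest is the same bookkeeping of divisibility cases as in Lemmata \ref{nIV} and \ref{s2}, and the remaining cases (where $a$ divides $g$, $l$, $t_{11}$, or $t_{22}$) are handled exactly as in the sufficiency lemmas for the symplectic case.
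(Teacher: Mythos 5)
Your proposal is correct and follows essentially the same route as the paper: the same divisibility consequences of $a \mid r_{22}$, the same split into the self-reciprocal case $a=f_i$ and the pair case $a=h_j$, the same one-row constituent matrices, and the same observation that vanishing of the skew form $g_{11}\bar g_{12}-g_{12}\bar g_{11}$ at the relevant root forces $C_i=C_i^{\perp_{s_i}}$ (resp. $C_j''=C_j'^{\perp_s}$), contradicting the constituent-wise LCD criterion. Your explicit computation of $C_i^{\perp_{s_i}}=\tau(C_i)^{\perp_h}=\langle(\bar g_{11}(\xi^{u_i}),\bar g_{12}(\xi^{u_i}))\rangle$ is exactly what the paper invokes via Proposition \ref{slcdconst}, so there is no substantive difference.
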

\begin{proof}
Similar to the proof of Lemma \ref{nIIIs}, if $g$ and $l$ are self-reciprocal, we have
\[
x^m-1  = g \cdot l \cdot r_{11} \cdot t_{11} \cdot r_{22} \cdot t_{22}.
\] 
Assume that condition (IV) is not true, that is, there exists an irreducible factor  $a(x)$ of $x^m-1$ such that 
$$a \mid \gcd(r_{22}, g_{11}\bar{g}_{12}-g_{12} \bar{g}_{11}).$$
Since $a \mid r_{22}$, we have that  $a \nmid g_{11}$,  $a \mid g_{22}$,  $a \nmid g_{11}^*$,  and $a \mid g_{22}^*$.   We have two cases depending on whether $a$ is self-reciprocal. 

1. $a$ is self-reciprocal, that is, $a=f_i$ for some $i$. Since $f_i  \nmid g_{11}$ and $f_i \mid g_{22}$, $G_i$ is of the form
\[
G_i=  \begin{bmatrix}
	g_{11}(\xi^{u_i}) & g_{12}(\xi^{u_i}) \\
	0 & 0
\end{bmatrix},
\] 
where $g_{11}(\xi^{u_i}) \ne 0$. Then   $C_i =\langle (g_{11}(\xi^{u_i}), g_{12}(\xi^{u_i}))\rangle$ is a 1-generator code, whose dual is $C_i^{\perp_{s_i}} =\langle (\bar{g}_{11}(\xi^{u_i}),\bar{g}_{12}(\xi^{u_i}))\rangle$ (by definition of $\perp_{s_i}$, see Proposition \ref{slcdconst}).   But since $f_i \mid  g_{11}\bar{g}_{12}-g_{12} \bar{g}_{11}$, 
\[
g_{11}(\xi^{u_i})\bar{g}_{12}(\xi^{u_i})-g_{12}(\xi^{u_i}) \bar{g}_{11}(\xi^{u_i}) =0, 
\]
and hence $C_i = C_i^{\perp_{s_i}}$. Hence $C_i \cap  C_i^{\perp_{s_i}} \ne \{\mathbf{0}\}$, and  $C$ is not symplectic LCD. 

2. $a$ is not self-reciprocal, that is, $a=h_j$ for some $j$. With the same reasoning as in case 1, we have that $C_j' =\langle (g_{11}(\xi^{v_j}), g_{12}(\xi^{v_j}))\rangle$ and $C_j'^{\perp_s} =\langle (g_{11}(\xi^{v_j}), g_{12}(\xi^{v_j}))\rangle$. Also, $C_j'' =\langle (\bar{g}_{11}(\xi^{v_j}), \bar{g}_{12}(\xi^{v_j}))\rangle$. 
But since $h_j \mid g_{11}\bar{g}_{12}-g_{12} \bar{g}_{11}$, 
\[
g_{11}(\xi^{v_j})\bar{g}_{12}(\xi^{v_j})-g_{12}(\xi^{v_j}) \bar{g}_{11}(\xi^{v_j}) =0, 
\]
and so $C_j'' = C_j'^{\perp_s}$. Hence $C_j'' \cap  C_j'^{\perp_s} \ne \{\mathbf{0}\}$, and hence $C$ is not symplectic LCD. 
\end{proof}


Next, we give the characterization for one-generator QC codes of index $2$. The proof is similar to that of Theorem \ref{main2} using  Theorem \ref{slcdmain} above. Therefore, we omit the proof.
\begin{theorem} \label{slcd1}  
Let $C$ be a quasi-cyclic code generated by one element $(g_{11}(x),g_{12}(x))$, where $g_{11}(x)\mid (x^m-1)$. Let $g(x)=\gcd(g_{11}(x),g_{12}(x))$. Then $C$ is symplectic LCD if and only if 
	\begin{equation} 
		\gcd \left(\dfrac{x^m-1}{g(x)},  g_{11}(x)\bar{g}_{12}(x)-g_{12}(x) \bar{g}_{11}(x)\right)=1.
	\end{equation}
\end{theorem}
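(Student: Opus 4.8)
The plan is to mirror the proof of Theorem \ref{main2}, using Theorem \ref{slcdmain} in place of Theorem \ref{main}. By Lemma \ref{lemma1gen}, we may assume $C$ is generated by $(g_{11}(x),\tilde g_{12}(x))$ and $(0,g_{22}(x))$ with $g_{22}(x)=(x^m-1)/g_{11}'(x)$ and $\gcd(g_{11}(x),g_{22}(x))=g(x)$, and we write $g_{12}$ for $\tilde g_{12}$. As computed at the start of Section 4, $\operatorname{lcm}(g_{11},g_{22})=x^m-1$, so $l(x)=1$ and condition (II) of Theorem \ref{slcdmain} is automatic. Thus $C$ is symplectic LCD if and only if (I), (III), (IV) hold, where (III) reads $r_{11}(x)=1$ and (IV) reads $\gcd(r_{22},\,g_{11}\bar g_{11}'... )$— more precisely $\gcd(r_{22},\,g_{11}\bar g_{12}-g_{12}\bar g_{11})=1$. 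The goal is to show this triple of conditions is equivalent to the single gcd condition \eqref{eq1} with $g_{11}\bar g_{11}+g_{12}\bar g_{12}$ replaced by $g_{11}\bar g_{12}-g_{12}\bar g_{11}$.

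For the forward direction, I would assume (I), (III), (IV). Write $D(x)=g_{11}(x)\bar g_{12}(x)-g_{12}(x)\bar g_{11}(x)$. Since $g$ is self-reciprocal, $x^m-1 = g\cdot g_{11}'\cdot r_{22}\cdot t_{22}$ (using $r_{11}=1$ forces $g_{11}'=t_{11}$, and one should double-check how $g_{22}'$ sits here: since $\gcd(g_{11},g_{22})=g$ we have $\gcd(g_{11}',g_{22}')=1$, and the self-reciprocal part $r_{22}$ together with $t_{22}$ partitions $g_{22}'$, while $g_{11}'$ being coprime to $g_{22}'$ is coprime to both). I then need the analogue of Lemmata \ref{1a} and \ref{1b}: namely that (I) together with (III) [$r_{11}=1$] forces $\gcd(g_{11}',D)=1$ and $\gcd(t_{22},D)=1$. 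The key observation is that $r_{11}=1$ means $g_{11}'$ shares no self-reciprocal factor and, more usefully, $g_{11}'$ is coprime to $\bar g_{11}'$; combined with $g=\gcd(g_{11},g_{12})$ one argues that any irreducible $a\mid g_{11}'$ cannot divide $D$ by tracking divisibility of $g_{11},g_{12},\bar g_{11},\bar g_{12}$ through the reciprocal factorization $x^m-1=\alpha\, g\, g_{11}'^*\, r_{22}\, t_{22}^*$. A similar argument handles $t_{22}$. Since $g_{11}', r_{22}, t_{22}$ are pairwise coprime, the three gcd conditions combine to $\gcd(g_{11}' r_{22} t_{22}, D)=1$, which is exactly \eqref{eq1} (restated for $D$) because $(x^m-1)/g = g_{11}' r_{22} t_{22}$ under (I).

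For the reverse direction, I would assume $\gcd\big((x^m-1)/g,\,D\big)=1$ and recover (I), (III), (IV). If $g$ were not self-reciprocal, pick $h_j\mid g$ with $h_j^*\nmid g$; then $h_j^*\mid g^*$ divides both $\bar g_{11}$ and $\bar g_{12}$, hence divides $D$, while $h_j^*\mid (x^m-1)/g$ (since $h_j^*\nmid g$), contradicting the hypothesis — so (I) holds, and $x^m-1 = g\, g_{11}'\, r_{22}\, t_{22}$. The hypothesis then becomes $\gcd(g_{11}' r_{22} t_{22}, D)=1$, which immediately gives $\gcd(r_{22},D)=1$ (condition (IV)) and $\gcd(g_{11}',D)=1$. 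For (III), I must show $r_{11}=1$: suppose some irreducible $a\mid r_{11}$; then $a\mid g_{11}'$ and $a^*\mid g_{11}'$ (as $r_{11}$ is self-reciprocal), so $a$ and $a^*$ both divide $g_{11}$, hence $\bar g_{11}$ is divisible by $a$ (since $a^*\mid g_{11}$ gives $a\mid \bar g_{11}$), and $a\mid g_{11}$, so $a\mid D = g_{11}\bar g_{12}-g_{12}\bar g_{11}$; but $a\mid g_{11}'\mid (x^m-1)/g$, contradicting the hypothesis. Hence $r_{11}=1$, i.e. (III).

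The main obstacle I expect is the bookkeeping in proving the analogue of Lemma \ref{1b} — namely that $r_{11}=1$ (rather than the Euclidean-case hypothesis $\gcd(t_{22},g_{12})=1$) is the right substitute and that it genuinely forces $\gcd(g_{11}',D)=1$; one must carefully check that no self-reciprocal irreducible $f_i$ slips through and that the skew term $g_{11}\bar g_{12}-g_{12}\bar g_{11}$ (as opposed to the symmetric $g_{11}\bar g_{11}+g_{12}\bar g_{12}$) behaves the same way under these divisibility arguments. Everything else is a direct transcription of the Section 4 proof with the symplectic constituent-duality formulas from Proposition \ref{slcdconst} and Theorem \ref{slcdmain} in place of the Euclidean ones.
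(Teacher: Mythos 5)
Your proposal is correct and follows exactly the route the paper itself indicates for Theorem \ref{slcd1} (whose proof it omits as being the transcription of the Theorem \ref{main2} argument with Theorem \ref{slcdmain} in place of Theorem \ref{main}): since $l(x)=1$ in the one-generator setting, one shows that conditions (I), (III), (IV) of Theorem \ref{slcdmain} are together equivalent to the single gcd condition. Your substitutes for Lemmata \ref{1a} and \ref{1b} are the right ones and do go through: writing $D=g_{11}\bar{g}_{12}-g_{12}\bar{g}_{11}$, under (I) and $r_{11}=1$ any irreducible $a\mid g_{11}'$ satisfies $a\nmid g_{12}$ and $a\nmid \bar{g}_{11}$, while any irreducible $a\mid t_{22}$ satisfies $a\mid \bar{g}_{11}$, $a\nmid g_{11}$ and (via $a^*\mid g_{11}'$ and $\gcd(g_{11}',g_{12})=1$) $a\nmid \bar{g}_{12}$, so in either case $a\nmid D$, which combined with (IV) and the factorization $(x^m-1)/g=g_{11}'\,r_{22}\,t_{22}$ yields the stated equivalence.
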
 

This result is consistent with that in  \cite{guan2023}. 
Restricting to the case $r_{11}(x)=r_{22}(x)=1$ in Theorem \ref{slcdmain}, we can get the following sufficient condition for a code to be symplectic LCD. 
\begin{corollary} \label{cor-sympl} 
Let $C$ be a quasi-cyclic code  of index $2$. Let $(g_{11}(x), g_{12}(x))$ and $(0, g_{22}(x))$ be the generators of $C$ satisfying Conditions $(\ast)$. If the polynomial $g(x) = \gcd(g_{11}(x),g_{22}(x)) = \gcd(g_{11}(x),g_{11}^*(x)) = \gcd(g_{22}(x),g_{22}^*(x))$ is self-reciprocal and $g_{11}(x)g_{22}(x)$ is self-reciprocal, then $C$ is symplectic LCD.    
\end{corollary}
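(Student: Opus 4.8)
The plan is to deduce this from the polynomial characterization in Theorem~\ref{slcdmain} by verifying that its four conditions (I)--(IV) all follow from the stated hypotheses. Throughout, write $g=\gcd(g_{11},g_{22})$ and keep the factorizations $g_{11}=g\,g_{11}'$, $g_{22}=g\,g_{22}'$, $g_{11}'=r_{11}t_{11}$, $g_{22}'=r_{22}t_{22}$ with $r_{11}=\gcd(g_{11}',(g_{11}')^{*})$ and $r_{22}=\gcd(g_{22}',(g_{22}')^{*})$ as introduced in Section~3; since $\gcd(q,m)=1$, every polynomial here is a squarefree divisor of $x^m-1$, so all gcd and lcm identities below may be read up to units of $F$.

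Condition (I) is exactly one of the hypotheses. For condition (III), I would show $r_{11}=1$: because $g$ is self-reciprocal, $g_{11}^{*}=g^{*}(g_{11}')^{*}$ is an associate of $g\,(g_{11}')^{*}$, so using the identity $\gcd(ga,gb)=g\gcd(a,b)$ we get that $\gcd(g_{11},g_{11}^{*})$ is an associate of $g\cdot\gcd(g_{11}',(g_{11}')^{*})=g\,r_{11}$; the hypothesis $\gcd(g_{11},g_{11}^{*})=g$ then forces $r_{11}$ to be a unit, i.e.\ $r_{11}=1$. The same argument applied to $g_{22}$, using $\gcd(g_{22},g_{22}^{*})=g$, gives $r_{22}=1$, whence condition (IV), namely $\gcd\bigl(r_{22},\,g_{11}\bar g_{12}-g_{12}\bar g_{11}\bigr)=1$, holds trivially.

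It remains to check condition (II), that $l=(x^m-1)/\text{lcm}(g_{11},g_{22})$ is self-reciprocal. Here $\text{lcm}(g_{11},g_{22})=g_{11}g_{22}/\gcd(g_{11},g_{22})=g_{11}g_{22}/g$; since the reciprocal map is multiplicative and this division is exact, a quotient of two self-reciprocal polynomials is again self-reciprocal, so $\text{lcm}(g_{11},g_{22})$ is self-reciprocal by the hypotheses on $g_{11}g_{22}$ and on $g$. As $x^m-1$ is self-reciprocal, the same observation applied to $l=(x^m-1)/\text{lcm}(g_{11},g_{22})$ shows that $l$ is self-reciprocal, which is condition (II). Having established (I)--(IV), Theorem~\ref{slcdmain} gives that $C$ is symplectic LCD.

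The argument is essentially bookkeeping, so I do not expect a serious obstacle; the care needed is to phrase the gcd computations strictly in terms of the reciprocal polynomial $f^{*}$, which is what enters the definitions of $g$, $r_{11}$, and $r_{22}$, rather than the transpose polynomial $\bar f$, and to treat all gcd and lcm equalities up to scalar factors of $F$. I would regard the short gcd manipulation showing $r_{11}=r_{22}=1$ as the crux, since everything else then follows at once, with condition (IV) degenerating to $\gcd(1,\cdot)=1$.
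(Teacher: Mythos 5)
Your proof is correct and follows exactly the route the paper intends: the paper derives the corollary by observing that the hypotheses force $r_{11}=r_{22}=1$ in Theorem~\ref{slcdmain}, and your gcd computation ($\gcd(g_{11},g_{11}^{*})\sim g\,r_{11}$, $\gcd(g_{22},g_{22}^{*})\sim g\,r_{22}$) together with the self-reciprocality of $g$, $g_{11}g_{22}$ and $x^m-1$ is precisely the verification of conditions (I)--(IV) that the paper leaves implicit.
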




We now describe some examples of binary quasi-cyclic symplectic LCD codes using Theorems \ref{slcdmain}, \ref{slcd1} and Corollary~\ref{cor-sympl}.  
In \cite{calderbank1998}, Calderbank  et al. proved that the binary symplectic inner product and the quaternary trace-Hermitian inner product are equivalent. Therefore, a binary symplectic LCD code with parameters $[2n, k, d_s]_{2}$ is also a quaternary additive complementary dual (ACD) code with parameters $(n, k/2, d_s)_{4}$.

We note that it is still a challenging problem to efficiently construct quaternary additive codes that outperform linear codes, see \cite{guan2023b}. In Table \ref{tabadditive}, we describe quatenary ACD codes
which match the performance of Hermitian LCD codes in \cite{araya2024} and ACD codes in \cite{guan2023b}.

\vspace{1 cm}
\begin{table}[h!]   
    \centering
\tiny\begin{tabular}{|c|c|c|c|c|}
   \hline ACD code & $g_{11}(x)$ & $g_{12}(x)$ & $g_{22}(x)$ & Comparison\\ 

\hline $(15,5,8)_4$ & \makecell{$(x + 1)(x^4 + x^3 + x^2$\\$+\,x + 1)$} & \makecell{$x(x + 1)(x^4 + x^3 + x^2 + x + 1)$\\$(x^3 + x + 1)$} & $x^{15} - 1$ & $[15,5,8]_4^H$, see \cite{araya2024}\\
\hline $(15,6,7)_4$ & \makecell{$x^4 +x^3 +x^2 +x+1$} & 
\makecell{
$(x + 1)(x^3 + x^2 + 1)$\\
$(x^4 + x^3 + x^2 + x + 1)$
}
& \makecell{$(x^4 +x^3 +x^2 +x+1)(x^4 +x^3 +1)$\\$(x^4 +x+1)(x^2+x+1)$} & $[15,6,7]_4^H$, see \cite{araya2024}\\    
\hline $(15,9,5)_4$ & $1$ & $(x + 1)^2 (x^4 + x + 1)$ & \makecell{$(x^4 + x + 1)(x^4 + x^3 + 1)$\\$(x^4 + x^3 + x^2 + x + 1)$} & $[15,9,5]_4^H$, see \cite{araya2024} \\

\hline $(19,9,8)_4$ & $x + 1$ &
\makecell{
$(x + 1)^3(x^4 + x^3 + x^2 + x + 1)$\\
$(x^{11} + x^6 + x^5 + x^4 + 1)$
}&
$x^{19} - 1$ & $[19,9,8]_4^H$, see \cite{araya2024} \\

\hline $(19,10,7)_4$ & $1$ & \makecell{$x^{11} + x^8 + x^7 + x^6$\\$+\,x^4 + x^3 + x^2 + x$} & \makecell{$x^{18} + x^{17} + x^{16} + x^{15} + x^{14} + x^{13}$\\$+\,x^{12} + x^{11} + x^{10} + x^9 + x^8 + x^7$\\$+\,x^6 + x^5 + x^4 + x^3 + x^2 + x + 1$} &  $[19,10,7]_4^H$, see \cite{araya2024} \\

\hline $(25,13,8)_4$ & $1$ & \makecell{$x^{23} + x^{12} + x^{11} + x^{10}$\\$+\,x^9 + x^5 + x$} & \makecell{$(x^4 + x^3 + x^2 + x + 1)$\\$(x^{20} + x^{15} + x^{10} + x^5 + 1)$} & \makecell{$[25,13,8]_4^H$,   see \cite{araya2024}, \cite{guan2023b}}\\

\hline $(27,13,9)_4$ & $x^2 + x + 1$ & \makecell{
$x^2(x+1)^3(x^2 + x + 1)^3$\\$(x^7   + x + 1)$ 
} & 
\makecell{$(x^2 + x + 1)$\\$(x^6 + x^3 + 1)(x^{18} + x^9 + 1)$} & $[27,13,9]_4^H$, see \cite{araya2024} \\


   \hline
    \end{tabular}

    \caption{Additive   complementary dual codes from Theorems \ref{slcdmain}, \ref{slcd1} and Corollary~\ref{cor-sympl}.}
    \label{tabadditive}
\end{table}

\section{Hermitian LCD quasi-cyclic codes of index 2}
In this section, we consider the finite field $\mathbb{F}_{q^2}$, where $q$ is a prime power. Let $R=\mathbb{F}_{q^2}[x]/\langle x^m-1\rangle$. To characterize QC Hermitian LCD codes, we decompose the ring $R$ (subsequently a QC code $C$) slightly differently from the Euclidean case by factoring $(x^m-1)$ into self-conjugate-reciprocal polynomials in $\mathbb{F}_{q^2}[x]$. For details, see \cite{ezerman2025,lv2019,sangwisut2016}.

Recall that the  \textit{conjugate} of a polynomial $f(x)=f_0+f_1x+\dots+f_{k}x^{k}\in \mathbb{F}_{q^2}[x]$ is defined as 

$$f^{[q]}(x)=f_0^q+f_1^qx+\dots+f_{k}^qx^{k}$$ and  \textit{conjugate-reciprocal} is defined as
$$f^{\dagger}(x)={f^*}^{[q]}(x)=x^{\deg f(x)}f^{[q]}(x^{-1}).$$

Note that $(f^{\dagger})^{\dagger}(x)=f(x)$. We say a polynomial is  \textit{self-conjugate-reciprocal} if $f^{\dagger}(x)=\alpha f(x)$ for some $\alpha\in  \mathbb{F}_{q^2}$. Let $f(x) = a_0+a_1x+a_2x+ \dots + a_{m} x^{m}\in \mathbb{F}_{q^2}[x]$.
The   \textit{conjugate transpose polynomial} of $f(x)$ is the polynomial
\[
\hat{f} (x) =x^{m} f^{[q]}(x^{-1}) = a^q_m+a^q_{m-1}x+a^q_{m-2}x+ \dots + a^q_0 x^m \in \mathbb{F}_{q^2}[x].
\]
Then $\hat{f}(x)=x^{m-\deg f(x)}f^{\dagger}(x)$. 
  
Assume that $\gcd(q,m)=1$. We factor $(x^m-1)$ into distinct irreducible polynomials in $\mathbb{F}_{q^2}$ as follows 

$$
x^m-1= \delta \prod_{i=1}^sf_i(x) \prod_{j=1}^p h_j(x)h^{\dagger}_j(x),
$$
where $\delta$ is nonzero in $\mathbb{F}_{q^2}$, $f_i(x)$ is self-conjugate-reciprocal for all $1\le i \le s$, and  $h_j(x), h_j^{\dagger}(x)$ are conjugate-reciprocal  pairs for all $1 \le j \le p$.

For each $i$ and $j$, let $F_i=\mathbb{F}_{q^2}[x]/ (f_i(x) )$, 
$H_j'= \mathbb{F}_{q^2}[x]/ (h_j(x) )$,   and
$H_j''= \mathbb{F}_{q^2}[x]/ (h_j^{\dagger}(x) )$. 
Let $\xi$ be a primitive $m^{\text{th}}$ root of unity.
Let $\xi^{u_i}$ and  $\xi^{v_j}$ be roots of $f_i(x)$ and $h_j(x)$, respectively.
Then $h_j^{\dagger}(\xi^{-qv_j})=0$,
$F_i \cong \mathbb{F}_{q^2}(\xi^{u_i})$,  $H_j' \cong \mathbb{F}_{q^2}(\xi^{v_j})$,  and 
$H_j'' \cong \mathbb{F}_{q^2}(\xi^{-qv_j}) = \mathbb{F}_{q^2}(\xi^{v_j})$. 

The map $\hat\ : f(x) \mapsto \hat{f}(x)$ can be naturally extended to the following isomorphisms:   

\begin{equation}
	\begin{split}
		\hat\  :\  \mathbb{F}_{q^2}[x]/ (f_i(x) )&\to \mathbb{F}_{q^2}[x]/ (f_i (x)) ,\\
		\hat\  :\  \mathbb{F}_{q^2}[x]/ (h_j(x) )&\to \mathbb{F}_{q^2}[x]/ (h_j^{\dagger}(x) ) .\\
	\end{split}
\end{equation} 
(Here $\hat\  : a(x) + (h_j(x) )\mapsto \hat{a}(x) + (h_j^{\dagger}(x) )$.)  
Therefore, the map $\hat\ $ is an isomorphism from $H_j'=\mathbb{F}_{q^2}[x]/ ( h_j(x) )$ to 
$H_j''=\mathbb{F}_{q^2}[x]/ ( h_j^\dagger(x) )$. 

Define isomorphisms $\mu_j : H'_j \to \mathbb{F}_{q^2}(\xi^{v_j})$ by $\mu_j\big(a(x) + (h_j)\big) =a(\xi^{v_j})$ and 
$\nu_j : H''_j \to \mathbb{F}_{q^2}(\xi^{v_j})$ by $\nu_j\big(a(x) + (h_j^{\dagger})\big) =\hat{a}(\xi^{v_j})$. 
Then the following diagram is commutative:
\begin{center}
\begin{tikzcd}
        H'_j \arrow[r, "\mu_j"] \arrow[d, "\widehat\ "']
        & \mathbb{F}_{q^2}(\xi^{v_j}) \\
        H''_j \arrow[ru, "\nu_j",rightarrow] 
    \end{tikzcd}
\end{center}
Therefore, isomorphisms $\mu_j$ and $\nu_j$ allow us to identify $H'_j$ and $H''_j$ with the field $\mathbb{F}_{q^2}(\xi^{v_j})$.

By the Chinese Remainder Theorem (CRT), $R$ can be decomposed as
$$
R \cong \left(\bigoplus_{i=1}^{s} F_i \right) 
\oplus
\left(\bigoplus_{j=1}^{p} \left( H_j'
\oplus  H_j'' \right) \right).
$$ 
In this setup, the isomorphism between $R$ and its CRT decomposition is given by 
\[
a(x) \mapsto \left(a(\xi^{u_1}), \dots, a(\xi^{u_s}),a(\xi^{v_1}),\hat{a}(\xi^{v_1}), \dots, a(\xi^{v_p}),\hat{a}(\xi^{v_p})\right)). 
\]

This isomorphism extends naturally to  $R^\ell$, which implies that
$$
R^\ell \cong \left(\bigoplus_{i=1}^{s} F_i^\ell \right) 
\oplus
\left(\bigoplus_{j=1}^{p} \left( (H_j')^\ell
\oplus  (H_j'')^\ell  \right) \right).
$$ 
Then, a QC code $C$ of index $\ell$ can be  decomposed as 
\begin{equation} \label{crtCH}
	C \cong \left(\bigoplus_{i=1}^{s} C_i \right) 
	\oplus
	\left(\bigoplus_{j=1}^{p} \left(  C_j' 
	\oplus   C_j''   \right) \right), 
\end{equation} 
where each component codes $C_i$, $C_j'$ and $C_j''$ are constituent linear codes of length $\ell$ over the fields $F_i$,  $H_j'$, and $H_j''$, respectively.

In this setup, the constituents are described as: if $ C$ is an $r$-generator QC code with generators
\[
\{ (a_{1,1}(x),\dots ,a_{1,\ell}(x)) ,\dots, (a_{r,1}(x),\dots,a_{r,\ell} (x) ) \} \subset R^\ell,
\]
then 
\begin{align*}
	C_i &= \text{Span}_{F_i} \{ (a_{k,1}(\xi^{u_i}),\dots ,a_{k,\ell}(\xi^{u_i})) : 1 \le k \le r \}, \text{ for } 1 \le i \le s, \\
	C_j' &= \text{Span}_{H_j'} \{ (a_{k,1}(\xi^{v_j}),\dots ,a_{k,\ell}(\xi^{v_j})) : 1 \le k \le r \}, \text{ for } 1 \le j \le p, \\
	C_j'' &= \text{Span}_{H_j''} \{ (\hat{a}_{k,1}(\xi^{v_j}),\dots ,\hat{a}_{k,\ell}(\xi^{v_j})) : 1 \le k \le r \}, \text{ for } 1 \le j \le p.
\end{align*}

\begin{remark} 
The constituent code $C_j'$ over $H_j'$ is obtained by evaluating $a(x)$ at $\xi^{v_j}$, while the constituent  code $C_j''$ over $H_j''$ is obtained by evaluating $\hat{a}(x)$ at $\xi^{v_j}$. 
In \cite{ezerman2025}, the authors utilized a slightly different isomorphism. 
They defined the constituent code $C_j'$ over $H_j'$  by evaluating $a(x)$ at $\xi^{v_j}$ and the constituent  code $C_j''$ over $H_j''$  by evaluating $a(x)$ at $\xi^{-qv_j}$, which means 
$a(\xi^{-qv_j}) = (\hat{a}(\xi^{v_j} ))^q$ 
(see \cite[Eq. (IV.10)]{ezerman2025} and remark after Eq. (IV.14)).
\end{remark}
In this set up, we have the following characterization of the Hermitian dual (see \cite{ezerman2025,lv2019}).

\begin{theorem}\label{crtCdualh}
Let $C$ be a QC code with its CRT decomposition given in \eqref{crtCH}, the Hermitian dual of $C$ is given by
\begin{equation} 
	C^{\perp_h} = \left(\bigoplus_{i=1}^{s} C_i^{\perp_H} \right) 
	\oplus
	\left(\bigoplus_{j=1}^{p} \left(  C_j''^{\perp_e}
	\oplus   C_j'^{\perp_e}   \right) \right),
\end{equation} 
where, $\perp_e$ is the Euclidean dual on $(H_j')^\ell \cong (H_j'')^\ell$ for $1\leq j\leq p$ and  ${\perp_H}$ denotes the dual on $F_i^\ell$ (for each $1\le i \le s$) with respect to the following inner product 
\begin{equation}
	\langle \textbf{c},\textbf{d} \rangle_H= \sum_{k=1}^\ell c_k(\xi^{u_i})\hat{d}_k(\xi^{u_i}),
\end{equation}
for all
$
\textbf{c} = (c_1(\xi^{u_i}), \dots,  c_\ell(\xi^{u_i})), \textbf{d} = (d_1(\xi^{u_i}), \dots,  d_\ell(\xi^{u_i})) \in F_i^\ell.
$ 	
\end{theorem}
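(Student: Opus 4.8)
The statement is the Hermitian analogue of the Euclidean decomposition in \eqref{crtCdual}, and the plan is to mimic the classical argument (as in \cite{ezerman2025,lv2019}) by passing from the Hermitian inner product on $\mathbb{F}_{q^2}^{m\ell}$ to a polynomial pairing on $R^\ell$ and then splitting that pairing along the CRT decomposition \eqref{crtCH}.

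First I would record a polynomial description of $C^{\perp_h}$. For $\mathbf{a}(x)=(a_1(x),\dots,a_\ell(x))$ and $\mathbf{b}(x)=(b_1(x),\dots,b_\ell(x))$ in $R^\ell$, set
$$\langle \mathbf{a}(x),\mathbf{b}(x)\rangle \;=\; \sum_{k=1}^{\ell} a_k(x)\,\hat b_k(x)\ \in\ R .$$
Since $x^m=1$ in $R$ we have $\hat b_k(x)\equiv b_k^{[q]}(x^{-1})$, and expanding the product shows (up to the usual reindexing) that the $m$ coefficients of $\langle \mathbf{a}(x),\mathbf{b}(x)\rangle$ are exactly the values $\langle T^{t\ell}\mathbf{a},\mathbf{b}\rangle_h$, $0\le t\le m-1$, where $\langle\mathbf{x},\mathbf{y}\rangle_h=\sum_i x_iy_i^q$ is the Hermitian inner product on $\mathbb{F}_{q^2}^{m\ell}$ and we use that $\varphi$ intertwines $T^\ell$ with multiplication by $x$. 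Because $C$ is invariant under $T^\ell$ and $T^{m\ell}=\mathrm{id}$, a vector $\mathbf{b}$ lies in $C^{\perp_h}$ if and only if $\langle T^{t\ell}\mathbf{a},\mathbf{b}\rangle_h=0$ for every $\mathbf{a}\in C$ and every $t$, i.e. if and only if $\langle \mathbf{a}(x),\mathbf{b}(x)\rangle=0$ in $R$ for all $\mathbf{a}(x)\in\varphi(C)$.

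Next I would transport this condition through the CRT isomorphism $R\cong\bigoplus_i F_i\oplus\bigoplus_j(H_j'\oplus H_j'')$. The conjugate transpose $a\mapsto\hat a$ is an $\mathbb{F}_{q^2}$-semilinear ring automorphism of $R$ with $\hat{\hat a}=a$; since each $f_i$ is self-conjugate-reciprocal while $h_j,h_j^{\dagger}$ form a conjugate-reciprocal pair, it restricts to an automorphism of each $F_i$ and \emph{interchanges} the factors $H_j'$ and $H_j''$. Writing $\mathbf{a}_i,\mathbf{b}_i\in F_i^\ell$, $\mathbf{a}_j',\mathbf{b}_j'\in(H_j')^\ell$, $\mathbf{a}_j'',\mathbf{b}_j''\in(H_j'')^\ell$ for the constituent components (under the identifications $H_j'\cong H_j''\cong\mathbb{F}_{q^2}(\xi^{v_j})$), one then checks that the $F_i$-component of $\langle\mathbf{a}(x),\mathbf{b}(x)\rangle$ equals $\sum_k a_k(\xi^{u_i})\hat b_k(\xi^{u_i})=\langle\mathbf{a}_i,\mathbf{b}_i\rangle_H$, its $H_j'$-component equals $\sum_k a_k(\xi^{v_j})\hat b_k(\xi^{v_j})=\langle\mathbf{a}_j',\mathbf{b}_j''\rangle_e$, and its $H_j''$-component equals $\sum_k \hat a_k(\xi^{v_j})b_k(\xi^{v_j})=\langle\mathbf{a}_j'',\mathbf{b}_j'\rangle_e$.

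Finally, $\langle\mathbf{a}(x),\mathbf{b}(x)\rangle=0$ in $R$ precisely when each CRT component vanishes, and the image of $C$ in $F_i^\ell$ (resp. $(H_j')^\ell$, $(H_j'')^\ell$) is exactly the constituent $C_i$ (resp. $C_j'$, $C_j''$); so the criterion of the first step says that $\mathbf{b}\in C^{\perp_h}$ if and only if $\mathbf{b}_i\in C_i^{\perp_H}$ for all $i$, $\mathbf{b}_j''\in (C_j')^{\perp_e}$ for all $j$, and $\mathbf{b}_j'\in (C_j'')^{\perp_e}$ for all $j$. Reading off the CRT slots in the order $F_1,\dots,F_s,H_1',H_1'',\dots,H_p',H_p''$ yields exactly the displayed formula, and since every step is an equivalence no separate dimension count is needed. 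I expect the main technical point to be the bookkeeping in the first two steps: checking the normalization so that the coefficients of $\langle\mathbf{a}(x),\mathbf{b}(x)\rangle$ really reproduce all of the Hermitian inner products $\langle T^{t\ell}\mathbf{a},\mathbf{b}\rangle_h$, and carefully tracking the Frobenius $\alpha\mapsto\alpha^q$ together with the swap of $H_j'$ and $H_j''$ so that the \emph{Euclidean} dual — not the Hermitian one — appears on the $H_j$-constituents.
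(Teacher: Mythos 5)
Your argument is correct: the identification of $C^{\perp_h}$ with the vanishing of the pairing $\sum_k a_k(x)\hat b_k(x)$ in $R$, followed by tracking how $a\mapsto\hat a$ fixes each $F_i$ and swaps $H_j'$ with $H_j''$ under CRT, is exactly the standard computation, and the component bookkeeping (Euclidean duals appearing crosswise on the $H_j$-slots, $\perp_H$ on the $F_i$-slots) comes out right. The paper itself offers no proof of this theorem, citing \cite{ezerman2025,lv2019} instead, and your blind proof is essentially the argument of those references (the Hermitian analogue of the Euclidean decomposition \eqref{crtCdual}), so there is nothing to add.
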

We have the following characterization of Hermitian LCD QC codes in terms of constituents (see \cite[Eq. (IV.16)]{ezerman2025}).

\begin{theorem} \label{qclcdH} Let $C$ be a $q$-ary QC code of length $m\ell$ and index $\ell$ with a CRT decomposition as in \eqref{crtCH}.
	Then $C$ is Hermitian LCD if and only if 
	$C_i \cap C_i^{\perp_H} = \{\mathbf{0}\}$ for all $1 \le i \le s$, and $C_j' \cap C_j''^{\perp_e} = \{\mathbf{0}\}$, $C_j'' \cap C_j'^{\perp_e} = \{\mathbf{0}\}$ 
for all $1\le j \le p$. 
\end{theorem}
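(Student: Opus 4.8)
The plan is to deduce Theorem~\ref{qclcdH} directly from the CRT decomposition of $C$ in \eqref{crtCH} together with the description of the Hermitian dual in Theorem~\ref{crtCdualh}, in exact parallel with the Euclidean statement (Theorem~\ref{qclcd}). The starting point is that, by definition, $C$ is Hermitian LCD if and only if $C \cap C^{\perp_h} = \{\mathbf{0}\}$, so it suffices to compute this intersection inside the fixed ambient space $\bigoplus_{i} F_i^\ell \oplus \bigoplus_{j}\bigl((H_j')^\ell \oplus (H_j'')^\ell\bigr)$ obtained from $R^\ell$ via the CRT isomorphism, which is a vector-space isomorphism and hence preserves both intersections and the direct-sum structure.

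First I would record the two decompositions explicitly: under the CRT isomorphism, $C$ is carried to $\bigl(\bigoplus_i C_i\bigr)\oplus\bigl(\bigoplus_j(C_j'\oplus C_j'')\bigr)$ by \eqref{crtCH}, while $C^{\perp_h}$ is carried to $\bigl(\bigoplus_i C_i^{\perp_H}\bigr)\oplus\bigl(\bigoplus_j(C_j''^{\perp_e}\oplus C_j'^{\perp_e})\bigr)$ by Theorem~\ref{crtCdualh}. The essential feature is that both images are subspaces that split along the \emph{same} internal direct-sum decomposition of the ambient space into the blocks $F_i^\ell$ and the slots $(H_j')^\ell$, $(H_j'')^\ell$.

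Next I would invoke the elementary linear-algebra fact that if an ambient space decomposes as $V = \bigoplus_\alpha V_\alpha$ and $A = \bigoplus_\alpha A_\alpha$, $B = \bigoplus_\alpha B_\alpha$ with $A_\alpha, B_\alpha \subseteq V_\alpha$, then $A \cap B = \bigoplus_\alpha (A_\alpha \cap B_\alpha)$. Applied with $\alpha$ ranging over the blocks $F_i^\ell$ and the slots $(H_j')^\ell$, $(H_j'')^\ell$, this gives
\[
C \cap C^{\perp_h} \;\cong\; \Bigl(\bigoplus_{i=1}^s (C_i \cap C_i^{\perp_H})\Bigr) \oplus \Bigl(\bigoplus_{j=1}^p \bigl((C_j' \cap C_j''^{\perp_e}) \oplus (C_j'' \cap C_j'^{\perp_e})\bigr)\Bigr),
\]
and since a direct sum of subspaces is $\{\mathbf{0}\}$ if and only if every summand is $\{\mathbf{0}\}$, the global condition $C \cap C^{\perp_h} = \{\mathbf{0}\}$ is equivalent to $C_i \cap C_i^{\perp_H} = \{\mathbf{0}\}$ for all $i$ together with $C_j' \cap C_j''^{\perp_e} = \{\mathbf{0}\} = C_j'' \cap C_j'^{\perp_e}$ for all $j$. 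This is precisely the asserted equivalence.

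The part requiring the most care is not the argument above, which is purely formal once the pieces are assembled, but rather ensuring that the decomposition of $C^{\perp_h}$ in Theorem~\ref{crtCdualh} is stated with the constituents in the correct slots: the ``swap'' that places $C_j'^{\perp_e}$ in the $(H_j'')^\ell$-slot and $C_j''^{\perp_e}$ in the $(H_j')^\ell$-slot (reflecting that $\widehat{\ }$ exchanges $H_j'$ and $H_j''$), and the appearance of the twisted inner product $\langle\cdot,\cdot\rangle_H$ rather than the plain Euclidean one on the self-conjugate-reciprocal blocks $F_i^\ell$. Since Theorem~\ref{crtCdualh} has already been established, I would simply cite it; alternatively, one can prove Theorem~\ref{qclcdH} from scratch by repeating the proof of Theorem~\ref{qclcd} verbatim, substituting the conjugate-transpose map $\widehat{\ }$ for the transpose map $\bar{\ }$ throughout.
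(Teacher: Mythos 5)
Your proposal is correct and follows essentially the same route the paper relies on: the statement is exactly the blockwise-intersection consequence of the CRT decomposition \eqref{crtCH} together with the dual decomposition in Theorem~\ref{crtCdualh} (the paper itself does not write out the argument but cites \cite{ezerman2025}, where it is obtained in precisely this way). Your care about the slot swap of $C_j'^{\perp_e}$ and $C_j''^{\perp_e}$ and the twisted inner product $\langle\cdot,\cdot\rangle_H$ on the $F_i^\ell$ blocks is exactly the point that makes the formal intersection argument go through.
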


Let $C$ be a QC code of length $2m$ and index $2$ over  $\mathbb{F}_{q^2}$, generated by $(g_{11}(x),g_{12}(x))$ and $(0,g_{22}(x))$ satisfying Conditions $(\ast)$. Similar to Section 3, in the above setting, each constituent of $C$ is generated by the rows of a $2 \times 2$ matrix over its field of definition. Explicitly, $C_i, C_j'$ and $C_j''$ are generated by the rows of the matrices
$$
G_i=  \begin{bmatrix}
	g_{11}(\xi^{u_i}) & g_{12}(\xi^{u_i}) \\
	0 & g_{22}(\xi^{u_i}) 
\end{bmatrix} ,
G_j'=   \begin{bmatrix}
	g_{11}(\xi^{v_j}) & g_{12}(\xi^{v_j}) \\
	0 & g_{22}(\xi^{v_j}) 
\end{bmatrix}  ,
G_j''=   \begin{bmatrix}
	\hat{g}_{11}(\xi^{v_j}) & \hat{g}_{12}(\xi^{v_j}) \\
	0 & \hat{g}_{22}(\xi^{v_j}) 
\end{bmatrix}  ,
$$
respectively.

Similar to  the Euclidean case, we prepare the background for the Hermitian case. Let $g(x)= \gcd(g_{11}(x),g_{22}(x))$. Since we assume $\gcd(q,m)=1$, the condition 
$$g_{11}(x)g_{22}(x) \mid (x^m-1)g_{12}(x)$$ 
in Theorem \ref{qc2} is equivalent to the condition $g(x) \mid g_{12}(x)$, see Remark \ref{remarkgcd}.

Let  $l(x)= (x^m-1)/\text{lcm}(g_{11}(x),g_{22}(x))$. Let $g_{11}(x)=g(x)   g_{11}'(x), g_{22}=g(x) g_{22}'(x)$, and
\begin{align*}
	g_{11}'(x)=r_{11}(x) t_{11}(x),\\
	g_{22}'(x)=r_{22}(x) t_{22}(x),
\end{align*}
where $r_{11}(x)=\gcd(g'_{11}(x), g'^{\dagger}_{11}(x))$, and $r_{22}(x)=\gcd(g'_{22}(x), g'^{\dagger}_{22}(x))$. Then $r_{11}(x)$ and $r_{22}(x)$ are self-conjugate-reciprocal.

Now, we provide a polynomial characterization of QC Hermitian LCD codes. The proof is similar to that of Theorem \ref{main} and Theorem \ref{main2} (by replacing $\bar{a}(x)$ with $\hat{a}(x)$ and $*$ with $\dagger$). Therefore we omit the proof.  

\begin{theorem} \label{qch} Let $C$ be a quasi-cyclic code   of index $2$. Let $(g_{11}(x), g_{12}(x))$ and $(0, g_{22}(x))$ be the generators of $C$ satisfying Conditions $(\ast)$.   
	Then $C$ is Hermitian LCD if and only if all of the following conditions are true:
	\begin{enumerate}[(I)]
		\item  $g$ is self-conjugate-reciprocal.
		\item  $l$ is self-conjugate-reciprocal. 
		\item  $\gcd(t_{22}(x),g_{12}(x))=1$. 
		\item $\gcd(r_{22}(x), g_{11}(x)\hat{g}_{11}(x)+g_{12}(x) \hat{g}_{12}(x))=1$.
	\end{enumerate}
\end{theorem}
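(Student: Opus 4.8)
The plan is to transcribe the proof of Theorem~\ref{main} essentially verbatim, replacing the reciprocal operation $*$ by the conjugate-reciprocal operation $\dagger$, the transpose polynomial $\bar{\ }$ by the conjugate transpose polynomial $\hat{\ }$, the factorization of $x^m-1$ over $F$ into self-reciprocal factors and reciprocal pairs by its factorization over $\mathbb{F}_{q^2}$ into self-conjugate-reciprocal factors $f_i$ and conjugate-reciprocal pairs $(h_j,h_j^{\dagger})$, and the constituent characterization of Euclidean LCD codes (Theorem~\ref{qclcd}) by its Hermitian analogue (Theorem~\ref{qclcdH}). First I would record the formal facts that made the Euclidean argument run and check that they persist: $(f^{\dagger})^{\dagger}=f$; $x^m-1$ is self-conjugate-reciprocal because $(x^m-1)^{\dagger}=-(x^m-1)$; $\hat{\ }$ is a ring isomorphism $H_j'\to H_j''$ and a ring automorphism of each $F_i$; $r_{11}$ and $r_{22}$ are self-conjugate-reciprocal by construction; and, when $g$ and $l$ are self-conjugate-reciprocal, comparing $x^m-1=g\cdot l\cdot r_{11}\cdot t_{11}\cdot r_{22}\cdot t_{22}$ with the image of this identity under $\dagger$, namely $x^m-1=\alpha\cdot g\cdot l\cdot r_{11}\cdot t_{11}^{\dagger}\cdot r_{22}\cdot t_{22}^{\dagger}$, forces $t_{11}^{\dagger}$ and $t_{22}$ to be associates, i.e. $h_j\mid t_{22}\iff h_j^{\dagger}\mid t_{11}$.

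For the necessity direction I would prove the four analogues of Lemmata~\ref{nI}--\ref{nIV}, all by the same dimension bookkeeping on the $2\times2$ matrices $G_i,G_j',G_j''$. If $g$ is not self-conjugate-reciprocal, pick $h_j\mid g$ with $h_j^{\dagger}\nmid g$; then $h_j\mid g\mid g_{12}$ makes $C_j'=\{\mathbf{0}\}$, so ${C_j'}^{\perp_e}=(H_j'')^2$, while $h_j^{\dagger}\nmid g$ forces $C_j''\neq\{\mathbf{0}\}$, whence $C_j''\cap{C_j'}^{\perp_e}\neq\{\mathbf{0}\}$ and $C$ is not Hermitian LCD by Theorem~\ref{qclcdH}; this yields (I), and (II) follows verbatim with $l$ in place of $g$. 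Assuming (I) and (II): if $\gcd(t_{22},g_{12})\neq1$ pick $h_j\mid\gcd(t_{22},g_{12})$, deduce $h_j^{\dagger}\mid t_{11}$ from the rewritten factorization, and compute $C_j'=\langle(1,0)\rangle$, $C_j''=\langle(0,1)\rangle$, ${C_j''}^{\perp_e}=\langle(1,0)\rangle=C_j'$, contradicting Theorem~\ref{qclcdH}, so (III) holds; if $\gcd\big(r_{22},\,g_{11}\hat{g}_{11}+g_{12}\hat{g}_{12}\big)\neq1$, take an irreducible common factor $a$ (so $a\mid g_{22}$, $a\nmid g_{11}$, $a\mid g_{22}^{\dagger}$, $a\nmid g_{11}^{\dagger}$), and in the self-conjugate-reciprocal case $a=f_i$ get $C_i=\langle(g_{11}(\xi^{u_i}),g_{12}(\xi^{u_i}))\rangle$ with $C_i^{\perp_H}=\langle(-\hat{g}_{12}(\xi^{u_i}),\hat{g}_{11}(\xi^{u_i}))\rangle$, where $a\mid g_{11}\hat{g}_{11}+g_{12}\hat{g}_{12}$ gives $C_i=C_i^{\perp_H}$; in the case $a=h_j$ the same computation gives $C_j''=C_j'^{\perp_e}$. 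Either way $C$ is not Hermitian LCD, so (IV) holds.

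For sufficiency I would assume (I)--(IV) and verify the constituent conditions of Theorem~\ref{qclcdH} directly, mirroring Lemmata~\ref{s1} and~\ref{s2}. For each $i$ the four cases on whether $f_i$ divides $g_{11}$ and/or $g_{22}$ all give $C_i\cap C_i^{\perp_H}=\{\mathbf{0}\}$; the only non-immediate one is $f_i\nmid g_{11}$, $f_i\mid g_{22}$ (so $f_i\mid r_{22}$), where (IV) forces $g_{11}(\xi^{u_i})\hat{g}_{11}(\xi^{u_i})+g_{12}(\xi^{u_i})\hat{g}_{12}(\xi^{u_i})\neq0$ and hence $C_i\neq C_i^{\perp_H}$. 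For each $j$ the six cases according to which of $g,l,r_{11},t_{11},r_{22},t_{22}$ the factor $h_j$ divides each give $C_j'\cap{C_j''}^{\perp_e}=\{\mathbf{0}\}=C_j''\cap{C_j'}^{\perp_e}$; the $t_{11}$ and $t_{22}$ cases invoke (III) (to get $h_j^{\dagger}\nmid g_{12}$, resp. $h_j\nmid g_{12}$) and the $r_{22}$ case invokes (IV), exactly as in Lemma~\ref{s2}.

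The step I expect to be the only genuine obstacle — as opposed to notation-chasing — is getting the $F_i$-constituents right: one must confirm that $\langle\cdot,\cdot\rangle_H$ is a non-degenerate sesquilinear form on $F_i^2$, so that the orthogonal complement of the one-generator constituent $\langle(g_{11}(\xi^{u_i}),g_{12}(\xi^{u_i}))\rangle$ really is $\langle(-\hat{g}_{12}(\xi^{u_i}),\hat{g}_{11}(\xi^{u_i}))\rangle$, and that on the exceptional degree-one self-conjugate-reciprocal factors $x-a$ (those with $a^{q+1}=1$, for which $F_i=\mathbb{F}_{q^2}$, such as $x-1$) one has $\hat{b}(\xi^{u_i})=b(\xi^{u_i})^q$, so that $\langle\cdot,\cdot\rangle_H$ is the genuine $\mathbb{F}_{q^2}/\mathbb{F}_q$ Hermitian inner product and the same one-dimensional argument still applies — the Hermitian analogue of Remark~\ref{remarkH}. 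Both points are immediate from the definition of $\hat{\ }$ and from Theorems~\ref{crtCdualh} and~\ref{qclcdH}, after which nothing further is needed.
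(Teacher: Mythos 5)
Your proposal is correct and is exactly the route the paper intends: the paper omits the proof of Theorem~\ref{qch}, stating that it is obtained from Theorems~\ref{main} and~\ref{main2} by replacing $\bar{a}(x)$ with $\hat{a}(x)$ and $*$ with $\dagger$, which is precisely the transcription you carry out via the analogues of Lemmata~\ref{nI}--\ref{nIV}, \ref{s1} and~\ref{s2} together with Theorem~\ref{qclcdH}. Your extra checks (that $\hat{\hat{a}}=a$, that $t_{11}^{\dagger}\sim t_{22}$ under (I)--(II), and that $\langle\cdot,\cdot\rangle_H$ is nondegenerate on $F_i^2$, including the degree-one self-conjugate-reciprocal factors where $\hat{b}(\xi^{u_i})=b(\xi^{u_i})^q$) are exactly the points implicitly relied upon, so nothing is missing.
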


\begin{theorem} \label{Hlcd1}  Let $C$ be a quasi-cyclic code generated by one element $(g_{11}(x),g_{12}(x))$, where $g_{11}(x)\mid x^m-1$. Let $g(x)=\gcd(g_{11}(x),g_{12}(x))$. Then $C$ is Hermitian LCD if and only if 
	\begin{equation*} 
		\gcd \left(\dfrac{x^m-1}{g(x)},  g_{11}(x)\hat{g}_{11}(x)+g_{12}(x) \hat{g}_{12}(x)\right)=1.
	\end{equation*}
\end{theorem}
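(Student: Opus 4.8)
The plan is to imitate, almost line for line, the proof of Theorem~\ref{main2}, replacing the transpose polynomial $\bar a(x)$ by the conjugate transpose $\hat a(x)$ and the reciprocal operation $f \mapsto f^*$ by the conjugate-reciprocal operation $f \mapsto f^\dagger$, and to invoke the Hermitian characterisation Theorem~\ref{qch} in place of Theorem~\ref{main}. First I would apply Lemma~\ref{lemma1gen}, which concerns only the module structure over $R$ and is insensitive to the choice of inner product, to present $C$ by the two generators $(g_{11}(x),\tilde g_{12}(x))$ and $(0,g_{22}(x))$ with $g_{22}(x)=(x^m-1)/g_{11}'(x)$ and $\gcd(g_{11}(x),g_{22}(x))=g(x)$; exactly as at the start of Section~6 (mirroring Section~4) one gets $\text{lcm}(g_{11},g_{22})=x^m-1$, so $l(x)=1$ and condition~(II) of Theorem~\ref{qch} is vacuously satisfied. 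Hence it suffices to prove that (I), (III), (IV) of Theorem~\ref{qch} together are equivalent to the single condition $\gcd\bigl((x^m-1)/g,\ g_{11}\hat g_{11}+g_{12}\hat g_{12}\bigr)=1$.

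The second step is to record the Hermitian analogues of Lemmata~\ref{1a} and~\ref{1b}: if $g$ is self-conjugate-reciprocal, then $\gcd(t_{22},g_{12})=1$ if and only if $\gcd(t_{22},\,g_{11}\hat g_{11}+g_{12}\hat g_{12})=1$, and in that case also $\gcd(g_{11}',\,g_{11}\hat g_{11}+g_{12}\hat g_{12})=1$. Their proofs are verbatim copies of the proofs of Lemmata~\ref{1a} and~\ref{1b}, resting on three facts that survive the substitution of $\dagger$ for $*$: (a) $(x^m-1)^\dagger=-(x^m-1)$, because the coefficients of $x^m-1$ lie in the prime field and are therefore fixed by $a\mapsto a^q$; (b) all of $g,g_{11}',g_{22}',r_{11},t_{11},r_{22},t_{22}$ are squarefree, being divisors of $x^m-1$ with $\gcd(q,m)=1$; and (c) a self-conjugate-reciprocal irreducible $f_i$ cannot divide $t_{22}$, since any self-conjugate-reciprocal factor of $g_{22}'$ divides $r_{22}=\gcd(g_{22}',g_{22}'^{\dagger})$. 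Consequently, when $g$ is self-conjugate-reciprocal one has $x^m-1=\alpha\,g\cdot r_{11}\cdot t_{11}^{\dagger}\cdot r_{22}\cdot t_{22}^{\dagger}$ for some $\alpha\in\mathbb{F}_{q^2}$, whence $h_j\mid t_{22}\Rightarrow h_j^{\dagger}\mid t_{11}$ and $h_j^{\dagger}\nmid t_{22}\Rightarrow h_j\nmid t_{11}$; these are precisely the divisibility implications driving the bookkeeping in Lemmata~\ref{1a} and~\ref{1b}.

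With these in hand, the assembly is identical to the proof of Theorem~\ref{main2}. For the forward direction, write $N=g_{11}\hat g_{11}+g_{12}\hat g_{12}$; conditions (I) and (III) give, via the Hermitian analogues above, $\gcd(g_{11}',N)=1$ and $\gcd(t_{22},N)=1$, condition (IV) gives $\gcd(r_{22},N)=1$, and since $g_{11}',r_{22},t_{22}$ are pairwise coprime and (I) yields $x^m-1=g\cdot g_{11}'\cdot r_{22}\cdot t_{22}$, we conclude $\gcd((x^m-1)/g,N)=1$. For the converse, if $g$ were not self-conjugate-reciprocal there would be an irreducible $h_j$ with $h_j\mid g$ and $h_j^{\dagger}\nmid g$; then $h_j^{\dagger}\mid g^{\dagger}$, which divides both $\hat g_{11}$ and $\hat g_{12}$ (up to a power of $x$ coprime to $h_j^{\dagger}$), so $h_j^{\dagger}\mid N$, while $h_j^{\dagger}\mid(x^m-1)/g$ by squarefreeness, contradicting the hypothesis; hence (I) holds, $(x^m-1)/g=g_{11}'r_{22}t_{22}$, and the hypothesis forces $\gcd(r_{22},N)=1$, which is (IV), and $\gcd(t_{22},N)=1$, which is equivalent to (III) by the Hermitian analogue of Lemma~\ref{1a}. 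Theorem~\ref{qch} then gives that $C$ is Hermitian LCD, completing the proof.

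The argument contains essentially no new idea; the only place I would be careful is verifying that the self-conjugate-reciprocal factorisation of $x^m-1$ is stable under $\dagger$ — equivalently that $(x^m-1)^\dagger$ is a scalar multiple of $x^m-1$ and that $\gcd(h_j,h_j^{\dagger})=1$ for each conjugate-reciprocal pair — and that the evaluation description of the constituents is consistent with the commutative diagram defining $\mu_j,\nu_j$, so that $\hat a(\xi^{v_j})$ genuinely computes the $C_j''$-coordinate. Once these are checked, every step of the Euclidean one-generator proof transcribes mechanically.
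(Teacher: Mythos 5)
Your proposal is correct and is exactly the route the paper intends: the paper omits the proof of Theorem~\ref{Hlcd1}, stating that it follows from Theorems~\ref{main} and~\ref{main2} by replacing $\bar{a}(x)$ with $\hat{a}(x)$ and $*$ with $\dagger$, which is precisely the transcription you carry out via Theorem~\ref{qch} and the Hermitian analogues of Lemmata~\ref{1a} and~\ref{1b}. The auxiliary facts you verify (that $(x^m-1)^\dagger=-(x^m-1)$, that self-conjugate-reciprocal irreducible factors of $g_{22}'$ divide $r_{22}$, and the coprimality/evaluation bookkeeping) are the right points to check and all hold as you state.
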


In \cite{carlet2018}, Carlet et al. showed that any linear code is equivalent to a Hermitian LCD code for $q > 4$. This motivates us to search for quaternary Hermitian LCD codes with good parameters.  
In Tables \ref{tabhermitian1} and \ref{tabhermitian2}, we provide several examples of  quasi-cyclic Hermitian LCD codes over $\mathbb{F}_4=\mathbb{F}_2[w]$ with $w^2+w+1=0$, using Theorem \ref{qch}. 

In Table \ref{tabhermitian1}, we use the abbreviation \textbf{BKLC} to indicate that a code is the \emph{best known linear code} according to \cite{grassltable}. Similarly, \textbf{BKHLCD} denotes the \emph{best known Hermitian LCD code}, according to \cite{araya2024} and \cite{sok2020}.

In Table~\ref{tabhermitian2}, we present examples of Hermitian LCD codes with minimum distance larger than previously known codes in the literature~\cite{crnkovic2023, grbac} with the same length and dimension. The codes reported in~\cite{crnkovic2023, grbac} were used as benchmarks for ACD codes in~\cite{zhu2024}.

\begin{table}[H]
    \centering
\scriptsize
\begin{tabular}{|c|c|c|c|c|}
    \hline 
    $C$  & $g_{11}(x)$ & $g_{12}(x)$ & $g_{22}(x)$ & Remark \\
    \hline
    $[14,8,5]_4^H$ & $1$ & 
    \makecell{$wx^5 + w^2x^4 + wx^3$\\$+\,x^2 + x + 1$} &
    \makecell{$(x^3 + x + 1)$\\$(x^3 + x^2 + 1)$} &
    $d = d_{\text{BKHLCD}}$, see \cite{araya2024} \\
    \hline

    $[22,11,8]_4^H$ & $x+1$ &
    \makecell{$(x+1)$\\$(w^2x + wx^2+\,x^4 + wx^6)$} &
    \makecell{$(x^5 + wx^4 + x^3 + x^2 + w^2x + 1)$\\$(x^5 + w^2x^4 + x^3 + x^2 + wx + 1)$} &
    $d = d_{\text{BKHLCD}}$, see \cite{araya2024} \\
    \hline

    $[22,12,7]_4^H$ & $1$ &
    \makecell{$w^2x^7 + w^2x^5 + w^2x^4$\\$+\,w^2x^2 + w^2x + 1$} &
    \makecell{$(x^5 + wx^4 + x^3 + x^2 + w^2x + 1)$\\$(x^5 + w^2x^4 + x^3 + x^2 + wx + 1)$} &
    $d = d_{\text{BKHLCD}}$, see \cite{araya2024} \\
    \hline

    $[22,16,4]_4^H$ & $x+1$ &
    \makecell{$x^5 + wx^4$\\$+\,x^2 + x + 1$} &
    \makecell{$x^5 + wx^4 + x^3 + x^2 + w^2x + 1$} &
    $d = d_{\text{BKHLCD}}$, see \cite{araya2024} \\
    \hline

    $[22,17,4]_4^H$ & $1$ &
    \makecell{$w^2x^4 + x^2 + x + w$} &
    \makecell{$x^5 + wx^4 + x^3 + x^2 + w^2x + 1$} &
    $d = d_{\text{BKHLCD}}$, see \cite{araya2024} \\
    \hline

    $[26,13,8]_4^H$ & $x+1$ &
    \makecell{$w^2x^9 + x^8 + w^2x^7$\\$+\,x^6 + x^4 + wx^3$\\$+\,x^2 + wx$} &
    \makecell{$(x^6 + wx^5 + w^2x^3 + wx + 1)$\\$(x^6 + w^2x^5 + wx^3 + w^2x + 1)$} &
    $d = d_{\text{BKHLCD}}$, see \cite{araya2024} \\
    \hline

    $[34,16,11]_4^H$ & 
    \makecell{$(x + 1)(x^4 + wx^3$\\$+\,x^2 + wx + 1)$} &
    \makecell{$(x+1)$\\$(x^9 + w^2x^8 + x^7$\\$+\,wx^2 + w^2x)$} &
    \makecell{$(x+1)$\\$(x^4 + x^3 + wx^2 + x + 1)$\\$(x^4 + x^3 + w^2x^2 + x + 1)$\\$(x^4 + w^2x^3 + x^2 + w^2x + 1)$} &
    $d = d_{\text{BKLC}} - 1$, see \cite{grassltable} \\
    \hline

    $[34,17,9]_4^H$ & 
    \makecell{$(x + 1)(x^4 + w^2x^3$\\$+\,x^2 + w^2x + 1)$} &
    \makecell{$x^{10} + x^8 + w^2x^7$\\$+\,w^2x^6 + x^2 + x$} &
    \makecell{$(x^4 + x^3 + wx^2 + x + 1)$\\$(x^4 + x^3 + w^2x^2 + x + 1)$\\$(x^4 + wx^3 + x^2 + wx + 1)$} &
    $d = d_{\text{BKHLCD}}$, see \cite{sok2020} \\
    \hline

    $[34,24,6]_4^H$ & $x + 1$ &
    \makecell{$(x + 1)(x + w^2)(x^2 + wx + w)$\\$(x^4 + x^3 + x + w^2)$} &
    \makecell{$(x + 1)$\\$(x^4 + wx^3 + x^2 + wx + 1)$\\$(x^4 + w^2x^3 + x^2 + w^2x+1)$} &
    $d = d_{\text{BKLC}}$, see \cite{grassltable} \\
    \hline

    $[34,26,5]_4^H$ & $1$ &
    \makecell{$x^7 + wx^5 + w^2x^4 + x^3$\\$+\,w^2x^2 + w$} &
    \makecell{$(x^4 + wx^3 + x^2 + wx + 1)$\\$(x^4 + w^2x^3 + x^2 + w^2x + 1)$} &
    $d = d_{\text{BKLC}}$, see \cite{grassltable} \\
    \hline

    $[38,29,5]_4^H$ & $1$ &
    \makecell{$w^2x^8 + w^2x^7 + w^2x^6$\\$+\,wx^5 + wx + 1$} &
    \makecell{$x^9 + wx^8 + wx^6 + wx^5$\\$+ w^2x^4 + w^2x^3 + w^2x + 1$} &
    $d = d_{\text{BKLC}} - 1$, see \cite{grassltable} \\
    \hline
\end{tabular}

\caption{Quaternary quasi-cyclic Hermitian LCD codes from Theorem \ref{qch}, length $14~\le~n~\le~38$.}
\label{tabhermitian1}
\end{table}

\begin{table}[H]
    \centering
\scriptsize
\begin{tabular}{|c|c|c|c|c|c|}
    \hline 
    $C$  & $g_{11}(x)$ & $g_{12}(x)$ & $g_{22}(x)$   & Parameters from \cite{crnkovic2023, grbac} \\
    \hline
    $[46,23,8]_4^H$ & $x + 1$ & 
    \makecell{$w^2x^{21} + x^{20} + wx^{15}$\\$+\,x^{14} + x^8 + w^2x^2$} &
    \makecell{$(x^{11} + x^9 + x^7 + x^6 + x^5 + x + 1)$\\$ (x^{11} + x^{10} + x^6 + x^5 + x^4 + x^2 + 1)$} &
    $[46,23,7]_4^H$\\
    \hline
    $[58,29,8]_4^H$ & $x + 1$ &
    \makecell{$x^{27} + w^2x^{19} + w^2x^{15}$\\$+\,x^{14} + x^{12} + wx^{10}$} &
    \makecell{$ x^{28} + x^{27} + x^{26} + x^{25} + x^{24} + x^{23} + x^{22} + x^{21}$\\$+\,x^{20} + x^{19} + x^{18} + x^{17} + x^{16} + x^{15} + x^{14} + x^{13}$\\$+\,x^{12} + x^{11} + x^{10} + x^9 + x^8 + x^7 + x^6 + x^5$\\$+\,x^4 + x^3 + x^2 + x + 1 $} &
    $[58,29,4]_4^H$\\
    \hline
    $[62,31,8]_4^H$ & $x + 1$ &
    \makecell{$x^{22} + x^{18} + x^{14} + wx^{12}$\\$+\,x + 1$} &
    \makecell{$(x^5 + x^2 + 1)(x^5 + x^3 + 1)(x^5 + x^3 + x^2 + x + 1)$\\$ (x^5 + x^4 + x^2 + x + 1)(x^5 + x^4 + x^3 + x + 1)$\\$ (x^5 + x^4 + x^3 + x^2 + 1)$} &
    $[62,31,7]_4^H$\\
    \hline
    $[70,35,8]_4^H$ & $x + 1$ &
    \makecell{$w^2x^{25} + wx^{22} + x^{19} + x^{18}$\\$+\,w^2x^{17} + x^{16}$} &
    \makecell{$(x^3 + x + 1)(x^3 + x^2 + 1)(x^4 + x^3 + x^2 + x + 1)$\\$ (x^{12} + x^{10} + x^9 + x^8 + x^7 + x^4 + x^2 + x + 1)$\\$ (x^{12} + x^{11} + x^{10} + x^8 + x^5 + x^4 + x^3 + x^2 + 1)$} &
    $[70,35,5]_4^H$\\
    \hline
    $[74,37,8]_4^H$ & $x + 1$ &
    \makecell{$x^{29} + w^2x^{27} + w^2x^{24}$\\$+\,x^8 + wx^5 + 1$} &
    \makecell{$ x^{36} + x^{35} + x^{34} + x^{33} + x^{32} + x^{31} + x^{30} + x^{29}$\\$+\,x^{28} + x^{27} + x^{26} + x^{25} + x^{24} + x^{23} + x^{22} + x^{21}$\\$+\,x^{20} + x^{19} + x^{18} + x^{17} + x^{16} + x^{15} + x^{14} + x^{13}$\\$+\,x^{12} + x^{11} + x^{10} + x^9 + x^8 + x^7 + x^6 + x^5$\\$+\,x^4 + x^3 + x^2 + x + 1$} &
    $[74,37,4]_4^H$\\
    \hline
\end{tabular}
\caption{Quaternary quasi-cyclic Hermitian LCD codes from Theorem \ref{qch}, length $46~\le~n~\le~74$.}
\label{tabhermitian2}
\end{table}


\section*{Conclusion}
In this work, we have provided polynomial-based characterizations of quasi-cyclic linear complementary dual (LCD) codes of index $2$ with respect to the Euclidean, Hermitian and symplectic inner products. Using these characterizations, we constructed several binary, ternary, and quaternary LCD codes. Our results extend the existing characterizations obtained for one-generator quasi-cyclic codes. Moreover, the techniques introduced in our characterization can be readily generalized to the broader class of quasi-twisted codes. A promising direction for future research is to obtain polynomial characterization for quasi-cyclic LCD codes of arbitrary index. This extension would require handling a large number of polynomials together, which will be more complicated.

\section*{Acknowledgment}

K. Abdukhalikov was supported by UAEU grants G00004233 and G00004614.
D. Ho was supported by  the Tromsø Research Foundation (project “Pure Mathematics in Norway”) and  UiT Aurora project MASCOT. 
G. K. Verma was supported by UAEU grant G00004614. 
S. Ling was supported by Nanyang Technological University Research Grant No. 04INS000047C230GRT01 and UAEU grant G00004233.

\end{document}